\numberwithin{equation}{section}
\newtheorem{theorem}{Theorem}[section]
\newtheorem{proposition}[theorem]{Proposition}
\newtheorem{corollary}[theorem]{Corollary}
\newtheorem{lemma}[theorem]{Lemma}
\theoremstyle{definition}
\newtheorem{remark}[theorem]{Remark}
\newtheorem{example}[theorem]{Example}
\newtheorem{assumption}[theorem]{Assumption}
\newcommand{\R}{\mathbb{R}}
\newcommand{\N}{\mathbb{N}}
\newcommand{\F}{\mathcal{F}}
\newcommand{\G}{\mathcal{G}}
\newcommand{\vecb}[1]{{\boldsymbol#1}}
\newcommand{\norm}[1]{\left\| #1 \right\|}
\newcommand{\abs}[1]{\left |#1\right |}
\newcommand{\Ind}[1]{1_{#1}}
\renewcommand{\abs}[1]{\bigl|#1\bigr|}
\newcommand{\bbF}{\mathbb{F}}
\newcommand{\bbG}{\mathbb{G}}
\newcommand{\bbM}{\mathcal{M}}
\newcommand{\ind}[1]{1_{\{ #1\}}}
\newcommand{\prm}{\operatorname{pr}^{(m)}}
\begin{document}
\begin{center}
 \large { \Large \bf Corporate Security Prices  in Structural Credit Risk Models \\[0.2cm] with Incomplete Information: Extended Version\footnote{A shorter version of this paper will be published in Mathematical Finance.}}\\[0.8cm]
{\sc R\"udiger Frey\footnote{Corresponding author, Institute of Statistics and Mathematics,  Vienna University of Economics and Business, Welthandelsplatz 1,  A-1020 Vienna. Email: {\tt ruediger.frey@wu.ac.at}.} Lars R\"osler\footnote{Institute of Statistics and Mathematics, Vienna University of Economics and Business, {\tt lars.roesler@web.de}.}, Dan Lu\footnote{{\tt dan.lu@math.uni-leipzig.de} }}
  \\[0.5cm]
{\small \it Institute of Statistics and Mathematics, Vienna University of Economics and Business (WU) \\[0.2cm]
  \today}
\end{center}

\begin{abstract}
The paper studies derivative asset analysis in structural credit risk models where the asset value of the firm is not fully observable.  It is shown that in order to determine the price dynamics of traded  securities one needs to solve a stochastic filtering problem for the asset value. We transform this problem  to a filtering problem for a stopped diffusion process and we apply results from the filtering literature to this problem.  In this way we obtain an  SPDE-characterization  for the filter  density.  Moreover, we  characterize  the  default intensity under incomplete information and we determine the price dynamics  of traded securities. Armed with these results we study derivative asset analysis in our setup: we explain how the model can be applied to the pricing of options on traded assets and  we discuss dynamic hedging and model calibration.  The paper closes with  a  small simulation study.
\end{abstract}

\paragraph{Keywords.} Structural credit risk models, incomplete information,  stochastic filtering, derivative asset analysis for  corporate securities

\section{Introduction}
Structural credit risk models such as the first-passage-time models proposed by  \citeasnoun{bib:black-cox-76} or \citeasnoun{bib:leland-94} are widely used in the analysis of defaultable corporate securities.   In these models, a  firm defaults if a random process $V$ representing  the firm's  asset value  hits some threshold $K$  that is typically linked to value of the firm's liabilities. First-passage-time models offer an intuitive economic interpretation of the default event.
However, in the practical application of these models  a number of difficulties arise:
To begin with, it might be  difficult for investors in secondary markets to assess precisely the value of the firm's assets. Moreover, for tractability reasons $V$ is frequently  modelled as a diffusion process. In that case the default time $\tau$ is a predictable stopping time, which leads to unrealistically low values for short-term credit spreads.
For these reasons \citeasnoun{bib:duffie-lando-01} propose a model where secondary  markets  have only incomplete  information on the asset value  $V$. More precisely they consider the situation  where the market obtains at discrete time points $t_n$ a noisy accounting report of the form $Z_n = \ln V_{t_n} + \varepsilon_n$; moreover, the default history of the firm can be observed. Duffie and Lando show that in this setting the default time $\tau$ admits an intensity  that is proportional  to the derivative of the conditional density  of the asset value at the default threshold $K$. This well-known result provides an interesting link between structural and reduced-form models. Moreover, the result  shows that by introducing incomplete information it is possible to construct structural models where  short-term credit spreads take reasonable values.
The subsequent work of \citeasnoun{bib:frey-schmidt-09b}  discusses the pricing of the firm's equity in structural models with unobservable asset value.    Moreover, it is shown that the valuation of the firm's equity and debt leads to a stochastic  filtering problem: one needs to determine the conditional distribution of the current asset value $V_t$ given the $\sigma$-field $\F_t^{\bbM}$ representing the available information  at time $t$. \citeasnoun{bib:frey-schmidt-09b} consider this problem in the setup of  Duffie and  Lando where new information on the asset value arrives only  at discrete points in time. Working with a Markov-chain approximation approximation for $V$ they derive  a recursive updating rule for the conditional distribution of the approximating Markov chain  via  elementary Bayesian updating; the discrete nature of the information-arrival is crucial for their arguments.

Neither \citeasnoun{bib:duffie-lando-01} nor \citeasnoun{bib:frey-schmidt-09b}  study   the  price dynamics of traded securities under incomplete information.   Hence in these papers  it is not possible  to analyze  the pricing and the hedging of  derivative securities such as options on corporate bonds or on the stock.
The main  goal of the present paper  is therefore to develop  a proper  theory of derivative asset analysis for structural credit risk models under incomplete information.

More precisely, we make the following contributions. First, in order to obtain realistic price dynamics for the traded securities, we  model  the noisy observations of the asset value by a continuous time process of the form $Z_t = \int_0^t a(V_s) ds + W_t$ for some Brownian motion $W$ independent of $V$.
We show that this leads to price processes  with non-zero instantaneous volatility, whereas the discrete information arrival considered by Duffie and Lando or Frey and Schmidt generates asset prices  that evolve deterministically between the news-arrival dates. Moreover, modeling $Z$ as a continuous time processes is in line with the standard literature on stochastic filtering such as  \citeasnoun{bib:bain-crisan-08}.
Second,  in order to derive the price dynamics of traded securities we determine the dynamics of the conditional distribution of $V_t$ given  $\F_t^\bbM$.
This is a challenging  stochastic filtering problem, since under full observation the default time $\tau$  is predictable, so that   standard filtering techniques for point process observations  (see for instance \citeasnoun{bib:bremaud-81})  do not apply.  We therefore transform   the original problem to a new filtering problem where the   observations consist only of the process $Z$;  the signal process in this new problem  is on the other hand given by  the asset value process stopped at the first exit time of the solvency region $(K, \infty)$. Using results of \citeasnoun{bib:pardoux-78} on the filtering of stopped diffusion processes we derive a stochastic partial differential equation (SPDE) for the conditional density   of $V_t$ given $\F_t^\bbM$, denoted $\pi(t,\cdot)$, and we discuss the numerical solution of this SPDE via a Galerkin approximation.  Extending the work of \citeasnoun{bib:duffie-lando-01} to our more general information structure,  we show that $\tau$ admits an intensity process $(\lambda_t)_{t \ge 0}$ such that the intensity at time $t$   is proportional  to the spatial derivative of  $\pi(t,v)$ at $v=K$.
Armed with these results we  finally study derivative asset analysis in our setup:   we identify the price dynamics of the traded securities;  we consider  the pricing of options on traded assets;  we  derive risk-minimizing dynamic hedging strategies for these claims,  and we discuss model calibration. The  paper closes with a small simulation study  illustrating  the  theoretical results.

Incomplete information and filtering methods have been used before in the analysis of credit risk. Structural models with incomplete information were considered among others by \citeasnoun{bib:kusuoka-99}, \citeasnoun{bib:duffie-lando-01},  \citeasnoun{bib:jarrow-protter-04},
\citeasnoun{bib:coculescu-geman-jeanblanc-06}, \citeasnoun{bib:frey-schmidt-09b} and \citeasnoun{bib:cetin-12}. The last contribution is related to the present paper.   Working in a similar setup as ours, Cetin uses  probabilistic arguments   to establish  the existence of a default intensity  with respect to $\bbF^\bbM$, and he derives the corresponding filter equations. He does not discuss the existence of the conditional density  $\pi(t,\cdot)$ and  financial applications are discussed only in a peripheral manner. From a purely mathematical point of view our paper is also closely related to  \citeasnoun{bib:krylov-wang-11} who deal with the filtering of partially observed diffusions up to the first exit time of a domain. The relation between our results and those of Krylov and Wang are best explained once the mathematical details of our setup have been introduced, and we refer to Section~\ref{sec:filtering}, Remark~\ref{rem:krylov} for a deeper discussion of similarities and differences between the two papers.

Reduced-form credit risk models with incomplete information have been considered
previously by \citeasnoun{bib:duffie-et-al-06}, \citeasnoun{bib:frey-runggaldier-10} and  \citeasnoun{bib:frey-schmidt-12}, among others.  The  modelling philosophy  of the present paper is inspired by  \citeasnoun{bib:frey-schmidt-12}, but the mathematical analysis  differs substantially. In particular, in \citeasnoun{bib:frey-schmidt-12} the default times of the firms under consideration do admit an intensity under full information. Hence the filtering problem that arises in the pricing of credit derivatives can be addressed  via a straightforward application of the innovations approach to nonlinear filtering.

The remainder of the paper is organized as follows. In Section~\ref{sec:model} we introduce the model; the relation between traded  securities and stochastic filtering  is discussed in Section~\ref{sec:pricing-basic-securities}; Section~\ref{sec:filtering} is concerned with the stochastic filtering of the asset value; in Section~\ref{sec:dynamics} we derive the dynamics of corporate securities; Section~\ref{sec:applications} is concerned with  derivative asset analysis; the results of numerical experiments are given in Section~\ref{sec:simulations}.

\paragraph{Acknowledgements.} Financial support from the German Science Foundation (DFG)
and from the Vienna Science and Technology Fund (WWTF), project MA14-031, is gratefully acknowledged. Moreover, we thank Andreas Peterseil for very competent  research assistance and several anonymous referees for useful comments.

\section{The Model}
\label{sec:model}

We begin by introducing the mathematical structure of the model.
We work on  a filtered probability  space $(\Omega, \mathcal{G}, \mathbb{G}=
(\mathcal{G}_t)_{t\geq 0}, Q)$ and  we assume that  all processes introduced below are
$\mathbb{G}$-adapted. Since we are mainly interested in the pricing of derivative securities we  assume that $Q$ is the risk-neutral pricing  measure.
We consider a company with nonnegative asset value process $V=(V_t)_{t\geq 0}$. The company is subject to default risk and the default time is modelled as a first passage time, that is
\begin{equation} \label{eq:def-tau}
\tau=\inf\{t \geq 0 \colon V_t\leq K\}
\end{equation}
for some default threshold  $K>0$. In practice  $K$ might represent  solvency capital requirements imposed by regulators (see Example~\ref{ex:fin-institution}) or it might correspond to an endogenous default threshold as in \citeasnoun{bib:duffie-lando-01} (see Example~\ref{ex:duffie-lando}).
By $Y_t=1_{\{\tau\leq t\}}$ we denote the \emph{default state} of the firm at time $t$, that is  $Y_t = 1$ if and only if the firm has defaulted by time $t$; the associated \emph{default indicator process} is denoted by $Y = (Y_t)_{t \ge 0}$.

\begin{assumption}[Dividends and asset value process] \label{ass:assets-and-dividends}
1) The risk free rate of interest is constant and equal to $r \ge 0$.

2) The firm pays \emph{dividends}   at equidistant  deterministic time points $t_1$, $t_2, \ldots$ (for instance semi-annual dividend payments). The set of dividend dates is denoted by $\mathcal{T}^D$. The  dividend payment at $t_n$ is a random percentage of the \emph{surplus} $(V_{t_n-}-K)^+$ (the part of the asset value that can be distributed to shareholders  without sending  the company into immediate default).  Denoting by  $d_n$  the   dividend payment at $t_n$, it  holds that
\begin{equation}\label{eq:dividend-size}
    d_n=\delta_n (V_{t_n} -K)^+;
\end{equation}
here  $(\delta_n)_{n=1,2,\dots}$ is  an iid sequence of noise variable that are independent of $V$,
take  values in $(0,1)$ and that have   density function $\varphi_\delta $. We assume that $\varphi_\delta$ is bounded and twice continuously differentiable on $[0,1]$ with $\varphi_\delta(1)=0$.   For $V_{t_n -} > K$ the conditional distribution of $d_n$ given the history of the asset value process is thus of the form $\varphi(y , V_{t_n-}) dy$  where
\begin{equation}\label{eq:dividend-density}
  \varphi(y,v) = \frac{1}{(v-K)} \varphi_\delta\Big( \frac{y}{(v-K)}\Big)\ind{v >K}\,.
\end{equation}
Let  $D_t=\sum_{\{n\colon t_n\leq t\}} d_n$  so that $D = (D_t)_{t\ge 0}$ is the cumulative dividend process. In the sequel we  denote by
$\mu^D(dy,dt)$ the random measure associated with the sequence $(t_n, d_n)_{n \in \mathbb{N}}$.

3)  The \emph{asset value process} $V = (V_t)_{t \ge 0} $ has the following dynamics
\begin{equation}\label{eq:dVt}
V_t = V_0 + \int_0^t r V_s ds + \int_0^t \sigma V_s dB_s - \kappa  D_t
\end{equation}
for a constant volatility $\sigma>0$,  a standard $Q$-Brownian motion $B$ and a random variable $V_0$. The parameter $\kappa $ takes values in $\{0,1\}$. For $\kappa =1$ (the most relevant case) the asset value is reduced at a dividend date by the amount $d_n$ distributed to shareholders;   $\kappa =0$ corresponds to the case where we view the $d_n$ merely as noisy signal of the asset value and not as a payment to shareholders (see Example~\ref{ex:duffie-lando}).
We assume that $V_0$ has Lebesgue density $\pi_0(v)$ for a continuously differentiable function $\pi_0 \colon [K,\infty) \to \R^+$ with $\pi_0(K) =0$ such that $V_0$ has finite second moment.
\end{assumption}

The second assumption  reflects the fact that in reality there is a positive but noisy relation between asset value and dividend size. Note that it follows from \eqref{eq:dividend-density}  that $d_n < (V_{t_n -} - K )^+$. This restriction on the dividend size can be viewed as implicit protection of debtholders as it ensures that the firm will not default at a dividend date due to an overly large dividend. Together with our assumptions on $\varphi_\delta$,   \eqref{eq:dividend-density} implies that for a given $d >0$, $\varphi(d,v)$ is zero for all $v$ such that $d/(v-K) \ge 1$, that is for $v \le d+K$. Moreover, it holds that
\begin{equation} \label{eq:bound-on-density}
\sup_{v \ge K} \varphi (d,v) \le \frac{1}{d} \max_{\delta \in [0,1]} \varphi_\delta (\delta)\,.
\end{equation}
Note that the dividend policy \eqref{eq:dividend-size} is not the outcome of a formal optimization process. In fact, as shown for instance in \citeasnoun{bib:jeanblanc-shiriayev-95}, it might be optimal to pay out a larger fraction of the available surplus if $V_{t_n}$ is large. While the filtering results in Section~\ref{subsec:filtering-wrt-FI} could  be extended to such a setup, provided the conditional density $\varphi (\cdot,v)$ of the dividend size satisfies certain regularity conditions, the pricing of the firm's stock would become more involved. Moreover, dividend policies adopted in practice are  guided to a large extent  by market conventions and rules of thumb. For these reasons we stick to the simple rule \eqref{eq:dividend-size}.

The  $\mathbb{G}$-compensator of the random measure $\mu^D$ associated with the sequence $(t_n, d_n)_{n \in \mathbb{N}}$ is  given by $\gamma^D(dy,dt) =  \sum_{n=1}^\infty \varphi(y,V_{t_n -}) dy \,\delta_{\{t_n\}}(dt)$. Note that for $g \colon [0, \infty) \times [0, \infty) \to \R^+$ it holds that
$$ \int_0^\infty \int_0^\infty g(t,y) \gamma^D(dy,dt) = \sum_{n=1}^\infty \int_0^\infty  g(t_n, y) \varphi(y,V_{t_n -}) dy \,.$$

The assumption that between dividends the asset value  is a geometric Brownian motion is routinely made in
the literature on structural credit risk models such as \citeasnoun{bib:leland-94}
or \citeasnoun{bib:duffie-lando-01}. For empirical support for the assumption of geometric Brownian motion as a model for the asset price dynamics we refer to~\citeasnoun{bib:moodys-edf-12}. Note that the assumption that $V$ follows a geometric Brownian  motion does not imply that the  stock price follows a geometric Brownian motion. In fact, our analysis in Section~\ref{sec:dynamics} shows that in our setup  the stock price dynamics can be much `wilder' than geometric Brownian motion.
Note finally that  $V$ is not a traded asset so that its drift under $Q$ might in principle be different from the risk-free rate $r$. However, setting the drift of $V$ equal to $r$ permits us to interpret $V$ as value of all future dividend payments of the firm (up to $t=\infty$), see Lemma~\ref{lemma:finite-stock-price} below.

\paragraph{Information structure and  pricing.} In our setting the asset value $V$ is not directly observable. Instead we assume that prices of corporate securities are determined as conditional expectation with respect to some filtration $\mathbb{F}^\bbM = (\mathcal{F}_t^\bbM)_{t\geq 0}$ that is generated by the default history, by  the dividend payments of the firm and by   observations of functions of $V$ in additive Gaussian noise:
\begin{assumption} \label{ass:information}  It holds that $\mathbb{F}^\bbM =\mathbb{F}^Y \vee \mathbb{F}^D\vee\mathbb{F}^Z$, where  $\mathbb{F}^Y$ denotes the filtration generated by the
default indicator process $Y$, where $\mathbb{F}^D$ denotes the filtration generated by   $D$ and where the filtration $\mathbb{F}^Z$ is generated by the $l$-dimensional process $Z$  with
\begin{equation}
{Z}_t=\int_0^t{a}(V_s)ds+ {W}_t\,.
\end{equation}
Here  ${W}$  is an $l$-dimensional    $\bbG$-Brownian motion independent of $B$,
and   $a = (a^1, \dots, a^l)$ is a bounded and  continuously differentiable  function from $\mathbb{R^+}$ to
$\mathbb{R}^l$ with $a(K) =0$. Note that the assumption $a(K) =0$ is no real restriction  as  the function $a $ can be replaced
with $a - a(K)$ without altering the information content of $\bbF^\bbM$.
\end{assumption}

In the sequel $\mathbb{F}^\bbM$ is called  \emph{modeling filtration}, since it represents the fictitious flow of information that is employed in  the \emph{construction} of the  model. In particular, we will not associate the process  $Z$ with publicly observable economic data; it is simply a mathematical device  that generates  the diffusive component in the asset price dynamics.  As explained in the next section,  for the \emph{application }  of the model, that is for  pricing  and hedging  of derivative securities, it is sufficient  to observe the price processes of  traded assets and the default history of the firm. This is important since pricing formulas and hedging strategies need to be computed in terms of publicly available information.

We use martingale modelling to construct the price processes of traded securities and  we define  the   ex-dividend price of a generic traded security with  $\bbF^{\bbM}$-adapted cash flow  stream $(H_t)_{0 \le t \le T}$ and maturity date $T \in (0, \infty]$ by
\begin{equation} \label{eq:def-price-of-H}
\Pi_t^H = E^Q \Big ( \int_t^T e^{-r(s-t)} d H_s \mid \F_t^\bbM \Big )\, , \quad t \le T ,
\end{equation}
provided of course that the discounted cash flow stream is $Q$ integrable.  The use of the risk-neutral pricing formula~\eqref{eq:def-price-of-H} ensures that  the discounted gains from trade of every traded security are martingales, which is sufficient to exclude arbitrage opportunities. Hedging arguments within the  martingale modelling paradigm are presented in Section~\ref{subsec:hedging}.

\vspace{0.1cm}

Finally  we describe two economic settings that can be  embedded in our framework.

\begin{example} \label{ex:fin-institution} Our first example is that of  a financial institution that is subject to financial regulation. We assume that the institution has  issued shares to outside shareholders. It is run by a management team that knows  the asset value $V$.  Management is prevented from actively trading the shares of the institution, for instance because of insider trading regulation.   Outside stock and bond investors on the other hand are unable to discern the exact asset value from public information. The dividend policy of the firm is of the form \eqref{eq:dividend-size}.   In this example we let  $\kappa =1$ so that dividend payments do reduce the asset value of the firm.
We assume that the institution  is subject to capital adequacy rules such as the Basel~III or the  Solvency~II rules.   Loosely speaking these rules require that the ratio of the equity capital of the firm over its total asset value must be larger than a given threshold $\gamma \in (0,1)$. If we denote by $\tilde K$ the value of the firms liabilities, this translates into the condition  that $(V_t - \tilde K) / V_t > \gamma$ and hence that
\begin{equation} \label{eq:capital-adequacy}
V_t > K : = \tilde K/(1-\gamma).
\end{equation}
We assume that regulators actively monitor that the state of the firm is in accordance with the capital adequacy rule~\eqref{eq:capital-adequacy} and that  management provides them with  correct information about the asset value.  If $V$ falls below  $K$, regulators shut down  the financial institution and there is  a default. Hence the default time is a first passage time with default threshold $K$ given in \eqref{eq:capital-adequacy}.
Note that in this setting default is enforced by regulators with privileged access to information.
\end{example}

\begin{example} \label{ex:duffie-lando} The well-known model of  \citeasnoun{bib:duffie-lando-01} can be embedded in our setup as well.
Duffie and Lando consider  a firm that is operated by risk-neutral equity owners who have complete information about $V$. The firm issues some debt in the form of a consol bond in order to profit from the tax shield of debt, but there are no traded shares and no dividend payments to outside investors. Equity owners are prohibited from trading in bond markets by insider trading regulation. In this setup the owners of the firm  have the option to stop servicing the firm's debt, in which case the firm defaults.  Following \citeasnoun{bib:leland-toft-96}, Duffie and Lando show that the optimal default time (for the equity owners)  is  a  first passage time,  but now with endogenously determined  default threshold $K$.

Note that in this example the random variables $d_n$ can be viewed  as additional  information on $V$ that arrives at discrete time points, such as earnings announcements. This interpretation  corresponds to a value of $\kappa =0$ for the parameter $\kappa$ in \eqref{eq:dVt}. Moreover,  the rvs $d_n$ do not have to be  of the special form \eqref{eq:dividend-size}; it suffices that for fixed $d$ the mapping $ v \mapsto \varphi(d,v)$ is smooth and bounded.
\end{example}

\section{Prices of Traded  Securities and Stochastic Filtering}
\label{sec:pricing-basic-securities}

In this section we explain the relation between the prices of traded securities and stochastic  filtering and we discuss several examples.

\paragraph{Traded securities.}  The set of traded securities consists so-called \emph{basic debt securities} and   of the  stock of the firm.   We now describe  the payoff stream of these  securities  in more detail.
First we refer to an asset as a \emph{basic debt security}  if its   cash-flow stream can be expressed as a linear combination of the following two building blocks
\begin{itemize}
\item[i)] A \emph{survival claim} with generic maturity date $T$. This claim  pays one unit of account at $T$, provided that $\tau >T$.
\item[ii)] A \emph{payment-at-default claim} with generic maturity date $T$. This claim pays one unit directly at $\tau$, provided that $\tau \le T$.
\end{itemize}
It is well known that   bonds issued by the firm  and  credit default swaps  on the firm can be expressed as linear combination of these building blocks,  see for instance  \citeasnoun{bib:Lando-98}.

Next we  discuss the modelling  of the firm's stock.  The shareholders of the firm receive the dividend payments made by the firm at dividend dates $t_n < \tau$. Hence the cumulative cashflow stream received by the shareholders  up to time $t$ equals  $H_t^\text{stock} = D_{t \wedge \tau}$. The risk neutral pricing formula~\eqref{eq:def-price-of-H}  thus implies that the value
of the  firm's stock\footnote{Note that strictly speaking  $S_t$ gives the market capitalization of the firm at time $t$, that is  value of the entire outstanding stock. Since we assume that the number of outstanding shares is constant we use the symbol $S$ also for the price process of a single share.} is given by
\begin{equation} \label{eq:def-stock-price}
S_t = E^Q \Big ( \sum_{\{n \colon t_n >t\}}  1_{\{\tau >t_n \}} e^{-(t_n-t)} d_n  \mid \F_t^\bbM \Big )\,.
\end{equation}
Note that the cash-flow stream of a basic debt security and of the stock is adapted to $\bbF^Y \vee \bbF^D$ and hence also to the modelling filtration $\bbF^\bbM$.

\paragraph{Relation to stochastic filtering.} Consider now a traded security with cash-flow stream $(H_t)_{0 \le t \le T}$ and ex-dividend price $\Pi_t^H = E^Q \big ( \int_t^T e^{-r(s-t)} d H_s \mid \F_t^\bbM \big )$.
In the sequel we mostly consider the pre-default value of the  security given by $\ind{\tau
>t} \Pi_t^H$ (pricing for $\tau \le t $ is largely related to the modelling of
recovery rates which is of no concern to us here). Using iterated conditional expectations we get that
\begin{equation} \label{eq:def-price-of-H-2}
\ind{\tau >t} \Pi_t^H  =
 E^Q \Big ( E^Q \big ( \ind{\tau >t} \int_t^T e^{-r(s-t)} d H_s \mid \G_t \big ) \mid \F_t^\bbM \Big ).
\end{equation}
By the Markov property of $V$, for basic debt securities and for the stock the inner conditional
expectation can be expressed as a function of time and of the current asset value $V_t$, that is
\begin{equation}\label{eq:def-full-info-value}
E^Q \Big (1_{\{\tau > t\}} \int_t^T e^{-r(s-t)} d H_s \mid \G_t \Big ) =
 1_{\{\tau > t\}}h(t,V_t)\,.
\end{equation}
The function   $h$ is  called the \emph{full-information value} of the security.
In Section~\ref{sec:filtering}  we  show that on the  $\F_t^\bbM$-measurable set $\{\tau >t\}$ the conditional distribution of $V_t$ given $\F_t^\bbM$ admits a density $\pi(t,\cdot) \colon [K, \infty) \to \R^+$  and we derive an SPDE for this density.
Substituting \eqref{eq:def-full-info-value} into \eqref{eq:def-price-of-H-2}  gives that
\begin{equation} \label{eq:pricing-via-filtering}
1_{\{\tau > t\}} \Pi_t^H = 1_{\{\tau > t\}}  E^Q \big (h(t,V_t) \mid \F_t^\bbM) = 1_{\{\tau > t\}} \int_K^\infty h(t,v)\pi(t,v) d v\,.
\end{equation}
Relation \eqref{eq:pricing-via-filtering} provides an important  relationship between prices of traded securities and stochastic filtering, which is used in two ways.   First, at any given time point $t_0$  an estimate of $\pi(t_0)$ is backed out  from the  price of traded  securities at time $t_0$, so that $\pi(t_0) $ can be viewed as a function of  observed prices  (the necessary  calibration methodology is described in Section~\ref{subsec:calibration}). Moreover, in Section~\ref{subsec:derivatives} we  show that  the price at time $t_0$  of an  option on the traded assets is a function of $\pi(t_0)$. Hence option prices can be evaluated using observable quantities (prices of traded securities) as input.
Second, in order to derive the price {dynamics} of  traded securities under the risk-neutral measure $Q$  we  determine the dynamics of  $\pi(t)$ using filtering methods; using \eqref{eq:pricing-via-filtering}  this gives  the dynamics of the pre-default value $1_{\{\tau > t\}} \Pi_t^H$ of the traded securities.

This approach is akin to the use of factor models in term structure modelling where prices of traded securities  are used to estimate the current value of the factor process  and where  bond price dynamics are derived from the dynamics of the factor process. In fact,  our model can be viewed as factor model with infinite-dimensional factor process $\pi(t)$.

\begin{remark} Our modelling strategy leads to filtering problems under $Q$ and  differs from the `classical' application of stochastic filtering  in statistical inference.
A typical problem in the latter  context would be as follows:  the process $Z$ is identified with  a specific set of economic data that contains noisy information of $V$, and filtering techniques are employed to estimate the conditional distribution of $V_t$ under the historical measure $P$  given  the observed trajectories of  $Z$, $D$ and $Y$ up to time $t$. Such an approach could be used  to estimate the firm's real-world default probability, similar in spirit to the well-known public firm EDF model~\citeasnoun{bib:moodys-edf-12}. It is worth mentioning that the mathematical results developed in Sections~\ref{sec:filtering}~and~\ref{sec:dynamics} cover also applications of this type.
\end{remark}

\paragraph{Full information value of traded securities.} Next  we discuss the computation of the  full information value  $h$ for basic debt securities and for the stock.  We concentrate on the case $\kappa=1$, so that there is a downward jump in $V$  at the dividend dates; for $\kappa =0$ the asset value is a geometric Brownian motion and the ensuing computations are fairly standard.

We begin with  a  survival claim with payoff $H_T = 1_{\{\tau > T\}}$ and associated full-information value   $h^{\text{surv}}(t,V_t)$.
Since $e^{-r t} h^{\text{surv}}(t,V_t)$ is  a $\mathbb{G}$-martingale
we get the following PDE characterization of $h^{\text{surv}}$: first, between dividend dates $h^{\text{surv}}$ solves the boundary value problem
$ \frac{d }{dt} h^{\text{surv}} +  \mathcal{L} h^{\text{surv}} = r h^{\text{surv}}$   with  boundary condition $h^{\text{surv}}(t, K) = 0$ for  $0 \le t \le T$,
 where we let for $f\in \mathcal{C}^2(0,\infty)$
\begin{equation}\label{eq:def-L}
\mathcal{ L} f (v) = r v \frac{df(v)}{dv}  + \frac{1}{2}\sigma^2 v^2 \frac{d^2 f(v)}{dv^2}\,;
\end{equation}
 second, at a dividend date $t_n \le T$ it holds that
\begin{equation} \label{eq:full-info-value-at-tn-surv}
h^{\text{surv}} (t_{n}-,v) = \int_0^{v-K} h^{\text{surv}}(t_n, v- y) \varphi(y,v) dy\,;
\end{equation}
finally, one has the terminal condition  $h^{\text{surv}}(T, v) = 1 $ for $v >K$. These conditions  can be used to compute $h^{\text{surv}}$ numerically by a backward induction over the dividend dates; see for instance \citeasnoun{bib:vellekoop-nieuwenhuis-06} for details.  Moreover, we will need the PDE characterization of  $h^{\text{surv}}$ to derive the price dynamics of a survival claim under incomplete information in Section~\ref{subsec:filtering-dynamics}.
Recall that a payment-at-default claim with maturity $T$ pays one unit
directly at $\tau$, provided that $\tau \le T$. The PDE characterization is similar to the case of a survival claim; however, now the boundary condition is
$h^{\text{def}}(t, K) = 1$, $0 \le t \le T$ and the terminal value is $h^{\text{def}}(T, v) = 0 $, $v >K$. By definition the full information value of all basic debt securities can be computed from $h^{\text{surv}}$ and $h^{\text{def}}$.

Next we consider the stock of the firm. It follows from \eqref{eq:def-stock-price} that the full information value of the firm's stock is given by
\begin{equation} \label{eq:def-S-full-info}
h^{\text{stock}}(t, v) = E^Q \Big ( \sum_{n \colon t_n >t}  1_{\{\tau >t_n \}} e^{-r (t_{n}-t) } d_n \mid V_t =v \Big )
\end{equation}
The next lemma whose proof is given in  Appendix~B shows that  $V_t$ can be interpreted as value of all future dividend payments (up to $T= \infty$).
\begin{lemma} \label{lemma:finite-stock-price}
Under Assumption~\ref{ass:assets-and-dividends} it holds that $E^Q \big ( \sum_{t_n \ge t}   e^{-r (t_{n}-t)} d_n \mid V_t \big) = V_t $.
\end{lemma}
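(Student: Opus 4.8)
The plan is to show that $N_t:=e^{-rt}V_t+\int_{[0,t]}e^{-rs}\,dD_s$ is a true $Q$-martingale on every compact interval and that the discounted asset value $e^{-rt}V_t$ is a \emph{potential} (i.e.\ $E^Q[e^{-rt}V_t]\to 0$), and then to let the time horizon tend to infinity; throughout I take $\kappa=1$, the case relevant here. Applying It\^o's formula to $e^{-rt}V_t$ with \eqref{eq:dVt}, the $dt$--drift cancels, and at a dividend date the jump $-e^{-rt_n}d_n$ of $e^{-rt}V_t$ is compensated by the jump $e^{-rt_n}d_n$ of $\int_{[0,\cdot]}e^{-rs}dD_s$, so that $dN_t=\sigma e^{-rt}V_t\,dB_t$ and $N$ is a continuous local martingale. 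Since dividends only decrease $V$ one has $0\le V_t\le V_0\exp\!\big((r-\tfrac12\sigma^2)t+\sigma B_t\big)$, whence $E^Q[\langle N\rangle_T]=\sigma^2E^Q\big[\int_0^T e^{-2rs}V_s^2\,ds\big]\le\sigma^2 E^Q[V_0^2]\int_0^T e^{\sigma^2 s}\,ds<\infty$ by $E^Q[V_0^2]<\infty$. Thus $N$ is a true martingale on each $[0,T]$, and $E^Q[N_T\mid\mathcal{G}_t]=N_t$ rearranges to
\begin{equation*}
V_t \;=\; E^Q\!\Big[\, {\textstyle\sum_{t<t_n\le T}} e^{-r(t_n-t)}d_n \,\Big|\,\mathcal{G}_t\,\Big]\;+\;E^Q\!\big[\,e^{-r(T-t)}V_T\,\big|\,\mathcal{G}_t\,\big],\qquad t<T,
\end{equation*}
in which the last term is nonnegative and non-increasing in $T$.

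The core of the proof is to show this last term vanishes as $T\to\infty$, equivalently that $e^{-rt}V_t\to 0$ in $L^1$. First, $e^{-rt}V_t\le V_0\exp(\sigma B_t-\tfrac12\sigma^2 t)\to 0$ $Q$--a.s.\ by the law of large numbers for Brownian motion. Second, $\{e^{-rt}V_t\colon t\ge 0\}$ is $Q$--uniformly integrable: at a dividend date $V_{t_n}\le(1-\delta_n)V_{t_n-}+\delta_n K$ (with equality on $\{V_{t_n-}>K\}$), and between dividend dates $e^{-rt}V_t$ is a driftless geometric Brownian motion, so for $p=1+\varepsilon$ and $\Delta$ the (constant) spacing of dividend dates one obtains a recursion $E^Q[(e^{-rt_{n+1}}V_{t_{n+1}})^p]\le A_\varepsilon e^{c_\varepsilon\Delta}\,E^Q[(e^{-rt_n}V_{t_n})^p]+\mathrm{const}$ with $c_\varepsilon=\tfrac12 p(p-1)\sigma^2$ and $A_\varepsilon\to 1-E^Q[\delta_1]<1$ as $\varepsilon\downarrow 0$. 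Choosing $\varepsilon$ small enough that $A_\varepsilon e^{c_\varepsilon\Delta}<1$ and using $E^Q[V_0^2]<\infty$ yields $\sup_{t\ge 0}E^Q[(e^{-rt}V_t)^{1+\varepsilon}]<\infty$, hence uniform integrability by the de la Vall\'ee-Poussin criterion. Together with the a.s.\ convergence this gives $e^{-rt}V_t\to 0$ in $L^1$; consequently $E^Q[e^{-r(T-t)}V_T\mid\mathcal{G}_t]\to 0$ in $L^1$, and being non-increasing and nonnegative it tends to $0$ $Q$--a.s.

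Letting $T\to\infty$ in the displayed identity and applying conditional monotone convergence to the nonnegative series gives $V_t=E^Q[\sum_{t_n>t}e^{-r(t_n-t)}d_n\mid\mathcal{G}_t]$; in particular the series is $Q$--integrable. Finally, conditionally on $\mathcal{G}_t$ the process $(V_s)_{s\ge t}$ is Markov, and the future dividends $d_n$, $t_n>t$, are measurable functions of $(V_{t_n-})_{t_n>t}$ and of the i.i.d.\ noise $(\delta_n)$ independent of $\mathcal{G}_t$, so the $\mathcal{G}_t$--conditional law of $\sum_{t_n>t}e^{-r(t_n-t)}d_n$ depends only on $V_t$; hence $E^Q[\,\cdot\mid\mathcal{G}_t]=E^Q[\,\cdot\mid V_t]$ and the assertion follows (at a dividend date the identity is read with the post-dividend value $V_t$ and the sum over $t_n>t$). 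The genuinely delicate point is the potential property in the second paragraph: domination by a geometric Brownian motion is \emph{not} enough, because the discounted dominating process is a martingale with expectation bounded away from $0$; the multiplicative shrinkage by the factors $1-\delta_n<1$ at the dividend dates is essential and must be combined with a uniform $L^{1+\varepsilon}$ bound to upgrade a.s.\ convergence to $L^1$ convergence (for $r>0$ a softer argument using $e^{-rt_n}\to 0$ would already suffice).
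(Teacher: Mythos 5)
Your proof is correct and shares the paper's first step --- establishing that $\widetilde G_t = e^{-rt}V_t + \int_0^t e^{-rs}\,dD_s$ is a square-integrable martingale via domination by the cum-dividend geometric Brownian motion --- but it then takes a genuinely different route to the potential property $E^Q[e^{-rt}V_t]\to 0$. The paper argues by contradiction from the additive lower bound $e^{-rt_n}d_n \ge \delta_n e^{-rt_n}(V_{t_n-}-K)$: if $E^Q[e^{-rt_n}V_{t_n}\mid V_0]$ did not vanish, then $\sum_n E^Q[e^{-rt_n}V_{t_n}\mid V_0]$ would diverge while $\sum_n e^{-rt_n}<\infty$, so the discounted dividend stream would have infinite conditional expectation, contradicting the upper bound $\le V_0$ just obtained from the martingale property. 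You instead show a.s.\ convergence $e^{-rt}V_t\to 0$ and upgrade it to $L^1$-convergence by uniform integrability, through a uniform $L^{1+\varepsilon}$ bound obtained from a recursion across dividend dates that plays the multiplicative contraction $V_{t_n}\le(1-\delta_n)V_{t_n-}+\delta_n K$ (convexity of $x\mapsto x^p$) against the GBM $p$-th moment growth factor $e^{\frac12 p(p-1)\sigma^2\Delta}$. Both arguments pivot on the same fact --- dividends shave a nonvanishing expected fraction off the surplus infinitely often --- but the paper exploits it additively in a lower bound for the dividend stream, whereas you exploit it multiplicatively in an $L^{1+\varepsilon}$ estimate. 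Your argument has the modest advantage of not using $\sum_n e^{-rt_n}<\infty$ and therefore also covering the boundary case $r=0$ permitted by Assumption~\ref{ass:assets-and-dividends}, at the cost of the extra moment bookkeeping; your closing remark correctly identifies that domination by GBM alone is insufficient and that the dividend shrinkage is essential, which is precisely the point the paper's contradiction argument also relies on.
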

Note that Lemma~\ref{lemma:finite-stock-price} implies that $h^{\text{stock}}(t, v) < v$.  It follows that the stock price $S_t$ is finite as well (this is not a priori clear since $S_t$ is the expected value of an infinite payment stream).
Using the fact that $e^{-rt} h^{\text{stock}}(t,V_t) + \int_0^t 1_{\{\tau >s\}} e^{-rs} dD_s$ is a $Q$ martingale we obtain  the following PDE characterization for $h^{\text{stock}}$: between dividend dates  $h^{\text{stock}}$ solves the PDE $ \frac{d }{dt} h^{\text{stock}} +  \mathcal{L} h^{\text{stock}} = r h^{\text{stock}}$   with  boundary condition $h^{\text{stock}}(t, K) = 0$; at the dividend date $t_n$  $h^{\text{stock}}$ satisfies the relation
\begin{equation} \label{eq:full-info-value-at-tn}
h^{\text{stock}} (t_{n}-,v) = \int_0^{v-K} \left( h^{\text{stock}}(t_n, v - y) + y \right) \varphi(y,v) dy\,.
\end{equation}
Since we assumed  equidistant dividend dates it holds that $h^{\text{stock}} (t,v) = h^{\text{stock}} (t + \Delta t,v)$ for $\Delta t = t_n - t_{n-1}$ so that it is enough to compute $h^{\text{stock}} (t,v) $ for $0 \le t \le t_1$.
An explicit formula for $h^{\text{stock}}$ is not available; the main problem is the fact that the  downward jump in $V$ at a dividend date combines  arithmetic and geometric expressions. There are essentially two options for computing $h^{\text{stock}}$ numerically. On the one hand one can rely on Monte Carlo methods. In order to speed up the simulation explicit pricing formulas for $h^{\text{stock}}$ in a Black Scholes model with  continuous dividend stream can be used  as control variate.
Alternatively, it is  possible to use PDE methods in order to compute  $h^{\text{stock}}$. We omit the details since the numerical computation of option prices is not central to our analysis.


\section{Stochastic Filtering of the Asset Value} \label{sec:filtering}

Fix some horizon date $T$, for instance the  largest maturity date of all outstanding  derivative securities related to the firm. Recall from the previous section that in order to derive the price {dynamics} of  traded securities  we  need to determine the dynamics of the conditional density  $\pi(t)$, $0 \le t <  \tau \wedge  T $.  This problem is studied in the present section.


\subsection{Preliminaries} \label{subsec:preliminaries}

Following the usual approach in stochastic filtering we start with a characterization of the conditional distribution of $V_t$ given $\F_t^\bbM$ (the filter distribution) in weak form. More precisely, given a function  $h $ on $[K, \infty)$ such that $E\big ( |h(V_t)|\big )< \infty $ for all $t \le T$ we want to derive the dynamics of the conditional expectation
\begin{equation} \label{eq:basic-filtering-problem}
1_{\{\tau > t\}}  E^Q \big ( h(V_t) \mid \F_t^\bbM \big), \quad t \le T\,.
\end{equation}
This is sufficient for our purposes since we are only interested in the dynamics of the filter distribution prior to default.

The problem \eqref{eq:basic-filtering-problem} is a challenging filtering problem  since the default time $\tau$  does not admit an intensity under full information. Hence  standard
filtering techniques for point process observations as in \citeasnoun{bib:bremaud-81} do
not apply. This issue is addressed in the following proposition where, loosely speaking, \eqref{eq:basic-filtering-problem} is transformed to a filtering problem with respect to the background filtration $\bbF^Z \vee \bbF^D$.
\begin{proposition} \label{prop:reduction-to-background}
Denote by $V^\tau =  (V_{t \wedge\tau})_{t \ge 0}$ the asset value process stopped at the
default boundary, by $\tilde{Z}_t = \int_0^t a( V_s^\tau) ds + W_t$ the observation of $V^\tau$ in additive Gaussian noise and by $\tilde D_t=\sum_{\{n\colon
t_n\leq t\}} \delta_n (V_{t_n -}^\tau -K)^+ $ the cumulative dividend process corresponding to
$V^\tau$. Then we have for  $h:[K ,\infty) \to \R$  such that $E^Q\big(| h(V_t)| \big) < \infty$ for all $t \le T$
\begin{equation}\label{eq:reduction-to-background}
\ind{\tau >t}  E^Q ( h(V_t) \mid \F_t^\bbM ) =
\ind{\tau >t} \frac{E^Q \big (
h(V_t^\tau) \ind{V_t^\tau > K} \mid \F_t^{\tilde Z} \vee\F_t^{\tilde{D}} \big ) }{ Q\big ( V_t^\tau > K \mid \F_t^{\tilde{Z}} \vee \F_t^{\tilde{D}} \big )}\,.
\end{equation}
\end{proposition}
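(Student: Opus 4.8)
The plan is to show that conditioning on $\F_t^\bbM = \F_t^Y \vee \F_t^D \vee \F_t^Z$ restricted to the event $\{\tau > t\}$ is equivalent to conditioning on $\F_t^{\tilde Z} \vee \F_t^{\tilde D}$ restricted to $\{V_t^\tau > K\}$, and then to rewrite the resulting conditional expectation as a ratio via the abstract Bayes formula. The first observation is that on $\{\tau > t\}$ the stopped processes agree with the original ones: $V_s^\tau = V_s$, $\tilde Z_s = Z_s$ and $\tilde D_s = D_s$ for all $s \le t$, and moreover $\{\tau > t\} = \{V_t^\tau > K\}$ up to a $Q$-null set (using that $V$ is a continuous process between dividend dates and that the dividend rule \eqref{eq:dividend-density} forces $d_n < (V_{t_n-}-K)^+$, so the firm cannot default exactly at a dividend date). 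Hence on this set the information in $\F_t^Y$ is already contained in $\F_t^{\tilde Z}\vee\F_t^{\tilde D}$ once we know we are on $\{\tau>t\}$: observing $Y_s=0$ for $s\le t$ is the same as observing $V_s^\tau>K$ for $s\le t$, which is $\F_t^{\tilde Z}\vee\F_t^{\tilde D}$-measurable information only in the trivial sense that it is the defining event. The precise statement I would use is a standard lemma on conditioning given a $\sigma$-field enlarged by a single event: for $A\in\F_t^\bbM$ with $Q(A)>0$, $E^Q(\xi \mid \F_t^\bbM)\,1_A = \dfrac{E^Q(\xi 1_A \mid \G_t')}{Q(A\mid \G_t')}\,1_A$ whenever $\F_t^\bbM$ and $\sigma(\G_t', A)$ generate the same trace $\sigma$-field on $A$.

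The technical core, and what I expect to be the main obstacle, is the \emph{measurability identity}: that the trace of $\F_t^\bbM$ on $\{\tau>t\}$ coincides with the trace of $\sigma\big(\F_t^{\tilde Z}\vee\F_t^{\tilde D}, \{V_t^\tau>K\}\big)$ on $\{V_t^\tau>K\}$. One inclusion is easy: on $\{\tau>t\}$ the data $Z$, $D$, $Y$ is recovered from $\tilde Z$, $\tilde D$ and the event itself. The reverse inclusion requires showing that $\tilde Z$ and $\tilde D$ stopped at $t$, \emph{restricted to the survival set}, carry no more information than $Z,D$ restricted there — which is immediate because they literally coincide path-by-path on $\{\tau>t\}$. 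I would carry this out carefully by fixing a generating class (cylinder sets of $Z$ on $[0,t]$, the jump times and sizes of $D$ up to $t$, and the event $\{\tau>t\}$), intersecting each with $\{\tau>t\}$, and checking the two trace $\sigma$-fields have the same such traces. A clean way to package this is to note $1_{\{\tau>t\}}X$ is $\F_t^\bbM$-measurable iff $1_{\{V_t^\tau>K\}}X$ is $\sigma(\F_t^{\tilde Z}\vee\F_t^{\tilde D},\{V_t^\tau>K\})$-measurable, for any bounded random variable $X$.

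Given the measurability identity, the rest is routine. Apply the Bayes-type conditioning formula with $\xi = h(V_t^\tau)$, $A=\{V_t^\tau>K\}$ and $\G_t'=\F_t^{\tilde Z}\vee\F_t^{\tilde D}$:
\[
1_{\{\tau>t\}}E^Q\big(h(V_t)\mid\F_t^\bbM\big)
= 1_{\{\tau>t\}}\,\frac{E^Q\big(h(V_t^\tau)1_{\{V_t^\tau>K\}}\mid\F_t^{\tilde Z}\vee\F_t^{\tilde D}\big)}{Q\big(V_t^\tau>K\mid\F_t^{\tilde Z}\vee\F_t^{\tilde D}\big)},
\]
where on the left I have used $V_t=V_t^\tau$ on $\{\tau>t\}$ to replace $h(V_t)$ by $h(V_t^\tau)$ inside the conditional expectation before applying the formula. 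The integrability hypothesis $E^Q(|h(V_t)|)<\infty$ guarantees $h(V_t^\tau)1_{\{V_t^\tau>K\}}=h(V_t)1_{\{\tau>t\}}$ is integrable, so both conditional expectations are well defined, and the denominator is a.s. strictly positive on $\{\tau>t\}=\{V_t^\tau>K\}$, so the ratio is well defined there. This yields \eqref{eq:reduction-to-background}. I would also remark that the point of the transformation is that $V^\tau$ is a Markov process that is genuinely observed through the diffusive channel $\tilde Z$ and the marked-point-process channel $\tilde D$ only — no default observation is needed — which is exactly the setting of the stopped-diffusion filtering results of \citeasnoun{bib:pardoux-78} invoked in the sequel.
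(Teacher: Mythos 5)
Your proof is correct, and the core idea coincides with the paper's: on $\{\tau>t\}$ the modelling filtration carries the same information as the background observation $\F_t^{\tilde Z}\vee\F_t^{\tilde D}$ together with the single event $\{V_t^\tau>K\}$, and a Bayes-type conditioning formula then produces the ratio. The decomposition of the argument is different, however. The paper proceeds in two separated steps: the identity $E^Q\big(h(V_t)\ind{\tau>t}\mid\F_t^\bbM\big)=E^Q\big(h(V_t^\tau)\ind{\tau>t}\mid\F_t^{\tilde Z}\vee\F_t^Y\big)$ is established in Appendix~B by a test-functional argument, passing through the intermediate filtration generated by the stopped observation $Z^\tau$, and then the Dellacherie formula for the progressive enlargement $\F_t^{\tilde Z}\vee\F_t^Y$ is invoked to obtain the ratio. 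You instead merge both steps into a single trace-$\sigma$-field argument and apply the elementary Bayes formula for conditioning on $\sigma(\G_t',A)$. This is a legitimate and arguably cleaner route: it avoids the detour through $\F_t^{Z^\tau}$, and it naturally keeps the dividend channel $D$ in play, whereas the paper drops it from the written proof for notational simplicity. The trade-off is that you cannot cite the Dellacherie formula as a ready-made lemma; the trace identity must be verified from scratch, and you correctly flag this as the technical core and sketch the right verification (cylinder-set generators; pathwise coincidence of $Z,D$ with $\tilde Z,\tilde D$ on $\{\tau>t\}$; triviality of the trace of $\F_t^Y$ there). Two details worth making explicit if you write this up: that the trace on a set of a generated $\sigma$-field equals the $\sigma$-field generated by the traces of the generators, which is what lets you pass from cylinder sets to the whole filtration; and that $Q\big(V_t^\tau>K\mid\F_t^{\tilde Z}\vee\F_t^{\tilde D}\big)>0$ a.s.\ on $\{V_t^\tau>K\}$, so the ratio is well defined --- the standard argument being $E\big[\ind{A}\ind{Q(A\mid\G')=0}\big]=E\big[Q(A\mid\G')\ind{Q(A\mid\G')=0}\big]=0$.
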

\begin{proof} For notational simplicity we ignore the dividend observation in the proof
so that $\bbF^{\bbM} = \bbF^{Z} \vee \bbF^Y$. The first step is to show that
\begin{equation} \label{eq:different-filtration-1}
 E^Q \big ( h(V_t) \ind{\tau >t} \mid \F_t^\bbM \big ) =
 E^Q \big ( h(V_t^\tau) \ind{\tau >t} \mid \F_t^{\tilde Z } \vee \F_t^Y \big )\,,
\end{equation}
where the filtration $\bbF^{\tilde Z}$ is generated by  the noisy observations of the \emph{stopped} asset value process;
the proof of this identity is given in  Appendix~B.

Second, using the Dellacherie formula (see for instance Lemma~3.1 in
\citeasnoun{bib:elliott-jeanblanc-yor-00}) and the relation $\{ \tau > t\} = \{V_t^\tau
> K \}$, we  get
\begin{align*}
 E^Q \big ( h(V_t^\tau) \ind{\tau >t} \mid \F_t^{\tilde Z } \vee \F_t^Y \big ) &=
 \ind{\tau >t} \frac{E^Q \big ( h(V_t^\tau ) \ind{\tau >t} \mid \F_t^{\tilde Z}  \big ) }{
 Q\big (\tau >t \mid \F_t^{\tilde Z} \big )}\\
 &=  \ind{\tau >t} \frac{E^Q \big ( h(V_t^\tau)\ind{V_t^\tau > K} \mid \F_t^{\tilde Z}  \big ) }{
 Q\big (V_t^\tau > K \mid \F_t^{\tilde Z} \big )} \,,
\end{align*}
as claimed.
\end{proof}
With the notation $f(v) := h(v) \ind{v >K}$, Proposition~\ref{prop:reduction-to-background} shows that in order to evaluate the right side of \eqref{eq:reduction-to-background} one needs to compute for generic $f :[K, \infty) \to \R$  such that $E^Q \big( |f(V_t^\tau)| \big) < \infty$ for all $0 \le t \le T$    conditional expectations of the
form
\begin{equation} \label{eq:new-filtering-problem-1}
E^Q \big ( f(V_t^\tau)  \mid \F_t^{\tilde Z} \vee \F_t^{\tilde D} \big )\,.
\end{equation}
This is a stochastic filtering problem with signal process given by $V^\tau$ (the asset value process stopped at the first exit time of the halfspace  $(K, \infty)$).
In the sequel we  study this  problem  using  results of \citeasnoun{bib:pardoux-78} on the  filtering of diffusions stopped at the first exit time of  some \emph{bounded} domain, first for the case without dividends and in Section~\ref{subsec:filtering-wrt-FI} for the general case.
In order to apply  the results of \citeasnoun{bib:pardoux-78}    we fix some large number $N$ and replace the unbounded  halfspace $(K, \infty)$ with the  bounded domain $(K,N)$. For this we define the stopping time  $\sigma_N = \inf \{ t \ge 0 \colon V_t \ge N\}$ and we replace the original asset value process $V$ with the stopped process $V^N := (V_{t \wedge \sigma_N})_{t\ge 0}$.  Applying
Proposition~\ref{prop:reduction-to-background} to the  process $V^{N}$ leads to
a filtering problem with signal process $X := (V^N)^\tau$. More precisely, one has to compute  conditional expectations of the form
\begin{equation}\label{eq:new-filtering-problem-2}
E^Q \big ( f(X_t)  \mid \F_t^{Z^N} \vee \F_t^{D^N} \big )
\end{equation}
where, with a slight abuse of notation,  ${Z}_t =\int_0^t{a}(X_s)ds+ {W}_t$ and
$ D_t =\sum_{ t_n\leq t} \delta_n (X_{t_n -} -K)^+.$
Note that    $\tau \wedge \sigma_N$ is the first exit time of $V$ from the domain $(K,N)$. Moreover, it holds  by definition  that $X_t =  V_{ t \wedge \tau \wedge \sigma_N}$, i.e.~$X$  is equal to   the asset value process $V$ stopped at the boundary of the bounded domain $(K,N)$.  Hence the state space of $X$ is given by $S^{X}: = [K,N]$ and the analysis of \citeasnoun{bib:pardoux-78} applies to the problem \eqref{eq:new-filtering-problem-2}.

In the next proposition we  show that the reduction to a bounded domain $(K,N)$, that is the use of the stopped process $V^N$ as underlying asset value process instead of the original process $V$,  does not affect the financial implications of the analysis, provided that $N$ is sufficiently large. In order to state the result we need to make the dependence of the model quantities on $N$ explicit: Let ${Z}_t^N =\int_0^t{a}(V_s^N)ds+ {W}_t$,  $D_t =\sum_{\{n\colon t_n\leq t\}} \delta_n (V^N_{t_n -} - K)^+$ and $\tau^N = \inf \{t \ge 0 \colon V_t^N \le K\}$, and denote by $\bbF^{\bbM,N}$ the modelling filtration in the model with asset value $V^N$, that is the filtration generated by $Z^N$, $D^N$ and by the default indicator $\ind{\tau^N \le t}$.

\begin{proposition} \label{prop:bounded-domain} 1. Fix some horizon date $T>0$ and let $\bbF$ be an arbitrary subfiltration of $\bbG$. Then  for $\epsilon >0$, it holds that
$$ Q \Big ( \sup_{0 \le t \le T} Q(\sigma_N \le t \mid \F_t ) > \epsilon \Big )  \le \frac{1}{\epsilon} Q( \sigma_N \le T) \to 0 \text{ as $N \to \infty$}. $$
2. The price process of the traded securities in the model with asset value process $V^N$ converges in ucp (uniformly on compacts in probability) to the price process in the model with asset value $V$. More precisely, consider  a function $h\colon [0,T] \times [K, \infty) \to \R $ such that $|h(t,v)| \le c_0 + c_1 v $. Then it holds that for $N \to \infty$,
\begin{equation} \label{eq:ucp-convergence}
 \sup_{0 \le t \le T}\Big |  \ind{\tau^N > t} E^Q \big ( h(t, V_t^N) \mid \F_t^{\bbM,N} \big)
 - \ind{\tau > t} E^Q \big ( h(t, V_t) \mid \F_t^{\bbM} \big)  \Big | \overset{Q}{\longrightarrow} 0\,.
 \end{equation}
\end{proposition}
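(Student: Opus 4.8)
\textbf{Part 1.} This is a direct consequence of the maximal inequality for supermartingales. The plan is to fix the subfiltration $\bbF \subseteq \bbG$ and to consider the $\bbF$-optional projection of the indicator $\ind{\sigma_N \le T}$, namely the process $M_t := Q(\sigma_N \le T \mid \F_t)$, which is a nonnegative $\bbF$-martingale on $[0,T]$ with terminal value $\ind{\sigma_N \le T}$ and hence $E^Q(M_0) = E^Q(\ind{\sigma_N \le T}) = Q(\sigma_N \le T)$. Since $\{\sigma_N \le t\} \subseteq \{\sigma_N \le T\}$ for $t \le T$ and $\ind{\sigma_N \le t}$ is $\F_t$-measurable when $\bbF = \bbF^\bbM$ (more generally one uses $Q(\sigma_N \le t \mid \F_t) \le Q(\sigma_N \le T \mid \F_t) = M_t$ by monotonicity of the conditioning event), we obtain $\sup_{0 \le t \le T} Q(\sigma_N \le t \mid \F_t) \le \sup_{0 \le t \le T} M_t$. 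Doob's maximal inequality for the nonnegative martingale $M$ then gives $Q(\sup_{0 \le t \le T} M_t > \epsilon) \le \frac{1}{\epsilon} E^Q(M_T) = \frac{1}{\epsilon} Q(\sigma_N \le T)$. Finally, $\sigma_N \to \infty$ a.s.\ as $N \to \infty$ because $V$ (having the dynamics \eqref{eq:dVt} with $V_0$ of finite second moment and bounded drift/volatility coefficients in the sense of a geometric Brownian motion between dividend dates, with only downward jumps) does not explode in finite time; hence $Q(\sigma_N \le T) \to 0$ by dominated convergence, completing Part~1.

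\textbf{Part 2.} The strategy is to control the difference in \eqref{eq:ucp-convergence} on the event $A_N := \{\sigma_N > T\}$ and off it separately. On $A_N$ the two asset value processes coincide up to time $T$: $V_t^N = V_t$ for all $t \le T$, hence $\tau^N = \tau$ on $A_N$, the observation processes $Z^N, D^N$ agree with $Z, D$ on $[0,T]$, and therefore the modelling filtrations satisfy $\F_t^{\bbM,N} = \F_t^\bbM$ up to the obvious identification on $A_N$. One must be slightly careful because conditional expectations are not determined pathwise on an event; the clean way is to first establish, via Part~1 applied with $\bbF = \bbF^{\bbM,N}$ (respectively $\bbF^\bbM$), that $\ind{\tau^N > t} E^Q(h(t,V_t^N)\mid \F_t^{\bbM,N})$ and $\ind{\tau > t} E^Q(h(t,V_t)\mid \F_t^\bbM)$ each differ, uniformly in $t \le T$ and with probability tending to one, from the corresponding quantity in which the signal has additionally been stopped at $\sigma_N$ and the filtration enlarged to include $\sigma_N$; on the enlarged, commonly-stopped problem the two expressions are literally equal. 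The linear growth bound $|h(t,v)| \le c_0 + c_1 v$ together with the $L^2$ (hence $L^1$) bound on $\sup_{t \le T} V_t$ — which follows from $V$ being a geometric Brownian motion with downward jumps and $V_0 \in L^2$, via Doob's $L^2$-inequality applied to the martingale part — provides the uniform integrability needed to pass from "equal with probability $\to 1$" to ucp convergence: writing $\Delta_t^N$ for the expression inside the absolute value in \eqref{eq:ucp-convergence}, we have $\Delta_t^N = 0$ on an event $B_N$ with $Q(B_N) \to 1$, and $|\Delta_t^N| \le 2\,E^Q(c_0 + c_1 V_t^{N,\tau} \mid \cdot) + 2\,E^Q(c_0 + c_1 V_t^\tau \mid \cdot)$ on $B_N^c$, so $Q(\sup_{t\le T}|\Delta_t^N| > \epsilon) \le Q(B_N^c) \to 0$, which is exactly ucp convergence.

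\textbf{Main obstacle.} The delicate point is the filtration issue in Part~2: $\F_t^{\bbM,N}$ and $\F_t^\bbM$ are \emph{not} equal as $\sigma$-fields — they only generate the same traces on the large-probability event $A_N$ — so one cannot naively say the two conditional expectations agree on $A_N$. I expect the technical heart of the argument to be the lemma that conditioning on $\F_t^\bbM$ versus conditioning on $\F_t^\bbM \vee \sigma(\sigma_N \wedge \text{(something)})$ changes the answer by a term that is uniformly small in probability — which is precisely where Part~1 is invoked — combined with the fact that on $\{\sigma_N > T\}$ the data feeding both filters are pathwise identical. Once that reduction is in place the linear-growth / $L^2$-moment bookkeeping is routine. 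An alternative, perhaps cleaner, route is to appeal to a general stability result for optional projections under ucp convergence of the underlying processes (the signals $V^N$ converge to $V$ in ucp by Part~1, and the observation functionals $a(\cdot)$, $(\cdot - K)^+$ are Lipschitz/continuous with linear growth), which sidesteps the explicit filtration identification; I would present the event-based argument as the primary proof and mention the stability-result viewpoint as a remark.
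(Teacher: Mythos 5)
Your argument is correct and matches the paper's in spirit. The paper applies the submartingale maximal inequality directly to $F_t^{\sigma_N}=Q(\sigma_N\le t\mid\F_t)$; you dominate it by the martingale $M_t=Q(\sigma_N\le T\mid\F_t)$ and apply Doob's inequality to $M$. Same idea, equally valid.

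\textbf{Part 2.} Here there is a genuine gap, and it is exactly at the point you yourself flag as ``the delicate point'' and then do not resolve. Your plan rests on the claim that on some event $B_N$ with $Q(B_N)\to 1$ the difference $\Delta_t^N$ \emph{vanishes identically}. That claim is not justified and is in general false: even though the paths of $(V,Z,D,Y)$ and $(V^N,Z^N,D^N,Y^N)$ agree on $\{\sigma_N>T\}$, the $\sigma$-fields $\F_t^{\bbM,N}$ and $\F_t^\bbM$ are genuinely different, and a conditional expectation evaluated at an $\omega\in\{\sigma_N>T\}$ still averages over $\omega'$ lying in the same conditioning atom but in $\{\sigma_N\le t\}$, where the integrands and the filtrations disagree. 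So there is no pathwise identity on $\{\sigma_N>T\}$, only closeness. Consequently the ``$\Delta_t^N=0$ on $B_N$, crude bound off $B_N$'' reduction does not get off the ground, and the ``enlarge the filtration by $\sigma(\sigma_N\wedge\cdots)$'' step you sketch is left entirely unproved; it is not routine. You also mention as an alternative a general stability result for optional projections under ucp convergence of the signals, but no such general continuity theorem holds (optional projection is not continuous in ucp without additional structure), so that route would also need a genuine argument.

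The paper closes this gap by a device you do not use: the reference-probability (Kallianpur--Striebel) representation. Writing both conditional expectations as quotients of the form $E^{Q_1}\big(h_t^N L_t^N(\cdot,\omega_2)\big)\big/E^{Q_1}\big(\ind{X_t^N>K}L_t^N(\cdot,\omega_2)\big)$ on the product space $\Omega_1\times\Omega_2$, with the observation path $\omega_2$ held fixed, turns the comparison into a comparison of $\Omega_1$-\emph{integrals}. At that level the pathwise intuition is valid: on $\{\sigma_N>t\}\subset\Omega_1$ the integrands $f_t^N L_t^N$ and $f_t L_t$ literally coincide, so the difference of the integrals is bounded by $E^{Q_1}\big(\ind{\sigma_N\le t}(|f_t^N|L_t^N+|f_t|L_t)\big)$, and the paper then finishes by the uniform-integrability bound obtained from $E^{Q^*}\big(\sup_{t\le T}(\bar V_t L_t)^2\big)<\infty$ together with $\sigma_N\to\infty$, and by a quantitative $\epsilon$--$\delta$ comparison of the quotients. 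So the linear-growth/$L^2$ bookkeeping you describe does appear in the paper's proof, but only after the Kallianpur--Striebel reduction; without that reduction the bookkeeping has nothing to act on. To repair your proposal you would need to import this reference-probability representation (or some equally effective replacement for the missing pathwise identity) before the moment estimates become relevant.
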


\noindent The proof of the proposition is given in Appendix~B.

We continue with a few comments. Denote by $\bar \sigma_N = \inf\{t \ge 0: \bar{V}_t>N\}$ the first exit time of the cum-dividend asset value process from $(0, N)$. Clearly, $\bar \sigma_N \le \sigma_N$ as $V_t \le \bar{V_t}$ and thus $Q( \sigma_N \le T) \le Q(\bar \sigma_N \le T)$. Hence the conditional probability that $V$ reaches the upper boundary $N$ is  controlled uniformly for all subfiltrations $\bbF$ of $\bbG$ by the first exit time of a geometric Brownian motion from $(0,N)$; this can be used to choose $N$ when implementing of the model.
The ucp convergence  in the second statement ensures that the difference between the prices of traded securities in the  model based on $V^N$ and in the original model can be controlled uniformly in $t\in[0,T]$ which is stronger  than convergence in probability for fixed $t$.

\subsection{The case without dividends} \label{subsec:no-dividends}

In this section we consider the filtering problem \eqref{eq:new-filtering-problem-2} without  dividend
information; dividends will be included in Section~\ref{subsec:filtering-wrt-FI}.

\paragraph{Reference probability approach and Zakai equation.} As in \citeasnoun{bib:pardoux-78} we adopt the reference probability
approach to solve the problem \eqref{eq:new-filtering-problem-2}. Under this
approach one considers the model under a  so-called reference probability measure  $Q^*$ with $Q <\!\!< Q^*$  such that $Z$ and $X$ are independent under $Q^*$ and one reverts to the original dynamics via a change of measure. It will be convenient to model the pair $(X,Z)$ on a product space  $(\Omega,
\mathcal{G}, \bbG, Q^*)$. Denote by $(\Omega_2, \mathcal{G}^2,
\bbG^2,  Q_2)$ some filtered probability space that supports an $l$-dimensional Wiener process $Z = (Z_t(\omega_2))_{t \ge 0}$. Given some probability space  $(\Omega_1, \mathcal{G}^1,
\bbG^1 , Q_1)$ supporting the  process $X$  we let
$\Omega=\Omega_1\times \Omega_2$, $\mathcal{G} =\mathcal{G}^1 \otimes \mathcal{G}^2$,
$\bbG = \bbG^1 \otimes \bbG^2$ and  $Q^*= Q_1\otimes Q_2$, and we extend all processes to
the product space in the obvious way. Note that this construction implies that under
$Q^*$, $Z$ is an $l$-dimensional Brownian motion independent of $X$. Consider a
Girsanov-type measure transform of the form $L_t = ({d Q}/{dQ^*}){\vert_{\F_t}}$  with
\begin{equation}\label{eq:def-Lt1}
L_t =  L_t(\omega_1, \omega_2)=\exp \Big (\int_0^t a\big (X_s(\omega_1)\big)^\top d
Z_s(\omega_2) -\frac{1}{2}\int_0^t|a\big (X_s(\omega_1)\big)|^2 \,ds \Big )\,.
\end{equation}
Since $a$ is bounded $L$ is a true martingale by the Novikov criterion.
Girsanov's theorem for Brownian motion therefore  implies that under $ Q$ the pair $(X,Z)$
has the correct joint law. Using the abstract Bayes formula, one has for $f\in L^\infty
(S^X)$ that
\begin{equation} \label{eq:kallianpur-striebel}
E^{ Q} \big ( f(X_t) \mid \F_t^Z \big) = \frac{E^{Q^*} \big ( f(X_t)
L_t \mid \F_t^Z \big)}{E^{Q^*}\big(L_t \mid  \F_t^Z )}\,.
\end{equation}
We concentrate on the numerator. Using the product structure of the underlying
probability space we get that
\begin{equation} \label{eq:def-Sigma_t-1}
E^{Q^*} \big ( f(X_t) L_t \mid \F_t^Z \big)(\omega) = E^{Q_1} \big ( f(X_t)
L_t(\cdot, \omega_2)\big)=:\Sigma_t f  (\omega)\,.
\end{equation}
In Theorem~1.3 and 1.4 of \citeasnoun{bib:pardoux-78} the following characterization of
$\Sigma_t$ is derived.
\begin{proposition}\label{prop:Sigma-t}
Denote by $(T_t)_{t \ge 0}$ the transition semigroup of the Markov process $X$, that is
for $f \in L^\infty(S^X)$ and $x \in S^X$, $T_t f(x) = E_x^Q(f(X_t))$.  Then the following holds

1.)  $\Sigma_t f $ as defined in \eqref{eq:def-Sigma_t-1} satisfies the equation
\begin{equation} \label{eq:Zakai-mild-form}
\Sigma_t f = \Sigma_0 (T_t f) + \sum_{i=1}^l \int_0^t \Sigma_s (a^i \, T_{t-s} f)\, d Z_{s,i}
\end{equation}

2.)  Let $\tilde \Sigma$ be an $\bbF^Z$ adapted process taking values in the set of
    bounded and positive measures on $S^X$. Suppose that  for $f \in L^\infty(S^X)$
    $\tilde \Sigma_t f := \int_{S^X} f(x) \tilde \Sigma_t (dx)$ satisfies
    equation~\eqref{eq:Zakai-mild-form} and that  moreover  $\Sigma_0  = \tilde
    \Sigma_0 $. Then for all $0 \le t \le T$,  $\Sigma_t  =  \tilde \Sigma_t$ a.s.
\end{proposition}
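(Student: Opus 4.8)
The proposition is precisely the specialization of Theorems~1.3 and 1.4 of \citeasnoun{bib:pardoux-78} to the present data, so the economical route is to check that the hypotheses there are met and invoke those theorems: $X$ is the diffusion $V^N$ (in the no-dividend case a geometric Brownian motion with generator $\mathcal L$ in the interior) absorbed at the boundary of the \emph{bounded} domain $(K,N)$; $a$ is bounded and continuously differentiable with $a(K)=0$; and, by the product construction, $Z$ is an $l$-dimensional Brownian motion independent of $X$ under $Q^*=Q_1\otimes Q_2$. For completeness I would also record the short argument behind part~1 and the iteration behind part~2, which is what the cited theorems contain.

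\textbf{Part 1 (mild Zakai equation).} Fix $t\le T$ and first take $f\in C^2_c(K,N)$, so that $(u,x)\mapsto T_uf(x)$ is smooth on $[0,T]\times S^X$ (interior parabolic regularity together with the absorbing boundary values, which vanish for compactly supported $f$). Set $N_s:=(T_{t-s}f)(X_s)$; by the backward Kolmogorov equation $\partial_s T_{t-s}f=-\mathcal L\,T_{t-s}f$ the drift in the It\^o expansion of $N_s$ cancels, so $N_s=E^{Q_1}\big(f(X_t)\mid\mathcal G_s^1\big)$ is a $Q_1$-martingale, driven by $B$. Since $L_s=1+\int_0^s L_u\,a(X_u)^\top dZ_u$ is driven by $Z$ and $B\perp Z$ under $Q^*$, the quadratic covariation $[N,L]$ vanishes, and the product rule gives $d(N_sL_s)=L_s\,dN_s+N_sL_s\,a(X_s)^\top dZ_s$. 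Taking $E^{Q_1}$ annihilates the first term (it is a $Q_1$-martingale, using the smoothness of $T_uf$ and the moment bounds on $L$ from the Novikov estimate), and a partial stochastic Fubini theorem moves $E^{Q_1}$ inside the $Z$-integral in the second term, yielding $\phi(t)-\phi(0)=\sum_{i=1}^l\int_0^t\Sigma_s(a^i\,T_{t-s}f)\,dZ_{s,i}$ with $\phi(s):=\Sigma_s(T_{t-s}f)=E^{Q_1}\big((T_{t-s}f)(X_s)L_s\big)$; since $\phi(t)=\Sigma_t f$ and $\phi(0)=\Sigma_0(T_tf)$ this is \eqref{eq:Zakai-mild-form}. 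The passage to arbitrary $f\in L^\infty(S^X)$ is a monotone-class argument, using that both sides of \eqref{eq:Zakai-mild-form} depend continuously on $f$ and the uniform bound $\sup_{s\le T}E^{Q^*}\big[(\Sigma_s\mathbf 1)^2\big]\le\sup_{s\le T}E^{Q^*}[L_s^2]<\infty$.

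\textbf{Part 2 (uniqueness).} Let $\tilde\Sigma$ be a second $\bbF^Z$-adapted, bounded-positive-measure-valued process satisfying \eqref{eq:Zakai-mild-form} with $\tilde\Sigma_0=\Sigma_0$. Fix $t\le T$ and $f\in L^\infty(S^X)$ and substitute \eqref{eq:Zakai-mild-form} into its own integrand repeatedly; after $n$ iterations both $\Sigma_t f$ and $\tilde\Sigma_t f$ are represented by the \emph{same} sum of $n$-fold iterated It\^o integrals against $Z$ plus a remainder whose integrand is again of the form $\Sigma_\cdot(\,\cdot\,)$, resp.\ $\tilde\Sigma_\cdot(\,\cdot\,)$. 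By the It\^o isometry, $\|a\|_\infty<\infty$, and the uniform second-moment bound above (and its analogue for $\tilde\Sigma$), the $L^2(Q^*)$-norm of the difference of the two remainders is bounded by $(ct)^n/n!$ for a constant $c$ depending only on $\|a\|_\infty,l,\|f\|_\infty$ and $\sup_{s\le T}E^{Q^*}[(\Sigma_s\mathbf 1)^2]$, which tends to $0$ as $n\to\infty$. Hence $\Sigma_t f=\tilde\Sigma_t f$ a.s.\ for every fixed $t\le T$ and $f\in L^\infty(S^X)$; applying this along a countable measure-determining family of $f\in C(S^X)$ and a countable dense set of times and using the path regularity of $\Sigma$ and $\tilde\Sigma$ gives $\Sigma_t=\tilde\Sigma_t$ a.s.\ for all $t\le T$.

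\textbf{Main obstacle.} The delicate point is in part~1: justifying It\^o's formula for $(T_{t-s}f)(X_s)$ — which needs regularity of the killed/absorbed semigroup up to the boundary of $(K,N)$, available for smooth compactly supported $f$ and then transferred by approximation — and the partial (conditional) stochastic Fubini interchange of $E^{Q_1}$ with the $dZ$-integral. These are exactly the technical content of \citeasnoun{bib:pardoux-78}, which is why verifying its hypotheses and citing Theorems~1.3–1.4 is the cleanest way to conclude.
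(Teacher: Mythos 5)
Your proposal takes exactly the same route as the paper: Proposition~\ref{prop:Sigma-t} is a direct application of Theorems~1.3 and~1.4 of \citeasnoun{bib:pardoux-78}, and the paper itself gives no independent proof but simply cites those results. Your verification of the hypotheses (bounded domain $(K,N)$, bounded $C^1$ coefficient $a$, independence of $X$ and $Z$ under $Q^*$ via the product construction) and the accompanying sketch of the Pardoux argument---the martingale $(T_{t-s}f)(X_s)$ for smooth compactly supported $f$, the vanishing covariation $[N,L]=0$, the stochastic Fubini interchange, the monotone class passage to $L^\infty$, and the Picard-type iteration with $(ct)^n/n!$ remainder for uniqueness---are accurate and faithful to the cited source.
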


%

\paragraph{An SPDE for the Density of $\Sigma_t$.}
Next we derive an SPDE for the density $u = u(t,\cdot)$ of the solution
$\Sigma_t$ of the Zakai equation~\eqref{eq:Zakai-mild-form}. We begin with the necessary
notation. First, we introduce the Sobolev spaces
$$
H^k(S^X) =\big \{ u \in L^2(S^X) \colon
\frac{d^\alpha u}{dx^\alpha}  \in L^2(S^X) \mbox{ for } \alpha \leq k \big \},
$$
where the derivatives are assumed to exist in the weak sense. Moreover, we let  $
H^1_0(S^X)=\{ u \in H^1(S^X) \colon u=0 \text{ on the boundary }
\partial S^X \}$. For precise definitions and further details on Sobolev spaces we refer to \citeasnoun{bib:adams-fournier-03}.   The scalar product
in $L^2(S^X)$ is denoted by $(\cdot\,,\cdot)_{S^X}$.
Consider for $f \in H^2 (S^X) $ the differential operator $\mathcal{L}^*$
with
\begin{equation} \label{eq:def-Lstar}
\mathcal{L}^* f (x) = \frac{1}{2} \frac{d^2}{dx^2} \big( \sigma^2 x^2 f\big) (x) - \frac{d}{dx}\big( r x f \big)(x).
\end{equation}
$\mathcal{L}^*$ is adjoint to $\mathcal{L}$ in the following sense: one has $\big ( f ,
\mathcal{L} g \big)_{S^X} = \big ( \mathcal{L}^* f ,g \big)_{S^X}${ whenever } $f,g \in
H^2 (S^X) \cap H^1_0(S^X)$.
Next we define an extension of $-\mathcal{L}^*$ to the entire space $H^1_0(S^X)$. For
this we denote by $H^1_0(S^X)'$ the dual space of $H^1_0(S^X)$ and by $\langle
\cdot,\cdot \rangle$ the duality pairing between $H^1_0(S^X)^\prime$ and $H^1_0(S^X)$.
Then we may define a bounded linear  operator $\mathcal{A}^*$ from $H^1_0(S^X)$ to
$H^1_0(S^X)^\prime $ by
\begin{equation} \label{eq:defA*}
\langle \mathcal{A}^* f, g \rangle = \frac{1}{2} \big (\sigma^2 x^2 \frac{df}{dx}\,,
 \frac{dg}{dx} \big )_{S^X} + \big( (\sigma^2 -r)x f\,,\,\frac{dg}{dx} \,\big
 )_{S^X}\,.
\end{equation}
Partial integration shows  that for $f \in H^2(S^X) \cap H^1_0(S^X)$ and $g\in
H^1_0(S^X)$ one has $\langle \mathcal{A}^* f, g \rangle = - \big ( \mathcal{L}^* f ,g
\big)_{S^X}$, so that $\mathcal{A}^*$ is in fact an extension of $-\mathcal{L}^*$.

We will show that the density of $\Sigma_t$ can be described in terms of the SPDE
\begin{equation} \label{eq:du-t-1}
du (t) = - \mathcal{A}^*u(t) dt + a^\top u(t) dZ_t, \quad u(0) = \pi_0,
\end{equation}
 This equation is to be understood as an equation in the dual space
$H^1_0(S^X)^\prime$, that is for every $v \in H^1_0(S^X)$ one has the relation
\begin{equation} \label{eq:d(u,v)-t-1}
\big ( u(t),v \big )_{S^X} = (u(0),v\big)_{S^X} - \int_0^t \langle \mathcal{A}^* u(s), v \rangle ds + \sum_{i=1}^l \int_0^t \big ( a^i u(s),v \big )_{S^X}
dZ_{s,i} \,.
\end{equation}
In the sequel we will mostly denote the stochastic integral with respect to the vector process $Z$ by   $ \int_0^t \big ( a^\top u(s),v \big )_{S^X} dZ_s $.

\begin{theorem} \label{thm:SPDE-for-u}
Suppose that Assumptions~\ref{ass:assets-and-dividends} and~\ref{ass:information} hold
and that the initial density $\pi_0$ belongs to $H^1_0(S^X)$. Then the following holds.
\vspace{0.05cm}

\noindent 1.~There is a unique $\bbF^Z$-adapted  solution $u \in L^2\big(\Omega \times
    [0,T], Q^*\otimes dt;  H^1_0(S^X)\big) $ of~\eqref{eq:du-t-1}.
\vspace{0.05cm}

\noindent 2.~The solution  $u$ has additional regularity: it holds that $u(t) \in H^2 (S^X)$ a.s.~and that the trajectories of $u$ belong to $C\big([0,T],H^1_0(S^X)\big)$, the space of $H^1_0(S^X)$-valued continuous functions with the supremum norm.
Moreover, $u(t,\cdot)\ge 0$ $Q^*$ a.s.
\vspace{0.05cm}

\noindent 3. The process $u(t)$  describes the  solution of the measure-valued
    Zakai equation \eqref{eq:Zakai-mild-form} in the following sense:  for $f \in L^\infty(S^X)$ one has
\begin{align}\label{eq:sigmat-in-terms-of-u}
\Sigma_t f &= \big ( u(t), f\big )_{S^X} + \nu_K (t) f(K) + \nu_N(t) f(N),\;  \text{ where }\\
0 &\le \nu_K(t) =  \int_0^t \frac{1}{2}\sigma^2 K^2 \frac{du}{dx}(s,K) ds , \\
0 &\le \nu_N(t)= -\int_0^t \frac{1}{2}\sigma^2 N^2 \frac{du}{dx}(s,N) ds + \int_0^t
a^\top(N) \nu_N(s) dZ_s \,.\label{eq:dynamics-nuN}
\end{align}
\end{theorem}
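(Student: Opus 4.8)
The plan is to treat \eqref{eq:du-t-1} as a linear stochastic evolution equation in the variational (Gelfand-triple) framework, to bootstrap its regularity, and finally to identify the measure with Lebesgue density $u(t,\cdot)$ and point masses $\nu_K(t),\nu_N(t)$ at $K$ and $N$ with the solution $\Sigma_t$ of the measure-valued Zakai equation of Proposition~\ref{prop:Sigma-t}. For Part~1 I would use the triple $H^1_0(S^X)\hookrightarrow L^2(S^X)\hookrightarrow H^1_0(S^X)'$. Since $S^X=[K,N]$ is bounded away from the origin, $\sigma^2x^2\ge\sigma^2K^2>0$ there, so $\mathcal{L}^*$ is uniformly elliptic; integrating by parts the first-order term in \eqref{eq:defA*} yields, for $\phi\in H^1_0(S^X)$,
\[
\langle\mathcal{A}^*\phi,\phi\rangle=\tfrac12\bigl(\sigma^2x^2\tfrac{d\phi}{dx},\tfrac{d\phi}{dx}\bigr)_{S^X}-\tfrac12(\sigma^2-r)\norm{\phi}_{L^2(S^X)}^2\ \ge\ \tfrac12\sigma^2K^2\,\norm{\tfrac{d\phi}{dx}}_{L^2(S^X)}^2-\tfrac12(\sigma^2-r)\norm{\phi}_{L^2(S^X)}^2 .
\]
Together with the Poincaré inequality on the bounded interval this is the Gårding coercivity estimate; the noise coefficient $\phi\mapsto(a^1\phi,\dots,a^l\phi)$ is a bounded multiplication operator on $L^2(S^X)$ because $a$ is bounded, and its contribution is dominated by the coercivity gap. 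The standard well-posedness theorem for linear variational SPDEs then gives the unique $\bbF^Z$-adapted solution $u\in L^2(\Omega\times[0,T],Q^*\otimes dt;H^1_0(S^X))$ with continuous trajectories in $L^2(S^X)$.

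For Part~2 I would establish the $H^2$-bound by a (Galerkin-based) energy estimate: Itô's formula for the $H^1_0$-energy of $u$ — equivalently, a formal test of the equation against $\mathcal{L}^*u$ — together with the elliptic a priori bound $\norm{\phi}_{H^2}\le c\bigl(\norm{\mathcal{L}^*\phi}_{L^2}+\norm{\phi}_{L^2}\bigr)$ on $H^2(S^X)\cap H^1_0(S^X)$ gives $\Exp^{Q^*}\!\int_0^T\norm{u(t)}_{H^2(S^X)}^2\,dt<\infty$, so $u(t)\in H^2(S^X)$ for a.e.\ $t$, $Q^*$-a.s.; continuity of the trajectories in $H^1_0(S^X)$ then follows from the standard higher-order regularity theory for linear SPDEs, again exploiting uniform ellipticity and the purely zeroth-order nature of the noise. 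Positivity I would prove directly at the level of \eqref{eq:d(u,v)-t-1}: approximating $x\mapsto(x^-)^2$ by smooth convex functions and applying Itô's formula to $t\mapsto\norm{u^-(t)}_{L^2(S^X)}^2$ with $u^-:=\max(-u,0)\in H^1_0(S^X)$, the boundary terms in $\langle\mathcal{A}^*u,u^-\rangle$ vanish because $u^-$ lies in $H^1_0(S^X)$; one is left with $\langle\mathcal{A}^*u(t),u^-(t)\rangle\le\tfrac12(\sigma^2-r)^+\norm{u^-(t)}_{L^2}^2$ and an Itô correction bounded by $l\,(\sup_v\abs{a(v)})^2\,\norm{u^-(t)}_{L^2}^2$, so the drift in the Itô formula for $\norm{u^-(t)}_{L^2}^2$ is $\le c\,\norm{u^-(t)}_{L^2}^2$, while $u^-(0)=\pi_0^-=0$ since $\pi_0\ge0$; Gronwall's lemma then forces $\Exp^{Q^*}\norm{u^-(t)}_{L^2}^2=0$, i.e.\ $u(t,\cdot)\ge0$ $Q^*$-a.s.

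For Part~3 note first that by Part~2 the boundary derivatives $\tfrac{du}{dx}(s,K)$ and $\tfrac{du}{dx}(s,N)$ are well defined for a.e.\ $s$ (since $H^2$ of an interval embeds into $C^1$), so $\nu_K,\nu_N$ in \eqref{eq:sigmat-in-terms-of-u}--\eqref{eq:dynamics-nuN} make sense; as $u(s,K)=u(s,N)=0$ and $u\ge0$, one has $\tfrac{du}{dx}(s,K)\ge0\ge\tfrac{du}{dx}(s,N)$, so $\nu_K(t)\ge0$, and solving the linear SDE \eqref{eq:dynamics-nuN} by variation of constants gives $\nu_N(t)\ge0$. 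Define the $\bbF^Z$-adapted process $m$, taking values in the bounded positive measures on $S^X$, by $m_tf:=(u(t),f)_{S^X}+\nu_K(t)f(K)+\nu_N(t)f(N)$ for $f\in L^\infty(S^X)$. For $f\in C^2([K,N])$ I would read \eqref{eq:du-t-1}, using the $H^2$-regularity, as the $L^2(S^X)$-valued equation $du(t)=\mathcal{L}^*u(t)\,dt+a^\top u(t)\,dZ_t$ and test it against $f$; two integrations by parts, using $u(t,K)=u(t,N)=0$, give
\[
\bigl(\mathcal{L}^*u(t),f\bigr)_{S^X}=\bigl(u(t),\mathcal{L}f\bigr)_{S^X}+\tfrac12\sigma^2N^2\tfrac{du}{dx}(t,N)\,f(N)-\tfrac12\sigma^2K^2\tfrac{du}{dx}(t,K)\,f(K).
\]
Integrating in time and adding $\nu_K(t)f(K)+\nu_N(t)f(N)$, the two boundary integrals cancel exactly the drift parts of $\nu_K(t)f(K)$ and $\nu_N(t)f(N)$, the stochastic part of $\nu_N(t)f(N)$ supplies precisely the atom-at-$N$ contribution to the $dZ$-integrand, and the atom at $K$ contributes nothing there because $a(K)=0$; what remains is the weak-form identity $m_tf=m_0f+\int_0^t(u(s),\mathcal{L}f)_{S^X}\,ds+\sum_{i=1}^l\int_0^t m_s(a^if)\,dZ_{s,i}$ with $m_0=\Sigma_0=\pi_0\,dx$. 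Since $(u(s),\mathcal{L}f)_{S^X}=m_s(\mathcal{L}^Xf)$ — the generator $\mathcal{L}^X$ of the stopped signal $X$ acting as $\mathcal{L}$ on the interior of $[K,N]$ and as $0$ at the absorbing end-points — this weak form is, after passing to the mild formulation via the transition semigroup $(T_t)$ of $X$, the Zakai equation \eqref{eq:Zakai-mild-form}. As $m$ is $\bbF^Z$-adapted and positive-measure-valued, the uniqueness statement in Proposition~\ref{prop:Sigma-t} gives $m_t=\Sigma_t$ $Q^*$-a.s.\ for all $t\le T$, which is \eqref{eq:sigmat-in-terms-of-u}.

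The hard part will be Part~2: the basic variational theory delivers only continuity in the pivot space $L^2(S^X)$, and the upgrade to $H^2$-in-space and continuity in $H^1_0(S^X)$ — exactly what makes the boundary traces meaningful and legitimizes the integrations by parts of Part~3 — requires the more delicate higher-order regularity theory, in which the uniform ellipticity of $\mathcal{L}^*$ on the bounded domain $[K,N]$ and the purely zeroth-order observation coefficient $a$ are both crucial. A secondary point to watch is that $X$ is a \emph{stopped} diffusion, so its generator equals $\mathcal{L}$ only in the interior of $[K,N]$; this is why the atoms $\nu_K,\nu_N$ do not enter the $dt$-term of the Zakai equation, and why only $\nu_N$ (not $\nu_K$, since $a(K)=0$) enters its $dZ$-term.
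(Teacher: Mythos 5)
Your plan is correct and lives in the same mathematical world as the paper's proof: the variational/Gelfand-triple SPDE framework, coercivity of $\mathcal{A}^*$ on the bounded interval, $H^2$ interior regularity, Sobolev embedding to make sense of the boundary traces $\frac{du}{dx}(s,K)$, $\frac{du}{dx}(s,N)$, and an integration-by-parts duality argument to identify the atoms $\nu_K,\nu_N$. For Parts 1 and 2 the paper simply cites Theorems 2.1, 2.3 and 2.6 of Pardoux (1978) rather than re-deriving the coercivity estimate, the $H^2$ bound, or positivity, so here you are supplying the details behind a citation; your Gårding computation and your $\|u^-\|_{L^2}^2$-plus-Gronwall positivity argument are both consistent with Pardoux's results.

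The one genuine difference is in Part 3. You integrate the forward SPDE against a fixed test function $f$, cancel the boundary terms against the drift parts of $\nu_K f(K)$ and $\nu_N f(N)$, and arrive at the weak-form Zakai identity
$m_t f = m_0 f + \int_0^t m_s(\mathcal{L}_X f)\,ds + \sum_i \int_0^t m_s(a^i f)\,dZ_{s,i}$,
then assert that "passing to the mild formulation via the transition semigroup" finishes the argument. The paper takes the dual route: it fixes $f$ and $t$, solves the \emph{backward} terminal-boundary value problem $\bar u_s + \mathcal{L}\bar u = 0$ with $\bar u(t)=f$ and $\bar u(s,K)=f(K)$, $\bar u(s,N)=f(N)$ (so that $\bar u(s)=T_{t-s}f$), applies the It\^o product formula to $(u(s),\bar u(s))_{S^X}$ together with the dynamics of $\nu_K,\nu_N$, and lands directly on the mild form $\tilde\Sigma_t f=\tilde\Sigma_0(T_tf)+\int_0^t\tilde\Sigma_s(a^\top T_{t-s}f)\,dZ_s$, which is exactly what the uniqueness statement of Proposition~\ref{prop:Sigma-t} requires. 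The backward-PDE trick is essentially the content of the "weak $\Rightarrow$ mild" equivalence that you gesture at; in this setting (stopped diffusion, time-inhomogeneous test function) it is not a pure triviality, so the paper's choice to work with $T_{t-s}f$ from the start is both cleaner and self-contained, whereas your route would still need to prove this equivalence to close the loop. A small further difference: your argument for $\nu_N\ge 0$ uses variation of constants on the linear SDE, the paper uses the SDE comparison theorem; both are fine.
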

\paragraph{Comments.}  Since $u(t) $ belongs to  $H^2(S^X)\cap H^1_0(S^X)$,
\eqref{eq:d(u,v)-t-1} can be written as
\begin{equation} \label{eq:d(u,v)-t-2}
\big ( u(t),v \big )_{S^X} = (u(0),v\big)_{S^X} + \int_0^t \big ( \mathcal{L}^* u(s) , v \big)_{S^X}  ds
+ \int_0^t \big ( a^\top u(s),v \big )_{S^X}\, dZ_s;
\end{equation}
moreover, an approximation argument shows that \eqref{eq:d(u,v)-t-2} holds for $v \in L^2(S^X)$ (and not only for $v \in H_0^1(S^X)$).

 Statement 3 shows that the measure $\Sigma_t$ has a Lebesgue-density on the interior of
$S^X$ and a point mass on the boundary points $K$ and $N$. In view of
Proposition~\ref{prop:bounded-domain}, the point mass $\nu_N(t)$ is largely irrelevant;
the point mass $\nu_K(t)$ on the other hand will be important in the analysis of the
default intensity in Section~\ref{sec:dynamics}.

The assumption that $S^X$ is a bounded domain is needed in the proof of Statement 2; given the existence of a sufficiently regular nonnegative solution of equation \eqref{eq:du-t-1} the proof of Statement 3 is valid for an unbounded domain as well.

\begin{proof}
Statements~1 and 2 follow directly from Theorems~2.1, 2.3 and 2.6 of
\citeasnoun{bib:pardoux-78}. We give a sketch of the proof of the third claim,  as this  explains why
\eqref{eq:du-t-1}  is the appropriate SPDE to consider; moreover our arguments justify
the form of  $\nu_K$ and $\nu_N$.

The Sobolev  embedding  theorem (see for instance \citeasnoun{bib:adams-fournier-03}, Theorem~4.12, Part II and III) states that the space $H^m (S^X):= H^{m,2}(S^X)$ can be embedded into the H\"older space $\mathcal{C}^{k,\alpha}(S^X)$ for any $k\in \N, 0< \alpha<1$ such that $m-1/2 \ge k+\alpha$. It follows that $H^2(S^X)$ can be embedded into $\mathcal{C}^{1,\alpha}(S^X)$ for $0< \alpha<1/2$;
this ensures in particular that  the derivatives of $u$ at the boundary points of $S^X$ exist. Moreover, as $u(t,x) \ge 0$ on
$S^X$, we have $\frac{du}{dx}(t,K) \ge 0$ and thus $\nu_K(t) \ge 0$. Similarly, as $\frac{du}{dx}(t,N) \le 0$ we get from the standard comparison theorem for SDEs that
$\nu_N(t)$ is bigger than the solution $\tilde \nu$ of the SDE $\tilde \nu_t = \int_0^t a^\top(N) \tilde \nu_s dZ_s$. Now $\tilde \nu$ is clearly equal to zero so that  $\nu_N(t) \ge 0$ as well.

Denote by $\tilde \Sigma_t$ the measure-valued
process that is defined by the right side of \eqref{eq:sigmat-in-terms-of-u}. In order to
 show that $ \tilde \Sigma_t$ solves the mild-form Zakai equation
\eqref{eq:Zakai-mild-form}, fix some continuous function $f\colon S^X \to \R $ and some
$t \le T$, and denote by $\bar u(s,x)$ the solution of the terminal and boundary value problem
$$
\bar u_s + \mathcal{L} \bar u =0, \quad (s,x) \in (0,t) \times (K,N),
$$
with terminal  condition $\bar u(t,x)= f(x)$, $x \in S^X$, and boundary conditions $u(s,K) = f(K)$, $u(s,N) = f(N)$, $s \le t$. It is well-known that $\bar u$ describes the transition semigroup of
$X$, that is $\bar u(s,x) = T_{t-s} f(x)$, $0 \le s \le t$. As $\bar u(t) = f$ we obtain from the definition of $\tilde \Sigma_t $ and   the dynamics of $\nu_K(t)$ and  $\nu_N(t)$ that
\begin{align*}
\tilde \Sigma_t f =
 \big ( u(t) ,\bar u(t) \big )_{S^X} &+   \int_0^t \frac{1}{2}\sigma^2 K^2 \frac{du}{dx}(s,K)f(K) ds\\
 &- \int_0^t \frac{1}{2}\sigma^2 N^2 \frac{du}{dx}(s,N) f(N)ds + \int_0^t a^\top(N) \nu_N(s) f(N) \,dZ_s\,.
\end{align*}
Next we compute the differential of $ \big( u(t) ,\bar u(t) \big )_{S^X}$. We get, using the Ito product formula, \eqref{eq:d(u,v)-t-2} and the relation $d \bar u(s) = - \mathcal{L} \bar u(s) ds $, that
\begin{align*}
\big ( u(t),\bar u(t) \big )_{S^X} = (u(0), \bar u (0) \big)_{S^X} &+ \int_0^t \big ( \mathcal{L}^* u(s) , \bar u(s) \big)_{S^X}  ds
+ \int_0^t \big ( a^\top u(s),\bar u(s) \big )_{S^X} dZ_s\\
&+ \int_0^t \big ( u(s), - \mathcal{L} \bar u (s) \big )_{S^X} ds\,.
\end{align*}
Partial integration gives, using the boundary conditions satisfied by  $\bar u$,
$$ \int_0^t \big ( u(s), - \mathcal{L} \bar u (s) \big )_{S^X} ds = - \int_0^t \big ( \mathcal{L}^* u(s) , \bar u(s) \big)_{S^X} ds
+ \int_0^t \Big [\frac{1}{2}\sigma^2 x^2 \frac{du}{dx}(s,x)f(x)\Big ]_K^N \,ds\,.
$$
Hence we  get
$$
\tilde \Sigma_t f = \big ( u(0), \bar u(0) \big )_{S^X} +
\int_0^t \big (a^\top u(s),\bar u(s) \big )_{S^X} + a^\top(N) \nu_N(s) f(N) \, dZ_s \,.
$$
Now note that for $x \in [K,N]$, $\bar u (s) (x)= T_{t-s} f (x)$. Using that  $a(K) = 0$ by Assumption~\ref{ass:information}, we obtain that the stochastic integral with respect to $Z$ can be written as
$$
\int_0^t \Big \{ \big (a^\top u(s),T_{t-s} f  \big )_{S^X} +a^\top(K) (\nu_K(s) T_{t-s} f (K)) +  a^\top(N) (\nu_N(s) T_{t-s} f\,  (N)) \Big \}\, dZ_s\,.
$$
Hence it holds that $ \tilde \Sigma_t f = \tilde \Sigma_0 (T_t f) + \int_0^t \tilde \Sigma_s (a^\top T_{t-s} f)\, d Z_s\,.$
Moreover, $\Sigma_0 f = \big( \pi_0 , f\big )_{S^X} = \tilde \Sigma_0 f$. An approximation argument shows that these properties hold also for $f \in L^\infty (S^X)$, see for instance \citeasnoun{bib:pardoux-78}, so that $\Sigma_t = \tilde \Sigma _t $ by Proposition~\ref{prop:Sigma-t}.
\end{proof}

\begin{remark} \label{rem:krylov} It is interesting to  compare our results to the related paper \citeasnoun{bib:krylov-wang-11}.  Krylov and Wang  consider a  signal process $X$  that is a non-degenerate diffusion on $S^X$. Denoting  by $\tau_{S^X} $ the first exit time of $X$ from $S^X$ (in our notation $\tau_{S^X} =  \tau \wedge \sigma_N $), the observation filtration is given by $\bbF^Z$ and by the filtration generated by the  indicator $\ind{\tau_{S^X} \le t}$. Krylov and Wang then derive an SPDE for  the conditional density of $X_t$ given $\F_t^Z$ and the information $\{\tau_{S^X}>t\}$ and they show that
$$
Q(X_{\tau_{S^X}} =K \mid \tau=t) = \frac{\nu_K(t)}{\nu_N(t)+  \nu_K(t)}, \quad Q(X_{\tau_{S^X}} =N \mid \tau=t) = \frac{\nu_N(t)}{\nu_N(t)+  \nu_K(t)},
$$
where $\nu_K$ and $\nu_N$ are given by similar expressions as in Theorem~\ref{thm:SPDE-for-u}. However, they do not compute the dynamics of the conditional probability $Q(\tau \le t \mid \F_t^Z)$, an  expression that is crucial for the computation of default intensities (see Theorem~\ref{thm:default-intensity}).
\end{remark}

\subsection{Conditional Distribution with respect to $\bbF^\bbM$}
\label{subsec:filtering-wrt-FI}

In this subsection we compute the conditional distribution of $X$ with respect to the
filtration $\bbF^\bbM =\bbF^Z \vee\bbF^D\vee\bbF^Y$. The key part is to include the dividend information $\bbF^D$ and the jumps of the asset value process at the dividend dates in the analysis.  We recall some notation: the dividend dates are denoted by $t_n$, $n \ge 1$; $d_n$ denotes the dividend paid at $t_n$ and the conditional density of $d_n$ given $X_{t_n-}=x$ is  $\varphi(y, x)\Ind{\{x>K\}}$. In the sequel we let $t_0 =0$ for notational convenience. Moreover, we let $\varphi(y, K) =  \varphi^*(y)$ for some smooth and strictly positive reference density on $\R^+$ that we use in the construction of  the model via a change of measure.  Note that the  choice of $\varphi(y, K)$ has no economic implications, as we are only interested in the distribution of the asset value prior to default.

We use  an extension of the reference probability argument from  Section~\ref{subsec:preliminaries}. Consider a product space $\Omega=\Omega_1\times \Omega_2$, $\mathcal{G} =\mathcal{G}^1 \otimes \mathcal{G}^2$, $\bbG = \bbG^1 \otimes \bbG^2$ and  $Q^*= Q_1\otimes Q_2$ so that $\Omega_1$ supports a $Q_1$-Brownian motion $B$.  Suppose that $(\Omega_2, \mathcal{G}^2, \bbG^2 , Q_2)$ supports a a Brownian motion $Z$ and an independent random measure $\mu^D (dy,dt)$ with  compensating measure equal to
$$
\gamma^{D,*} (dy,dt) =  \sum_{n=1}^\infty \varphi^*(y) dy \delta_{\{t_n\}}(dt).
$$
Let $D_t := \int_0^t \int_{\R^+} y \, \mu^D(dy, dt) $, $t \ge 0$.  Denote by $V = V_t(\omega_1, \omega_2)$ the solution of the SDE $dV_t = \ind{V_t >0} r V_t  dt + \ind{V_t >0}\sigma V_t d B_t - \kappa d D_t$ and define the state process $X$ by $X_t = V_{t \wedge \tau \wedge \sigma_N}$.
The indicator function in the dynamics of $V_t$ is included as  under $Q^*$ the asset value  $V$ may become negative due to a downward jump at a  dividend date. Note that under  the measure $Q$ that we construct next such jumps have probability zero.

In order to revert to the original model dynamics we introduce the density martingale $L =   (L_{t}^1 L_{t}^2)_{0 \le t \le T}$ where $L^1_t$ is as in \eqref{eq:def-Lt1} and where $L_t^2= L_t^2 (\omega_1, \omega_2)$ satisfies
\begin{equation}\label{def:Lt2}
L_t^2  = 1+ \int_0^t \int_{\R^+} L_{s-}^2 \Big(  \frac{\varphi(y,X_{s-})}{\varphi^*(y)} - 1 \Big ) \,(\mu^D - \gamma^{D,*})(dy,ds) \,,\;
0 \le  t \le  T.
\end{equation}
Since $\varphi(\cdot,x)$ and $\varphi$ are probability densities we get
\begin{equation} \label{eq:int-wrt-gammaD}
\int_{\R^+} \Big(  \frac{\varphi(y,x)}{\varphi^*(y)} - 1 \Big )\varphi^*(y) dy = \int_{\R^+} (\varphi(y,x) - \varphi^*(y) ) dy = 1-1= 0\,.
\end{equation}

Hence $\int_0^t \int_{\R^+} \big(  \frac{\varphi(y,X_{s-})}{\varphi^*(y)} - 1 \big ) \, \gamma^{D,*}(dy,ds) \equiv 0$ and we obtain that
\begin{equation}\label{eq:dynamics-of-L2}
L_t^2  = 1+ \int_0^t \int_{\R^+} L_{s-}^2 \Big(  \frac{\varphi(y,X_{s-}
)}{\varphi^*(y)} - 1 \Big) \mu^D (dy,ds) = \prod_{t_n \le T} \frac{\varphi(d_n,X_{t_n -})}{\varphi^*(d_n)}\,.
\end{equation}
Since $L^1$ and $L^2$ are orthogonal we get that
$$dL_t = L_{t-}a(X_t)^\top d Z_t + \int_{\R^+} L_{t-} \big(  \frac{\varphi(y,X_{t-})}{\varphi^*(y)} - 1 \big ) \,(\mu^D - \gamma^{D,*})(dy,dt) \,.$$ The next lemma, whose proof  is given in   Appendix~B,  shows that $L$ is in fact the appropriate density martingale to consider ($T$ is the horizon date fixed at the beginning of Section~\ref{sec:filtering}).
\begin{lemma}\label{lemma:density-for-dividends}
It holds that $E^{Q^*}(L_T) =1.$
Define the measure $Q$ by $({dQ}/{dQ^*})\vert_{\G_T} = L_T$. Then under $Q$ the random measure $\mu^D$ has $\bbG$-compensator $\gamma^D(dy,dt) =  \sum_{n=1}^\infty \varphi(y,X_{t_n -}) dy \delta_{\{t_n\}}(dt)$. Moreover,  the triple $(X,Z,D)$  has the  joint law postulated in Assumption~\ref{ass:assets-and-dividends}.
\end{lemma}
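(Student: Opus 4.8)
The plan is to verify the three assertions in turn: that $L$ is a genuine density (i.e.~$E^{Q^*}(L_T)=1$), that under the new measure $Q$ the random measure $\mu^D$ acquires the stated compensator, and that $(X,Z,D)$ then has the law postulated in Assumption~\ref{ass:assets-and-dividends}. First I would establish that $L$ is a true $Q^*$-martingale. Since $L = L^1 L^2$ with $L^1$ and $L^2$ strongly orthogonal (the first driven by the Brownian motion $Z$, the second by the purely discontinuous martingale $\mu^D - \gamma^{D,*}$), it suffices to treat the two factors separately. For $L^1$ this is exactly the Novikov argument already recorded after \eqref{eq:def-Lt1}, using boundedness of $a$. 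For $L^2$ the key observation is the explicit product representation in \eqref{eq:dynamics-of-L2}: on each dividend interval $L^2$ is constant, and at $t_n$ it is multiplied by $\varphi(d_n, X_{t_n-})/\varphi^*(d_n)$. I would argue by induction over the finitely many dividend dates $t_n \le T$: conditioning on $\G_{t_n-}$, the $Q^*$-conditional law of $d_n$ is $\varphi^*(y)\,dy$ (this is the defining property of $\gamma^{D,*}$), so $E^{Q^*}\big(\varphi(d_n,X_{t_n-})/\varphi^*(d_n) \mid \G_{t_n-}\big) = \int_{\R^+}\varphi(y,X_{t_n-})\,dy = 1$ because $\varphi(\cdot,x)$ is a probability density (with the convention $\varphi(\cdot,K)=\varphi^*$ used when $X_{t_n-}=K$); here the jump-term identity \eqref{eq:int-wrt-gammaD} is exactly what makes the compensated integral a martingale. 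Taking the product over $t_n\le T$ gives $E^{Q^*}(L^2_T)=1$, and combining with $L^1$ via orthogonality gives $E^{Q^*}(L_T)=1$, so $Q$ is a well-defined probability measure.

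Next I would identify the $\bbG$-compensator of $\mu^D$ under $Q$. This is a standard Girsanov-type computation for random measures: if $({dQ}/{dQ^*})\vert_{\G_t} = L_t$ and the $Q^*$-compensator of $\mu^D$ is $\gamma^{D,*}(dy,dt)$, then the $Q$-compensator is obtained by multiplying the $Q^*$-compensator by the density-ratio factor carried in $L$, here $\varphi(y,X_{t-})/\varphi^*(y)$. Concretely, $\gamma^D(dy,dt) = \frac{\varphi(y,X_{t-})}{\varphi^*(y)}\gamma^{D,*}(dy,dt) = \sum_{n\ge 1}\varphi(y,X_{t_n-})\,dy\,\delta_{\{t_n\}}(dt)$, exactly as claimed. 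I would justify this either by invoking the general theorem on absolutely continuous changes of measure for random measures (e.g.~Jacod--Shiryaev, Theorem III.3.17), or directly: for any bounded predictable integrand one checks that the corresponding compensated integral against $\mu^D - \gamma^D$ is a $Q$-martingale, which reduces to the same conditional-expectation computation as in the first step. As a byproduct, note that under $Q$ the dividend sizes satisfy $d_n < (V_{t_n-}-K)^+$ with probability one (since $\varphi(\cdot,x)$ is supported on $(0,(x-K)^+)$), so the indicator $\ind{V_t>0}$ in the SDE for $V$ is redundant under $Q$ and $V$ solves \eqref{eq:dVt}.

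Finally I would verify the joint law of $(X,Z,D)$. Under $Q$: the dynamics of $Z$ follow from the $L^1$-part of the Girsanov transform exactly as in Section~\ref{subsec:preliminaries}, giving $Z_t = \int_0^t a(X_s)\,ds + \widetilde W_t$ for a $Q$-Brownian motion $\widetilde W$ independent of $B$; the compensator computation just completed shows that at each $t_n$ the conditional law of $d_n$ given $\G_{t_n-}$ is $\varphi(y, X_{t_n-})\,dy$, matching \eqref{eq:dividend-density}; and since the $\mu^D$-driven factor $L^2$ does not affect the law of $B$ (orthogonality, plus $L^2$ being $\G$-adapted with $B$ a $Q^*$-martingale orthogonal to it), $B$ remains a $Q$-Brownian motion, so $V$ — and hence its stopped version $X = V_{\cdot\wedge\tau\wedge\sigma_N}$ together with $D = \int_0^\cdot\int_{\R^+} y\,\mu^D(dy,ds)$ — has the law postulated in Assumption~\ref{ass:assets-and-dividends}. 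The main obstacle is the careful bookkeeping in the second step: making the change-of-compensator argument rigorous requires a clean statement of the predictable-projection / Girsanov theorem for integer-valued random measures and attention to the convention $\varphi(\cdot,K)=\varphi^*$ on the boundary; once that is in place the remaining verifications are routine, since everything decouples into the already-handled diffusive part and the finitely many independent dividend updates.
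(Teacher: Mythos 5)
Your proposal is correct and follows essentially the same path as the paper: the same induction over dividend dates for $E^{Q^*}(L_T)=1$ with the key identity $\int_{\R^+}\varphi(y,X_{t_n-})\,dy = 1$, a Girsanov-type identification of the $Q$-compensator of $\mu^D$ (the paper uses the general Girsanov theorem via the bracket $\langle L,M\rangle$ and computes it explicitly, where you cite Jacod--Shiryaev's random-measure version or propose a direct check --- both are equivalent formulations of the same result), and the same brief treatment of the joint law of $(X,Z,D)$. One phrase to tighten: ``combining with $L^1$ via orthogonality'' is not by itself a valid step, since the product of two orthogonal true martingales need not be a true martingale; what actually closes the argument, as in the paper, is that $L^2$ is piecewise constant between dividend dates, so the induction at each $t_n$ propagates $E^{Q^*}(L_t)=1$ across the whole interval with only $L^1$ moving there.
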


Similarly as in \eqref{eq:kallianpur-striebel} we get from the generalized Bayes rule (\citeasnoun{bib:jacod-shiryaev-03}, Proposition~III.3.8) that
\begin{equation}\label{eq:kallianpur-striebel-2}
E^Q\left ( f(X_t) \mid \F_t^Z \vee \F_t^D \right )(\omega)  = \frac{\Sigma_t f (\omega_2) }{\Sigma_t 1 (\omega_2)}\,,
\end{equation}
where $\Sigma_t f\,
(\omega_2) = E^{Q^1} \big (f(X_t(\cdot, \omega_2 ) L_t(\cdot,\omega_2) \big) $.

\paragraph{Dynamics of the unnormalized density.} The form of $L_t$ in \eqref{eq:dynamics-of-L2} suggests the following dynamics of the unnormalized density $u(t,\cdot)$:  between dividend dates, that is on $(t_{n-1}, t_{n})$, $n \ge 1$,    $u(t)$ solves the SPDE \eqref{eq:du-t-1} with initial value $u(t_{n-1})$; at $t_{n}$ the density $u(t_n-)$ is first updated to
\begin{equation}\label{eq:jump-of-u}
\tilde u(t_n,x) = u(t_{n}-,x) \frac{\varphi(d_n,x)}{\varphi^*(d_n)}\,,
\end{equation}
second, for $\kappa =1$  there is a shift to account for the downward jump in the asset value, that is
\begin{equation}\label{eq:update-shift}
u(t_n,x) = \tilde u( t_n, x + \kappa d_n)\,,
\end{equation}
where we let $\tilde u (t_n,z) = 0$ for $z > N $. In Theorem~\ref{prop:dividend-information} below  we show that this is in fact correct. As a first step we describe the dynamics of $u$ by means of an SPDE. Denote for $y >0$  and  $v \in H^1_0 (S^X)$ by $S_y v$ the function $S_y v (x) = v(x+y)$, where we let $v(z) = 0$ for $z >N$. Consider the SPDE
\begin{equation} \label{eq:du-t-with-jumps}
du (t) = - \mathcal{A}^*u(t) dt + a^\top u(t) dZ_t + \int_{\R^+} \Big \{ S_{\kappa y} \Big  ( u(t-) \frac{\varphi(y,\cdot)}{\varphi^*(y)} \Big ) - u(t-) \Big \}\, \mu^D (dy,dt)\,,
\end{equation}
with initial condition $u(0) = \pi_0$. The interpretation of \eqref{eq:du-t-with-jumps} is analogous to the previous section: for  $v \in H^1_0(S^X)$ it holds that
\begin{align}\nonumber
\big ( u(t),v \big )_{S^X} &= (u(0),v\big)_{S^X} - \int_0^t \langle \mathcal{A}^* u(s), v \rangle ds
+ \int_0^t \big ( a^\top u(s),v \big )_{S^X}
dZ_s \\&+ \int_0^t \int_{\R^+} \big ( S_{\kappa y } \Big ( u(s-)  \frac{\varphi(y,\cdot)}{\varphi^*(y)}\Big )  - u(s-) \,, v \big)_{S^X}\, \mu^D (dy,ds)\,.\label{eq:int-wrt-gammaD^2}
\end{align}

The next result extends Theorem~\ref{thm:SPDE-for-u} to the case with dividends.

\begin{theorem} \label{prop:dividend-information}
1. There is a unique positive solution $u\in H_0^1(S^X)\cap H^2 (S^X) $ of the SPDE~\eqref{eq:du-t-with-jumps}.

\noindent 2.  Define $\nu_K(t) =  \int_0^t \frac{1}{2}\sigma^2 K^2 \frac{du}{dx}(s,K) ds$ and \begin{align*}
\nu_N(t) = &-\int_0^t \frac{1}{2}\sigma^2 N^2 \frac{du}{dx}(s,N) ds + \int_0^t
a^\top(N) \nu_N(s) dZ_s  + \int_0^t \int_{\R^+}\nu_N (s-)\big(\frac{\varphi(y,N)}{\varphi^*(y)} - 1 \big )\, \mu^D (dy,ds)\,.\nonumber
\end{align*}
Then it holds that $\Sigma_t f = \big ( u(t), f\big )_{S^X} + \nu_K (t) f(K) + \nu_N(t) f(N)$.
\end{theorem}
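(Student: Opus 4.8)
The plan is to construct $u$ recursively over the successive inter-dividend intervals, invoking Theorem~\ref{thm:SPDE-for-u} on each of them, and to implement the finitely many dividend dates $t_1<\dots<t_M$ in $[0,T]$ through the explicit jump rule built into \eqref{eq:du-t-with-jumps}. On $[0,t_1)$ equation \eqref{eq:du-t-with-jumps} reduces to \eqref{eq:du-t-1} (there $L^2\equiv 1$), so Theorem~\ref{thm:SPDE-for-u} supplies a unique nonnegative solution with the claimed regularity together with the representation $\Sigma_t f=(u(t),f)_{S^X}+\nu_K(t)f(K)+\nu_N(t)f(N)$. At $t_1$ one sets $u(t_1)=S_{\kappa d_1}\bigl(u(t_1-)\,\varphi(d_1,\cdot)/\varphi^*(d_1)\bigr)$ and updates $\nu_K,\nu_N$ as in the statement, which is precisely the jump dictated by the atom of $\mu^D$ at $(t_1,d_1)$ in the weak formulation of \eqref{eq:du-t-with-jumps}; one then repeats the argument on $[t_1,t_2)$ with $u(t_1)$ as new initial density, and so on. Because only finitely many dividend dates lie in $[0,T]$ the procedure terminates, and running the same recursion also yields uniqueness: between dividend dates this is the uniqueness of Theorem~\ref{thm:SPDE-for-u}, and at each $t_n$ the jump of $u$ is prescribed.

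For the recursion to close one must verify that the dividend update maps $H^1_0(S^X)$ into itself and preserves positivity. Positivity is immediate since $\varphi,\varphi^*\ge 0$ and $S_y$ preserves sign. For the $H^1_0$-property one uses the hypotheses on $\varphi_\delta$: as $\varphi_\delta\in C^2([0,1])$ with $\varphi_\delta(1)=0$, the multiplier $x\mapsto\varphi(d_n,x)/\varphi^*(d_n)$ vanishes on $[K,K+d_n]$, is smooth on $[K+d_n,N]$, and is Lipschitz across $x=K+d_n$, hence a bounded multiplier on $H^1(S^X)$. Multiplying $u(t_n-)\in H^1_0$ by it gives an $H^1$ function that vanishes on $[K,K+d_n]$ and at $N$, and the shift by $\kappa d_n$ (with the shifted function set to zero above $N$) then produces an element of $H^1_0(S^X)$, the value at $K$ vanishing because $\tilde u(t_n,\cdot)$ is zero near $K$ and the value at $N$ because the shift moves past $N$. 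Parabolic smoothing of \eqref{eq:du-t-1} restores the $H^2$-regularity and trajectory continuity on the open intervals $(t_{n-1},t_n)$; I read the assertion $u\in H^1_0\cap H^2$ in this piecewise sense, since the update need not preserve $H^2$ exactly at $t_n$ (that would require $\varphi_\delta'(1)=0$), which is irrelevant for the sequel.

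Statement~2 I would prove by induction over the $t_n$, showing that $\tilde\Sigma_t:=(u(t),\cdot)_{S^X}+\nu_K(t)(\cdot)(K)+\nu_N(t)(\cdot)(N)$ coincides with $\Sigma_t f=E^{Q_1}\bigl(f(X_t)L_t\bigr)$. On each interval $(t_{n-1},t_n)$ this is the argument proving Statement~3 of Theorem~\ref{thm:SPDE-for-u}, run again with the atoms $\nu_K(t_{n-1}),\nu_N(t_{n-1})$ carried along as initial values; note that the atom at $K$ contributes nothing to the noise term since $a(K)=0$, which matches the absence of a $dZ$-term in $\nu_K$. At $t_n$ one writes $L_{t_n}=L_{t_n-}\,\varphi(d_n,X_{t_n-})/\varphi^*(d_n)$ and splits $\Sigma_{t_n}f=E^{Q_1}(f(X_{t_n})L_{t_n})$ according to the three cases $X_{t_n-}\in(K,N)$, $X_{t_n-}=K$, $X_{t_n-}=N$: on the first, $X_{t_n}=X_{t_n-}-\kappa d_n$; on the two absorbing sets $X_{t_n}=X_{t_n-}$ and the weight is $\varphi(d_n,K)/\varphi^*(d_n)=1$ and $\varphi(d_n,N)/\varphi^*(d_n)$ — the former by the choice $\varphi(\cdot,K)=\varphi^*$, the latter because the signal is already absorbed at $N$ so the shift leaves that atom in place. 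Inserting the inductive identity $\Sigma_{t_n-}=\tilde\Sigma_{t_n-}$ and changing variables $z=x+\kappa d_n$ in the Lebesgue part — legitimate because $\varphi(d_n,z)=0$ for $z\le K+d_n$ and $\tilde u(t_n,z)=0$ for $z>N$, so no interior mass reaches $K$ and none is lost at $N$ — one reads off $(u(t_n),f)_{S^X}$ for the density part, $\nu_K(t_n-)f(K)$ for the atom at $K$ (consistent with $\nu_K$ being continuous at $t_n$), and $\nu_N(t_n-)\bigl(\varphi(d_n,N)/\varphi^*(d_n)\bigr)f(N)$ for the atom at $N$ (the jump term in the $\nu_N$-equation). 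Hence $\tilde\Sigma_{t_n}=\Sigma_{t_n}$, and the induction continues.

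The main obstacle I anticipate is this bookkeeping at the dividend dates: one has to check that the multiplicative reweighting by $\varphi(d_n,\cdot)/\varphi^*(d_n)$ composed with the shift $S_{\kappa d_n}$ acts on the three components of $\Sigma$ — the absolutely continuous density and the atoms at $K$ and $N$ — exactly as in the statement, in particular that the atoms are not displaced by the shift, that no interior mass leaks into or out of them (resting on $\varphi_\delta(1)=0$), and that the atom at $K$ is unchanged because $\varphi(\cdot,K)=\varphi^*$. The analytic inputs — invariance of the update on $H^1_0(S^X)$, and the restartability of Theorem~\ref{thm:SPDE-for-u} at $t_{n-1}$ with a random, slightly atomic initial measure — are comparatively routine.
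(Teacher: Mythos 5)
Your proof follows essentially the same route as the paper's: induction over the dividend dates, invoking Theorem~\ref{thm:SPDE-for-u} on each inter-dividend interval, verifying that the multiplicative-then-shift jump rule preserves the $H^1_0$ boundary conditions (using $\varphi_\delta(1)=0$ so that $\varphi(d_n,\cdot)$ vanishes on $[K,K+d_n]$), and then bookkeeping the atoms at $K$ and $N$ via the split of $\Sigma_{t_n}f$ and a change of variables in the Lebesgue part, using $\varphi(\cdot,K)=\varphi^*$ for the $K$-atom. One genuine refinement you add: you correctly note that the update multiplier $\varphi(d_n,\cdot)/\varphi^*(d_n)$ is merely Lipschitz (not $C^1$) across $x=K+d_n$ unless $\varphi_\delta'(1)=0$, so the paper's passing claim that the map is ``smooth'' and $\tilde u(t_n)\in H^2$ is not literally true; reading $H^2$-regularity in the piecewise/post-smoothing sense, as you do, is the right repair and is harmless for everything downstream.
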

The proof is given in  Appendix~B.

\paragraph{Filtering with respect to $\bbF^\bbM $. } Finally we return to the filtering problem with respect to the  filtration  $\bbF^\bbM$.

\begin{corollary} \label{corr:cond-density}
Define the  norming  constant $C(t)$  by
$C(t) = \big ( u (t), 1\big )_{S^X} + \nu_N(t)$ and let $\pi(t,x) =  {u(t,x)}/{C(t)} $
and $\pi_N(t) = {\nu_N (t)}/{C(t)} $. Then it holds  for $f \in L^\infty(S^X)$ that
\begin{equation}
\ind{\tau >t}  E^Q \big (f(X_t) \mid \F_t^\bbM \big ) = \ind{\tau >t} \big ( ( \pi(t, \cdot), f)_{S^X}  + \pi_N(t) f(N) \big )\,.
\end{equation}
\end{corollary}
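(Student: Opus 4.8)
The plan is to chain together three facts that are already established: the Dellacherie-type reduction of Proposition~\ref{prop:reduction-to-background}, the generalized Bayes rule~\eqref{eq:kallianpur-striebel-2}, and the representation of the unnormalized filter $\Sigma_t$ from Theorem~\ref{prop:dividend-information}. First I would invoke Proposition~\ref{prop:reduction-to-background} in the bounded-domain model, i.e.\ with signal process $X=(V^N)^\tau$ and observations $Z$, $D$. For $f\in L^\infty(S^X)$ (which is trivially $Q$-integrable along $X_t$) this gives
$$
\ind{\tau>t}\,E^Q\big(f(X_t)\mid \F_t^\bbM\big)
=\ind{\tau>t}\,\frac{E^Q\big(f(X_t)\ind{X_t>K}\mid \F_t^Z\vee \F_t^D\big)}{Q\big(X_t>K\mid \F_t^Z\vee \F_t^D\big)}\,,
$$
where I use the identity $\{\tau>t\}=\{X_t>K\}$: the dividend restriction $d_n<(V_{t_n-}-K)^+$ prevents $V$ (hence $X$) from jumping below $K$, so $\tau$ is reached only by continuous motion, forcing $X_t=K$ on $\{\tau\le t\}$ and $X_t>K$ on $\{\tau>t\}$. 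Proposition~\ref{prop:reduction-to-background} is stated and proved with the dividend observations suppressed, but the Dellacherie formula and the change-of-observation-filtration argument carry over verbatim once $\tilde Z,\tilde D$ are read as the observation processes attached to $X$; verifying this is the only mildly delicate point of the proof.

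Next I would rewrite numerator and denominator via~\eqref{eq:kallianpur-striebel-2}, applied to the bounded measurable functions $x\mapsto f(x)\ind{x>K}$ and $x\mapsto\ind{x>K}$; the common normaliser $\Sigma_t 1$ cancels, leaving $\ind{\tau>t}\,E^Q(f(X_t)\mid \F_t^\bbM)=\ind{\tau>t}\,\Sigma_t(f\,\ind{\cdot>K})\big/\Sigma_t(\ind{\cdot>K})$. Then I substitute $\Sigma_t g=(u(t),g)_{S^X}+\nu_K(t)\,g(K)+\nu_N(t)\,g(N)$ from Theorem~\ref{prop:dividend-information}. The bookkeeping is: multiplying a test function by $\ind{\cdot>K}$ leaves the $L^2(S^X)$-pairing unchanged (the single point $K$ is Lebesgue-null), annihilates the atom at $K$ (the indicator vanishes there), and preserves the atom at $N$ (since $N>K$). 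Hence $\Sigma_t(f\ind{\cdot>K})=(u(t),f)_{S^X}+\nu_N(t)f(N)$ and $\Sigma_t(\ind{\cdot>K})=(u(t),1)_{S^X}+\nu_N(t)=C(t)$, the latter being exactly the definition of $C(t)$ in the statement. Dividing by $C(t)$ and recalling $\pi(t,\cdot)=u(t,\cdot)/C(t)$, $\pi_N(t)=\nu_N(t)/C(t)$ produces $\ind{\tau>t}\,E^Q(f(X_t)\mid\F_t^\bbM)=\ind{\tau>t}\big((\pi(t,\cdot),f)_{S^X}+\pi_N(t)f(N)\big)$, which is the claim.

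The argument is essentially mechanical once the earlier results are in place, and I expect no serious obstacle. Beyond the point noted above, the only things requiring care are that Theorem~\ref{prop:dividend-information} uses pointwise boundary values $g(K),g(N)$, which I invoke for $g=f\ind{\cdot>K}$ — for merely $L^\infty$ test functions this is justified by the same approximation argument that extends Theorem~\ref{thm:SPDE-for-u}, part~3, from continuous to $L^\infty$ integrands — and that on $\{Q(\tau>t\mid\F_t^Z\vee\F_t^D)=0\}$ one has $\ind{\tau>t}=0$ a.s., so the ratio appearing in the first display is harmless there, a point already built into Proposition~\ref{prop:reduction-to-background}.
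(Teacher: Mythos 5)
Your proposal is correct and follows essentially the same route as the paper: it chains Proposition~\ref{prop:reduction-to-background}, the Kallianpur--Striebel/Bayes formula~\eqref{eq:kallianpur-striebel-2}, and the representation of $\Sigma_t$ from Theorem~\ref{prop:dividend-information}, then observes that multiplying by $\ind{\cdot>K}$ kills the atom at $K$, preserves the atom at $N$, and leaves the $L^2$-pairing unchanged. The paper compresses these steps into a single displayed chain of equalities, whereas you spell them out (and flag the edge cases about dividend observations, $L^\infty$ test functions, and the degenerate set), but the substance is identical.
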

\begin{proof}
 Combining Proposition~\ref{prop:reduction-to-background} and Theorem~\ref{prop:dividend-information} we get
 \begin{equation}
 \ind{\tau >t}  E^Q \big (f(X_t) \mid \F_t^\bbM \big ) = \ind{\tau >t}\frac{\Sigma_t (f 1_{(K,\infty)})}{\Sigma_t   1_{(K,\infty)}} = \ind{\tau >t}\frac{\big ( u(t), f\big )_{S^X} + \nu_N(t) f(N) }{C(t)}.
 \end{equation}
 \end{proof}


\subsection{Finite-dimensional approximation of the filter equation}
\label{subsec:finite-dim-approx}

The SPDE \eqref{eq:du-t-1} is a  stochastic  partial differential equation and thus an infinite-dimensional object. In order to solve the filtering problem numerically and to generate price trajectories of  basic corporate securities one needs to approximate \eqref{eq:du-t-1} by  a finite-dimensional equation.
A natural way to achieve this is the Galerkin approximation method. We first explain the method for the case without dividend payments. Consider $m$ linearly independent  basis functions $e_1, \dots, e_m \in H^1_0(S^X) \cap H^2(S^X) $ generating the subspace  $\mathcal{H}^{(m)} \subset H^1_0(S^X) $,  and denote by $\prm \colon H^1_0(S^X) \to \mathcal{H}^{(m)}$ the projection on this subspace with respect to $(\cdot,\cdot)_{S^X}$. In the Galerkin method the solution $u^{(m)} $ of the equation
\begin{equation} \label{eq:galerkin-1}
d u^{(m)} (t) = \prm \circ \mathcal{L}^* \circ \prm u^{(m)} (t) dt + \prm ( a^\top \prm u^{(m)} (t))\, dZ_t
\end{equation}
with initial condition $u^{(m)}(0) = \prm \pi_0$ is used as an approximation to the solution $u$ of \eqref{eq:du-t-1}. Since projections are self-adjoint, we get that for $v \in H^1_0(S^X)$
\begin{equation} \label{eq:galerkin-2}
d \big( u^{(m)} (t),v \big)_{S^X} = \big ( \mathcal{L}^* \circ \prm u^{(m)} (t), \prm v \big )_{S^X} dt + \big ( a^\top \prm u^{(m)} (t), \prm v \big )_{S^X} d Z_t.
\end{equation}
Hence $d ( u^{(m)}  (t),v )_{S^X} =0$  if $v$  belongs to $ (\mathcal{H}^{(m)})^\bot $ (the orthogonal complement of $\mathcal{H}^{(m)}$). Since moreover $u^{(m)} (0) = \prm \pi_0 \in \mathcal{H}^{(m)}$ we conclude that $u^{(m)}(t) \in \mathcal{H}^{(m)}$ for all $t$. Hence $u^{(m)}  $ is of  the form
$u^{(m)} (t) = \sum_{i=1}^m \psi_i(t) e_i$, and we now determine an SDE system for the $m$ dimensional process $\Psi^{(m)}(t) = (\psi_1(t),\dots,\psi_m(t))^\prime$. Using \eqref{eq:galerkin-2} we get for $j \in \{1,\dots,m\}$
\begin{align} \label{eq:d psi-t-1}
d \big( u^{(m)}  (t),e_j \big)_{S^X} &= \sum_{i=1}^m \psi_i(t) \big (\mathcal{L}^* e_i, e_j \big )_{S^X} dt + \sum_{k=1}^l \sum_{i=1}^m  \big ( a_k e_i, e_j \big )_{S^X} \psi_i(t) d Z_t^k.
\intertext{On the other hand,}
 \label{eq:d psi-t-2}
 d  ( u^{(m)} (t), e_j )_{S^X} &= \sum_{i=1}^m (e_i,e_j) d \psi_i(t).
\end{align}
Define now the $m \times m$ matrices $A$, $B$ and  $ C^1, \dots, C^l$ with $a_{ij} = (e_i,e_j)_{S^X}$,  $b_{ij} =  (\mathcal{L}^* e_i, e_j  )_{S^X}$ and  $ c^k_{ij} =  ( a_k e_i, e_j  )_{S^X}$. Equating \eqref{eq:d psi-t-1} and \eqref{eq:d psi-t-2}, we get the following system of SDEs for $\Psi^{(m)}$
\begin{equation} \label{eq:SDE-for-Psi}
d \Psi^{(m)}(t) = A^{-1} B^\top \Psi^{(m)} (t) dt + \sum_{k=1}^l A^{-1}C^k \Psi^{(m)} (t) d Z_t^k
\end{equation}
with initial condition $\Psi^{(m)}(0) =  A^{-1} \big (
( \pi_0, e_1)_{S^X},\dots , (\pi_0, e_m)_{S^X}\big )^\prime$.
Equation~\eqref{eq:SDE-for-Psi} can be solved with numerical methods for SDEs such as a simple Euler scheme or the more advanced splitting up method proposed by \citeasnoun{bib:leGland-92}. Further  details regarding the numerical implementation of the Galerkin method are given among others in \citeasnoun{bib:frey-schmidt-xu-13} or in Chapter~4 of \citeasnoun{bib:roesler-16}.  Conditions for the convergence $u^{(m)}  \to u$ are well-understood, see for instance \citeasnoun{bib:germani-piccioni-87}: the Galerkin approximation for the filter density converges for $m \to \infty$ if and only if the  Galerkin approximation for the deterministic forward PDE $\frac{du}{dt} (t) = \mathcal{L}^* u (t)$ converges.

In the case with dividend information the  Galerkin method is applied successively on each interval $(t_{n-1}, t_{n})$, $n =1,2,\dots$. Denote by $ u_{n}^{(m)}$  the approximating density  over the interval  $(t_{n-1}, t_{n})$. Following \eqref{eq:du-t-with-jumps} the initial condition for the interval $(t_n, t_{n+1})$   is then given by
$$u^{(m)}(t_n) = \prm \left ( S_{\kappa y } \Big ( u^{(m)}_n( t_n ,\cdot) \frac{\varphi(d_n, \cdot)}{\varphi^*(d_n)}\Big ) \right )\,,$$
that is  by projecting the updated  and shifted density $ u^{(m)}_{n} (t_n , x + \kappa d_n) \big(\varphi(d_n, x + \kappa d_n )/ \varphi^*(d_n) \big )$ onto  $\mathcal{H}^{(m)} $.

\section{Dynamics of Corporate Security Prices}
\label{sec:dynamics}

In this section we identify  the price process  of traded corporate securities. It turns out
that these   price processes  are of
jump-diffusion type, driven by  a  Brownian motion $M^Z$ (the martingale part in the $\bbF^\bbM$ semimartingale decomposition of $Z$),   by the compensated random measure corresponding to the dividend payments and by the compensated default
indicator process.

\subsection{Default intensity}\label{subsec:default-intensity}

As a first step we derive the $\bbF^\bbM$-semimartingale decomposition of the default
indicator process  $Y$ and  we show that $Y$ admits an $\bbF^\bbM$-intensity.

\begin{theorem} \label{thm:default-intensity}
The  $\bbF^\bbM$-compensator of $Y$ is given by the process   $(\Lambda_{t\wedge \tau})_{t
\ge 0}$ where $\Lambda_t = \int_0^t  \lambda_{s-} ds $ and where the default intensity $\lambda_t$ is given by
\begin{equation} \label{eq:default-intensity}
 \lambda_t=
  \frac{1}{2}\sigma^2K^2 \frac{d \pi}{dx} (t,K)\, .
\end{equation}
Here  $\pi(t,x)$ is conditional  density of  $X_t$ given $\F_t^\bbM$ introduced in
Corollary~\ref{corr:cond-density}.
\end{theorem}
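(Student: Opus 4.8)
The plan is to obtain the $\bbF^\bbM$-compensator of $Y$ from the classical formula for the intensity of a random time in a progressively enlarged filtration. By Proposition~\ref{prop:reduction-to-background} together with~\eqref{eq:different-filtration-1}, when conditioning on events contained in $\{\tau>t\}$ the modelling filtration $\bbF^\bbM$ acts as the progressive enlargement of the background filtration $\bbF^Z\vee\bbF^D$ by the random time $\tau$, and $\{\tau>t\}=\{X_t>K\}$. For such an enlargement it is a classical fact (see e.g.\ \citeasnoun{bib:elliott-jeanblanc-yor-00} and the literature cited there; no $(H)$-hypothesis is needed) that, writing the Az\'ema supermartingale $G_t:=Q(\tau>t\mid\F_t^Z\vee\F_t^D)$ in Doob--Meyer form $G_t=G_0+m_t-A_t$ with $m$ an $(\bbF^Z\vee\bbF^D)$-local martingale and $A$ predictable, increasing and null at $0$, the process $Y_t-\int_0^{t\wedge\tau}G_{s-}^{-1}\,dA_s$ is an $\bbF^\bbM$-martingale. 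So it suffices to identify $G$ and the compensator $A$. I would run the argument on the bounded domain $S^X=[K,N]$ and transfer to the original model by Proposition~\ref{prop:bounded-domain}; on $(K,\infty)$ the identical computation applies with $\nu_N\equiv0$ (cf.\ the comments after Theorem~\ref{thm:SPDE-for-u}).

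First I would identify $G$. The generalised Bayes rule~\eqref{eq:kallianpur-striebel-2} applied to $f=1_{(K,\infty)}$, combined with Theorem~\ref{prop:dividend-information}, yields
\[
G_t=\frac{\Sigma_t 1_{(K,\infty)}}{\Sigma_t 1}=\frac{C(t)}{D(t)},\qquad C(t):=(u(t),1)_{S^X}+\nu_N(t),\quad D(t):=\Sigma_t 1=C(t)+\nu_K(t),
\]
where $C(t)$ is precisely the norming constant of Corollary~\ref{corr:cond-density}, so that $\pi(t,\cdot)=u(t,\cdot)/C(t)$. The structural point I would exploit is that, by Fubini on the product space, $D_t=\Sigma_t 1=E^{Q^*}(L_t\mid\F_t^Z\vee\F_t^D)$, and since $L$ is a $Q^*$-$\bbG$-martingale with $\bbF^Z\vee\bbF^D\subseteq\bbG$ this equals $E^{Q^*}(L_T\mid\F_t^Z\vee\F_t^D)$; hence $D$ is the $(\bbF^Z\vee\bbF^D)$-density process of $Q$ with respect to $Q^*$, and consequently $1/D$ is a true $Q$-$(\bbF^Z\vee\bbF^D)$-martingale.

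Then I would extract $A$ by Itô's product rule applied to $1-G_t=\nu_K(t)/D_t$. By Theorem~\ref{prop:dividend-information}, $\nu_K(t)=\int_0^t\tfrac12\sigma^2K^2\frac{du}{dx}(s,K)\,ds$ is continuous, of finite variation, and increasing, the last property because $u(s,\cdot)\ge0$ with $u(s,K)=0$ forces $\frac{du}{dx}(s,K)\ge0$. Continuity of $\nu_K$ gives $[\nu_K,1/D]\equiv0$, so
\[
1-G_t=\int_0^t\nu_K(s-)\,d\bigl(D_s^{-1}\bigr)+\int_0^t D_{s-}^{-1}\,d\nu_K(s),
\]
where the first term is a $Q$-$(\bbF^Z\vee\bbF^D)$-local martingale and the second is continuous, predictable and increasing; by uniqueness of the Doob--Meyer decomposition of the bounded supermartingale $G$ this identifies $A_t=\int_0^t D_s^{-1}\,\tfrac12\sigma^2K^2\frac{du}{dx}(s,K)\,ds$ (left limits being irrelevant under a Lebesgue integral). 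Since $G_{s-}=C(s)/D(s)$ and $\frac{d\pi}{dx}(s,K)=C(s)^{-1}\frac{du}{dx}(s,K)$, substitution into the enlargement formula gives $G_{s-}^{-1}\,dA_s=\tfrac12\sigma^2K^2\,C(s)^{-1}\frac{du}{dx}(s,K)\,ds=\lambda_s\,ds$, whence the $\bbF^\bbM$-compensator of $Y$ equals $\int_0^{t\wedge\tau}\lambda_{s-}\,ds=\Lambda_{t\wedge\tau}$, as asserted.

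I expect the delicate points to be technical rather than computational. The enlargement formula requires $G_{s-}>0$ on $(0,\tau]$, which I would get from (strict) positivity of the filter density on the interior of $S^X$, or, at the level actually needed, from the elementary inclusion $\{\tau\ge s\}\subseteq\{G_{s-}>0\}$. One must also check that the jumps of $G$ at the deterministic (hence predictable) dividend dates $t_n$ are carried entirely by the local-martingale part $m$ — which is automatic once $1/D$ is known to be a martingale — so that $A$ is genuinely continuous and the compensator absolutely continuous, as the statement requires. Finally, the two reductions, from $\bbF^\bbM$ to the progressive enlargement of $\bbF^Z\vee\bbF^D$ and from $(K,\infty)$ to the bounded domain, must be justified via Proposition~\ref{prop:reduction-to-background} (with~\eqref{eq:different-filtration-1}) and Proposition~\ref{prop:bounded-domain} respectively. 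The one genuinely non-routine idea is recognising $\Sigma_t 1$ as the reference-measure density process: this turns $1/D$ into a martingale and lets the Doob--Meyer decomposition of $G=C/D$ be read off without having to compute the full dynamics of $(u(t),1)_{S^X}$.
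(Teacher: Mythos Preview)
Your proposal is correct and follows essentially the same route as the paper: both identify $1-G_t=F_t=\nu_K(t)/\Sigma_t 1$, recognise $\Sigma_t 1$ as the $(\bbF^Z\vee\bbF^D)$-density process so that $(\Sigma_t 1)^{-1}$ is a $Q$-(local) martingale, and then read off the Doob--Meyer compensator $A_t=\int_0^t(\Sigma_{s-}1)^{-1}\tfrac12\sigma^2K^2\frac{du}{dx}(s,K)\,ds$ via It\^o's product rule before invoking the hazard-process formula. The paper cites the result as Proposition~\ref{prop:compensator-via-hazard-approach} (from \citeasnoun{bib:blanchet-scalliet-jeanblanc-04}) rather than framing it as progressive enlargement, but the computation is identical.
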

\noindent We mention that a similar result was obtained in \citeasnoun{bib:duffie-lando-01} for  the case where the noisy observation of the  asset value process  arrives only at deterministic time points.
\begin{proof}
We use the following well-known  result to determine the compensator of $Y$ (see for
instance Section~2.3 of \citeasnoun{bib:blanchet-scalliet-jeanblanc-04}).

\begin{proposition} \label{prop:compensator-via-hazard-approach}
Let $F_t = Q ( \tau \le t \mid \F_t^Z \vee \F_t^D)$ and suppose that $F_t < 1$  for all
$t$. Denote the Doob-Meyer decomposition of the bounded $\bbF^Z \vee \bbF^D$-submartingale $F$ by $F_t = M_t^F+ A_t^F$. Define the process $\Lambda$ via
$$ \Lambda_t = \int_0^t (1- F_{s-})^{-1} dA_s^F \,,\quad  t \ge 0.$$
Then
$Y_t - \Lambda_{t \wedge \tau}$ is an $\bbF^\bbM$-martingale. In particular, if  $A^F$ is
absolutely continuous, that is if $dA_t^F  = \gamma_t^A dt$, $\tau$ has the default
intensity $\lambda_t = \gamma_t^A/(1-F_{t-})$.
\end{proposition}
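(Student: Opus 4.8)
The plan is to invoke Proposition~\ref{prop:compensator-via-hazard-approach}, so the whole argument reduces to identifying the hazard process $F_t = Q(\tau\le t \mid \F_t^Z\vee\F_t^D)$ together with its $Q$--Doob--Meyer decomposition and to checking $F_t<1$. First I would derive a closed expression for $F$. Applying the Kallianpur--Striebel formula~\eqref{eq:kallianpur-striebel-2} with $f=1_{(K,\infty)}$, using the identity $\{\tau>t\}=\{X_t>K\}$ (as in the proof of Proposition~\ref{prop:reduction-to-background}) and the representation $\Sigma_t g = (u(t),g)_{S^X} + \nu_K(t)\, g(K) + \nu_N(t)\, g(N)$ of Theorem~\ref{prop:dividend-information}, one obtains
\[
1-F_t \;=\; \frac{\Sigma_t 1_{(K,\infty)}}{\Sigma_t 1} \;=\; \frac{(u(t),1)_{S^X}+\nu_N(t)}{(u(t),1)_{S^X}+\nu_K(t)+\nu_N(t)} \;=\; \frac{C(t)}{C(t)+\nu_K(t)}\,,
\]
where $C(t)$ is the norming constant of Corollary~\ref{corr:cond-density}, so that $\Sigma_t 1 = C(t)+\nu_K(t)$. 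Since $u(t,\cdot)\ge 0$ solves the filter SPDE~\eqref{eq:du-t-with-jumps}, the parabolic strong maximum principle (together with $\pi_0\not\equiv 0$ and $V_0>K$ a.s.) gives $C(t)>0$ for all $t$, while $\nu_K(t)\ge 0$ is finite; hence $F_t<1$ for all $t$ and Proposition~\ref{prop:compensator-via-hazard-approach} applies. It therefore suffices to show that the $Q$--Doob--Meyer decomposition $F_t = M_t^F + A_t^F$ has $A^F$ absolutely continuous and to compute its density.

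The key computation is the $Q^*$--semimartingale dynamics of $\Sigma_\cdot 1$ and of $\Sigma_\cdot 1_{(K,\infty)}=C(\cdot)$ with respect to $\bbF^Z\vee\bbF^D$, where under $Q^*$ the process $Z$ is a Brownian motion and $\mu^D-\gamma^{D,*}$ a martingale measure. Using the weak form~\eqref{eq:int-wrt-gammaD^2} of the SPDE~\eqref{eq:du-t-with-jumps}, rewritten with $\mathcal{L}^*$ in place of $\mathcal{A}^*$ — valid, also for the test function $v\equiv 1\in L^2(S^X)\setminus H_0^1(S^X)$, by the regularity $u(t)\in H^2(S^X)\cap H_0^1(S^X)$, cf.\ \eqref{eq:d(u,v)-t-2} and the remark following Theorem~\ref{thm:SPDE-for-u} — and integrating by parts with $u(t,K)=u(t,N)=0$, one gets $(\mathcal{L}^* u(s),1)_{S^X} = \tfrac12\sigma^2 N^2\frac{du}{dx}(s,N) - \tfrac12\sigma^2 K^2\frac{du}{dx}(s,K)$. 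Adding the dynamics of $\nu_N$ from Theorem~\ref{prop:dividend-information}, the two upper--boundary terms cancel and one is left with $dC(t) = -\tfrac12\sigma^2 K^2\frac{du}{dx}(t,K)\,dt + dM_t^* = -\,d\nu_K(t) + dM_t^*$, where $M^*$ is a $Q^*$--local martingale: it consists of the $dZ$--integral and the $\mu^D$--integral, and the latter has zero $\gamma^{D,*}$--compensator because the change of variables $z=x+\kappa y$ and the identity $\int_0^{v-K}\varphi(y,v)\,dy = 1$ (cf.\ \eqref{eq:int-wrt-gammaD}) make $\int_{\R^+}\big(S_{\kappa y}(u(s-)\varphi(y,\cdot)/\varphi^*(y)) - u(s-),1\big)_{S^X}\varphi^*(y)\,dy = 0$ and similarly kill the $\nu_N$--jump term. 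The same bookkeeping applied to $\Sigma_\cdot 1 = C(\cdot)+\nu_K(\cdot)$ gives vanishing drift, confirming that $\Sigma_\cdot 1$ is a $Q^*$--martingale; indeed $\Sigma_t 1 = E^{Q^*}(L_t\mid\F_t^Z\vee\F_t^D)$ is, by Lemma~\ref{lemma:density-for-dividends}, exactly the density process of $Q$ relative to $Q^*$ on $\bbF^Z\vee\bbF^D$.

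Finally I would transfer to $Q$ by the abstract Bayes rule: an adapted process $\xi$ is a $Q$--(local) martingale on $\bbF^Z\vee\bbF^D$ if and only if $\xi\,\Sigma_\cdot 1$ is a $Q^*$--(local) martingale. Testing $\xi_t = (1-F_t)+A_t^F$ for a continuous increasing process $A^F$ and using $(1-F_t)\,\Sigma_t 1 = C(t)$ with $dC(t) = -\,d\nu_K(t)+dM_t^*$, the finite--variation part of $C(t)+\Sigma_t 1\,A_t^F$ equals $-\,d\nu_K(t)+\Sigma_{t-}1\,dA_t^F$ (the covariation term vanishes as $A^F$ is continuous of finite variation), which is zero precisely when $dA_t^F = d\nu_K(t)/\Sigma_t 1$; the left limit of $\Sigma_\cdot 1$ may be replaced by its value because $\nu_K$ is absolutely continuous, so the jumps of $\Sigma_\cdot 1$ at the dividend dates are irrelevant. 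Hence $A^F$ is absolutely continuous with density $\gamma_t^A = \tfrac12\sigma^2 K^2\frac{du}{dx}(t,K)/\Sigma_t 1$, so $F_t = (F_t-A_t^F)+A_t^F$ is its $Q$--Doob--Meyer decomposition, and since $1-F_{t-} = C(t)/\Sigma_t 1$ for Lebesgue--a.e.\ $t$, Proposition~\ref{prop:compensator-via-hazard-approach} yields
\[
\lambda_t \;=\; \frac{\gamma_t^A}{1-F_{t-}} \;=\; \frac{\tfrac12\sigma^2 K^2\frac{du}{dx}(t,K)}{C(t)} \;=\; \tfrac12\sigma^2 K^2\,\frac{d\pi}{dx}(t,K)
\]
for a.e.\ $t$, using $\pi(t,\cdot)=u(t,\cdot)/C(t)$ from Corollary~\ref{corr:cond-density}. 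This is all that is needed for the compensator $\Lambda_{t\wedge\tau}$, $\Lambda_t=\int_0^t\lambda_{s-}\,ds$, and $Y_t-\Lambda_{t\wedge\tau}$ is an $\bbF^\bbM$--martingale, as claimed.

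The step I expect to be the main obstacle is the boundary bookkeeping in the second paragraph: showing that after adding the dynamics of $(u(t),1)_{S^X}$ and of $\nu_N(t)$ the upper--boundary contributions $\tfrac12\sigma^2 N^2\frac{du}{dx}(s,N)$ cancel exactly while the lower--boundary term reproduces $-\,d\nu_K(t)$, and verifying that the jump part of $C(\cdot)$ is a genuine $Q^*$--martingale — which requires a careful treatment of the shift operator $S_{\kappa y}$, the change of variables that collapses the dividend--jump integrand against $\gamma^{D,*}$ via $\int_0^{v-K}\varphi(y,v)\,dy = 1$, and the justification of $v\equiv 1$ as an admissible test function despite $1\notin H_0^1(S^X)$.
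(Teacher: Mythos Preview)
Your proposal is a correct proof of Theorem~\ref{thm:default-intensity} (note that Proposition~\ref{prop:compensator-via-hazard-approach} itself is not proved in the paper; it is quoted as a well-known result from the hazard-process literature and used as a tool, exactly as you do). The overall strategy --- invoke the proposition and identify the Doob--Meyer decomposition of $F$ --- is the same as the paper's, but your route to that decomposition is substantially longer.

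The paper works with $F_t$ directly rather than $1-F_t$: since $\{\tau\le t\}=\{X_t=K\}$, one has $F_t = \Sigma_t 1_{\{K\}}/\Sigma_t 1 = \nu_K(t)/\Sigma_t 1$. Two one-line observations then finish the job: (i) $\nu_K$ is continuous and absolutely continuous with known density $\tfrac12\sigma^2K^2\,\frac{du}{dx}(t,K)$, already provided by Theorem~\ref{prop:dividend-information}; (ii) $(\Sigma_t 1)^{-1}$ is automatically a $Q$-local martingale because $\Sigma_t 1$ is the density process of $Q$ with respect to $Q^*$ on $\bbF^Z\vee\bbF^D$ (Jacod--Shiryaev, Corollary~III.3.10). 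The It\^o product rule then gives $A_t^F=\int_0^t (\Sigma_{s-}1)^{-1}\,d\nu_K(s)$ immediately, with vanishing covariation since $\nu_K$ is continuous of finite variation. This completely sidesteps the boundary bookkeeping you flag as the main obstacle: there is no need to derive the $Q^*$-dynamics of $C(t)$, no need to check that the dividend-jump compensator of $C$ under $\gamma^{D,*}$ vanishes, and no need to justify $v\equiv 1$ as a test function. Your cancellations are all correct --- and the $dC(t)$ computation you outline does in fact appear later in the paper, in the proof of Proposition~\ref{lem:d-pit-f} --- but for the present theorem the paper's choice of writing $F_t=\nu_K(t)\cdot(\Sigma_t 1)^{-1}$ and appealing to the abstract local-martingale property of the inverse density is the more economical path.
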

In order to apply the proposition we need to compute the Doob-Meyer decomposition of the
submartingale $F$.  Here we get
$$
F_t= Q ( \tau \le t \mid \F_t^Z \vee \F_t^D ) = Q (X_t= K \mid \F_t^Z \vee \F_t^D ) = \frac{\Sigma_t \ind{K} }{ \Sigma_t 1} .
$$
Theorem~\ref{prop:dividend-information} gives $\Sigma_t  \ind{K}  = \nu_K(t)$ and   $d \nu_K (t) =\frac{1}{2}\sigma^2 K^2 \frac{du}{dx}(t-, K) dt$.

Next we consider the term $ (\Sigma_t 1)^{-1} $.  By definition it holds that
$ \Sigma_t 1 = E^{Q^*}(L_t\mid \F_t^Z \vee \F_t^D) = ({dQ}/{d Q^* })\vert_{\F_t^Z \vee \F_t^D}.$  Hence we get that $(\Sigma_t 1 )^{-1} $ is a $Q$-local martingale; see for instance \citeasnoun{bib:jacod-shiryaev-03}, Corollary~III.3.10.
It\^o's product rule therefore gives that
$$
A_t^F = \int_0^t \frac{1}{\Sigma_{s-} 1 } \frac{1}{2}\sigma^2 K^2 \frac{du}{dx}(s-, K) \, ds.
$$
Furthermore we have
\begin{equation}\label{eq:1-F}
1-F_t = Q( X_t > K \mid \F_t^Z \vee \F_t^D) = \frac{1}{\Sigma_t 1} \big (\big(u(t), 1 \big)_{S^X}  + \nu_N(t) \big).
\end{equation}
The claim thus follows from Proposition~\ref{prop:compensator-via-hazard-approach} and
from the definition of $\pi(t,x) $ in Corollary~\ref{corr:cond-density}.
\end{proof}

\subsection{Asset Price Dynamics}
\label{subsec:filtering-dynamics}

In this section we derive the dynamics of the traded security prices.
In line with standard notation we denote for $f \colon ([0,T] \times S^X) \to \R$  with $E^Q(|f(t,X_t)|) < \infty$ for all $t \le T$ the optional projection of the process $(f(t,X_t))_{0 \le t \le T}$ on the modelling filtration by
$\widehat f_t  = E^Q(f(t,X_t) \mid \F_t^\bbM ) $. For smooth functions $f$ on $S^X$ we  define  the
operator $\mathcal{L}_X f(x)= \Ind{(K,N) }(x) \mathcal{L} f(x)$
($\mathcal{L}_X $ is the the generator of  $X$ between dividend dates).

Using Corollary~\ref{corr:cond-density} and the fact that $X_t = K$ on $\{\tau \le t\}$ one obviously has
\begin{equation} \label{eq:f-hat-vs-pitf}
\widehat{f}_t = \ind{\tau \le t} f(t,K) + \ind{\tau >t} \big (\pi(t),
f(t,\cdot)\big)_{S^X} + \pi_N(t) f(t,N) .
\end{equation}
Hence a crucial step in the derivation of asset price dynamics is to compute the dynamics of
$\pi_t f := \big (\pi(t),f(t,\cdot)\big)_{S^X} + \pi_N(t) f(t,N) $. This is done in the following proposition.

\begin{proposition} \label{lem:d-pit-f} With $\lambda_t = \frac{1}{2} \sigma^2 K^2 \frac{d
\pi(t,K)}{dx}$ it holds that
\begin{align} \label{eq:d-pit-f}
d \pi_t f &= \Big ( \pi_t \big (\frac{df}{dt} + \mathcal{L}_X f\big) - \lambda_t (f(t,K) - \pi_t f )\Big ) \, dt + \big (\pi_t( a^{\top} f) - \pi_t a^{\top}
\pi_t f \big) \, d (Z_t - \pi_t a \,dt)\\ \nonumber
 & +  \int_{\R^+} \Big(
\frac{\pi_{t-}({f(\cdot- \kappa y) \varphi(y,\cdot)})
}{ \pi_{t-} \varphi(y,\cdot)} - \pi_{t-} f \Big ) \; \mu^D(dy,ds)\,.
\end{align}
\end{proposition}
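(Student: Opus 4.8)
The plan is to derive the dynamics of $\pi_t f$ by combining the SPDE for the unnormalized density $u$ (Theorem~\ref{prop:dividend-information}) with It\^o's formula for the quotient $\pi(t) = u(t)/C(t)$, where $C(t) = (u(t),1)_{S^X} + \nu_N(t)$. First I would record the dynamics of the numerator $(u(t), f(t,\cdot))_{S^X} + \nu_N(t) f(t,N)$ and of the normalizing constant $C(t)$ separately, using \eqref{eq:d(u,v)-t-2}, the dynamics of $\nu_N$ from Theorem~\ref{prop:dividend-information}, and the product rule to handle the time-dependence of $f$. Between dividend dates, the numerator picks up a drift $\big(\mathcal{L}^* u(t), f\big)_{S^X} + (u(t), \frac{df}{dt})_{S^X} - \frac12\sigma^2 N^2 \frac{du}{dx}(t,N) f(t,N)$; integrating the $\mathcal{L}^*$ term by parts and using the boundary terms at $K$ and $N$ together with $\nu_K'(t) = \frac12\sigma^2 K^2 \frac{du}{dx}(t,K) = \lambda_t C(t)$ (by Theorem~\ref{thm:default-intensity} and the definition of $\pi$) converts this into $(u(t), \mathcal{L}_X f + \frac{df}{dt})_{S^X} + \lambda_t C(t) f(t,K) + (\text{boundary at }N) + \nu_N(t)\mathcal{L}$-type terms; the $N$-boundary terms cancel against the drift of $\nu_N f(t,N)$, which is exactly the point of the definition of $\nu_N$. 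The diffusion part of the numerator is $\big(a^\top u(t), f\big)_{S^X}\,dZ_t + a^\top(N)\nu_N(t) f(t,N)\,dZ_t = C(t)\,\pi_t(a^\top f)\,dZ_t$, and similarly $C(t)$ has diffusion coefficient $C(t)\,\pi_t a^\top\,dZ_t$.

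Next I would apply It\^o's quotient formula to $\pi_t f = \text{Numerator}_t / C(t)$. The first-order terms give the $\pi_t(\frac{df}{dt} + \mathcal{L}_X f)\,dt$ contribution and, from the $N$-boundary bookkeeping, the $\lambda_t(\pi_t f - f(t,K))\,dt$ drift (note $\lambda_t\pi_t f$ comes from differentiating $1/C(t)$ against the $\lambda_t C(t) f(t,K)$ drift of the numerator — this is the standard "gain from the survival information" correction). The cross-variation term $-\frac{\text{Num}_t}{C(t)^2}\,d\langle \text{Num}, C\rangle_t + \frac{\text{Num}_t}{C(t)^3}d\langle C\rangle_t$ combined with the direct diffusion term produces exactly $\big(\pi_t(a^\top f) - \pi_t a^\top \,\pi_t f\big)\,dZ_t$ together with the quadratic correction $-\big(\pi_t(a^\top f) - \pi_t a^\top\pi_t f\big)\pi_t a\,dt$, which is why the innovations process $Z_t - \int_0^t \pi_s a\,ds$ appears: rewriting $dZ_t = d(Z_t - \pi_t a\,dt) + \pi_t a\,dt$ absorbs the $dt$-correction cleanly. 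For the jump part at a dividend date $t_n$, the numerator jumps from the action of $S_{\kappa y}\big(u(t_n-)\varphi(y,\cdot)/\varphi^*(y)\big)$ and $C$ jumps correspondingly; the ratio at the jump is $\big(S_{\kappa d_n}(u(t_n-)\varphi(d_n,\cdot)/\varphi^*(d_n)), f(t_n,\cdot)\big)/\text{(new }C\text{)}$, and dividing numerator and denominator through by $\varphi^*(d_n) C(t_n-)$ and using the substitution $x \mapsto x + \kappa d_n$ (which moves the shift operator off $u$ and onto $f$) turns this into $\pi_{t-}\big(f(\cdot - \kappa y)\varphi(y,\cdot)\big)/\pi_{t-}\varphi(y,\cdot)$, giving the last line of \eqref{eq:d-pit-f} after subtracting $\pi_{t-}f$.

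I expect the main obstacle to be the careful bookkeeping of the boundary terms at $x = N$: one must verify that the partial-integration boundary contribution $-\frac12\sigma^2 N^2\frac{du}{dx}(t,N) f(t,N)$ from $\mathcal{L}^* u$ exactly matches (and hence cancels, after normalization) the drift generated by the $\nu_N$-part of the definition of $\pi_t f$, both in the drift and in the jump terms — and to see that after all cancellations nothing involving $N$ survives in the final formula except implicitly through $C(t)$. A secondary subtlety is justifying the interchange of $(\mathcal{L}^* u, f)_{S^X} = (u, \mathcal{L} f)_{S^X} + \text{boundary}$, which requires $u(t) \in H^2(S^X) \cap H^1_0(S^X)$ and $f(t,\cdot)$ smooth on $[K,N]$ — this is available from Theorem~\ref{prop:dividend-information} and the regularity of the full-information values $h^{\text{surv}}, h^{\text{def}}, h^{\text{stock}}$ discussed in Section~\ref{sec:pricing-basic-securities}. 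The time-dependence of $f$ only adds the harmless $(u(t),\frac{df}{dt})_{S^X}\,dt$ term, handled by the ordinary product rule. Once the boundary cancellations are in place, the remaining steps are routine It\^o calculus.
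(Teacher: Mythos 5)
Your approach is essentially the same as the paper's: write $\pi_t f$ as the ratio of $(u(t),f)_{S^X}+\nu_N(t)f(t,N)$ to $C(t)$, compute both dynamics from the SPDE and the $\nu_N$-equation, integrate $(\mathcal{L}^*u,f)_{S^X}$ by parts to cancel the $N$-boundary term against the $\nu_N$-drift and to produce the $\frac12\sigma^2K^2\frac{du}{dx}(t,K)f(K)$ boundary contribution, apply the It\^o quotient formula, and handle the dividend-date jump by rewriting $\Sigma_{t_n}(f\ind{\cdot>K})/\Sigma_{t_n}(\ind{\cdot>K})$ via the form of $L_{t_n}$. The only blemish is a sign slip in your intermediate statement (the boundary contribution to the numerator drift is $-\lambda_t C(t)f(t,K)$, not $+\lambda_t C(t)f(t,K)$), but this is corrected in your final form $\lambda_t(\pi_t f - f(t,K))$ and does not affect the argument.
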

\noindent The proof is essentially a tedious application of the It{\^o} formula; it is given in  Appendix~B.

Now we are in a position to derive the price  dynamics of  the traded securities introduced in  Section~\ref{sec:pricing-basic-securities}. We begin with some notation. Let
\begin{equation}
\label{eq:def-^MZ}
M^Z_t  = M^{Z,\bbF^\bbM}_t = Z_t - \int_0^t \widehat{a}_s ds\, ,\quad t\ge 0.
\end{equation}
It is well known that  $M^Z$ is a $(Q, \bbF^\bbM)$ Brownian motion and hence the martingale
part in the $\bbF^\bbM$-semimartingale decomposition of $Z$. Next we define the $\bbF^\bbM$-martingale $M^Y$ by $M_t^Y = Y_t - \int_0^{t \wedge \tau} \lambda_s ds$. Finally, we
will  use the shorthand notation $(\widehat{\varphi(y)})_t$ for the optional projection of  $\varphi(y,X_t)$ on $\bbF^\bbM$   and we denote the $\bbF^\bbM$ compensator of $\mu^D$ by
\begin{equation}\label{eq:def-compensator-of-dividends-deterministic}
\gamma^{D, \bbF^{\bbM}}(dy, dt) = \sum_{n=1}^\infty (\widehat{\varphi(y)})_{t_n -}  dy \, \delta_{\{t_n\}} (dt) \,.
\end{equation}

\begin{theorem} \label{thm:asset-price-dynamics} Denote by $\Pi^{\text{surv}}$, by $\Pi^{\text{def}}$ and by  $S$ the ex-dividend price (the price value  of the future cash flow stream) of the survival claim, of the default claim and of the stock  of the firm. Then it holds that
\begin{align} \label{eq:dHsurv}
\Pi_t^{\text{surv}}  & = \Pi_0^{\text{surv}}  + \int_0^{t \wedge \tau}  r \Pi_s^{\text{surv}} ds  +  \int_0^{\tau \wedge t} (\widehat{ h^{\text{surv}} a^\top})_{s-} -  \Pi^{\text{surv}}_{s-} \,\widehat {a}_{s-}^{\top} \, d M_s^{Z,\bbF^\bbM} \\\nonumber & - \int_0^{t \wedge \tau}    \Pi^{\text{surv}}_{s-} \,d  M_s^Y +
 \int_0^{\tau \wedge t} \int_{\R^+} \frac{( \widehat{h^{\text{surv}} \varphi(y)})_{s-}}{ (\widehat{\varphi (y)})_{s-}} -    \Pi^{\text{surv}}_{s-} \;
  (\mu^D -\gamma^{D,\bbF^{\bbM}})(dy,ds )  \,.\\ \label{eq:dHdef}
 \Pi_t^{\text{def}}  & = \Pi_0^{\text{def}}  + \int_0^{t \wedge \tau}  r \Pi_s^{\text{def}} -  \lambda_s ds  +  \int_0^{\tau \wedge t} (\widehat{ h^{\text{def}} a^\top})_{s-} -  \Pi^{\text{def}}_{s-} \,\widehat {a}_{s-}^{\top} \, d M_s^{Z,\bbF^\bbM}
 \\\nonumber & -  \int_0^{t \wedge \tau}   \Pi^{\text{def}}_{s-} \,d  M_s^Y +
 \int_0^{\tau \wedge t} \int_{\R^+} \frac{( \widehat{h^{\text{def}} \varphi(y)})_{s-}}{ (\widehat{\varphi (y)})_{s-}} -    \Pi^{\text{def}}_{s-} \;
  (\mu^D -\gamma^{D,\bbF^{\bbM}})(dy,ds )   \\
\label{eq:dSt}
S_t  & =  S_0  + \int_0^{t \wedge \tau}  r{S}_s ds  - \int_0^{t \wedge \tau} \int_{R^+} y \gamma^{D,\bbF^{\bbM}}(dy,ds ) +  \int_0^{\tau \wedge t} (\widehat{ h^{\text{stock}} a^\top})_{s-} -  {S}_{s-} \,\widehat {a}_{s-}^{\top} \, d M_s^{Z,\bbF^\bbM} \\& - \int_0^{t \wedge \tau} {S}_{s-} \,d  M_s^Y +
 \int_0^{\tau \wedge t} \! \int_{\R^+} \frac{( \widehat{h^{\text{stock}} \varphi(y)})_{s-} }{ (\widehat{\varphi (y)})_{s-}}- S_{s-} \; (\mu^D -\gamma^{D,\bbF^{\bbM}})(dy,ds )  \,. \nonumber
\end{align}
\end{theorem}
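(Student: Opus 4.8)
The plan is to combine the general pricing representation \eqref{eq:pricing-via-filtering} with the filter dynamics from Proposition~\ref{lem:d-pit-f}, and then reconcile the result with the martingale property of discounted gains. For each of the three claims, the ex-dividend price on the pre-default set is $\ind{\tau>t}\Pi_t^H = \ind{\tau>t}\pi_t h(t,\cdot)$ where $h = h^{\text{surv}}, h^{\text{def}}$ or $h^{\text{stock}}$ is the full-information value characterized by the PDEs in Section~\ref{sec:pricing-basic-securities} (including the boundary conditions $h^{\text{surv}}(t,K)=0$, $h^{\text{def}}(t,K)=1$, $h^{\text{stock}}(t,K)=0$ and the jump conditions \eqref{eq:full-info-value-at-tn-surv}, \eqref{eq:full-info-value-at-tn} at dividend dates). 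First I would write $\Pi_t^H = \widehat{h}_t$ in the notation of \eqref{eq:f-hat-vs-pitf} and apply Proposition~\ref{lem:d-pit-f} with $f(t,x)=h(t,x)$; this already produces a $dt$-term, a $dM^{Z,\bbF^\bbM}$-term and a $\mu^D$-term of the right shape, the key observation being that $M^Z_t = Z_t - \int_0^t\widehat a_s\,ds$ so that $d(Z_t-\pi_t a\,dt)$ in \eqref{eq:d-pit-f} is exactly $dM^{Z,\bbF^\bbM}_t$ on $\{\tau>t\}$.

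The heart of the argument is to simplify the $dt$-drift. Between dividend dates the PDE $\partial_t h + \mathcal{L}h = rh$ holds on $(K,N)$, so $\pi_t(\partial_t h + \mathcal{L}_X h)$ can be replaced by $r\,\pi_t h$ plus a boundary contribution coming from the difference between $\mathcal{L}_X h = \ind{(K,N)}\mathcal{L}h$ and $\mathcal{L}h$; here the boundary values $h(t,K)$ and the spatial derivative $\tfrac{d\pi}{dx}(t,K)$ enter, and after using Theorem~\ref{thm:default-intensity} ($\lambda_t = \tfrac12\sigma^2K^2\tfrac{d\pi}{dx}(t,K)$) the $-\lambda_t(h(t,K)-\pi_t h)$ term in \eqref{eq:d-pit-f} combines with this boundary term. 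For the survival claim $h^{\text{surv}}(t,K)=0$ so the $\lambda$-terms collapse to $+\lambda_t\,\Pi^{\text{surv}}_t\,dt$, which is precisely the compensator that turns the pure-jump $-\Pi^{\text{surv}}_{s-}\,dY_s$ at $\tau$ into the martingale $-\Pi^{\text{surv}}_{s-}\,dM^Y_s$ in \eqref{eq:dHsurv}; for the default claim $h^{\text{def}}(t,K)=1$ gives the extra $-\lambda_s\,ds$ drift in \eqref{eq:dHdef}; for the stock $h^{\text{stock}}(t,K)=0$ as for the survival claim. I would handle the behavior at $\tau$ separately: on $\{\tau>t\}$ the price equals $\widehat h_t$, at $\tau$ the price of a survival/stock claim drops to $0$ and of the default claim jumps to its payoff, and combining the pre-default dynamics with this jump and compensating by $\lambda$ yields the $-\Pi_{s-}\,dM^Y_s$ terms uniformly; the stopping at $t\wedge\tau$ in all integrals reflects that we only track the pre-default value.

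Next I would deal with the dividend dates. At $t_n$ the filter $u$ is updated by the likelihood ratio $\varphi(d_n,\cdot)/\varphi^*(d_n)$ and shifted by $\kappa d_n$ via \eqref{eq:jump-of-u}--\eqref{eq:update-shift}, which is exactly the $\mu^D$-integral in Proposition~\ref{lem:d-pit-f}. Using the jump relations \eqref{eq:full-info-value-at-tn-surv} and \eqref{eq:full-info-value-at-tn} for $h$ one identifies $\pi_{t-}(h(\cdot-\kappa y)\varphi(y,\cdot))/\pi_{t-}\varphi(y,\cdot)$ with $(\widehat{h\varphi(y)})_{t-}/(\widehat{\varphi(y)})_{t-}$ up to the normalization, and subtracting the $\bbF^\bbM$-compensator $\gamma^{D,\bbF^\bbM}$ from \eqref{eq:def-compensator-of-dividends-deterministic} turns this into the compensated-measure integral appearing in \eqref{eq:dHsurv}--\eqref{eq:dSt}. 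For the stock there is the additional subtlety that the stock's cumulative cash flow is $D_{t\wedge\tau}$, so the ex-dividend price $S$ satisfies $d S_t = d\Pi_t^{H^{\text{stock}}} - \ind{\tau>t}\,dD_t$; taking the $\bbF^\bbM$-compensator of $dD_t = \int y\,\mu^D(dy,dt)$ produces the $-\int y\,\gamma^{D,\bbF^\bbM}(dy,ds)$ drift in \eqref{eq:dSt}, and the jump-size in the $\mu^D$-integral absorbs the arithmetic term $+y$ from \eqref{eq:full-info-value-at-tn} so the final form matches. The main obstacle I anticipate is precisely this bookkeeping at the dividend dates for the stock — carefully separating the change in the full-information value $h^{\text{stock}}$ at $t_n$ (which includes the $+y$ term) from the actual dividend outflow, and checking that the two contributions recombine into the single compensated integral $\int\bigl((\widehat{h^{\text{stock}}\varphi(y)})_{s-}/(\widehat{\varphi(y)})_{s-} - S_{s-}\bigr)(\mu^D-\gamma^{D,\bbF^\bbM})(dy,ds)$ plus the $-\int y\,\gamma^{D,\bbF^\bbM}$ drift — together with verifying that all the exceptional boundary terms at $x=N$ are negligible in the sense of Proposition~\ref{prop:bounded-domain} (or vanish because $a(N)$ contributions sit in $\pi_N$, which one sends to zero by letting $N\to\infty$).
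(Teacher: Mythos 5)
Your plan follows the paper's own proof quite closely: write the pre-default price as $\ind{\tau>t}\pi_t h$, apply Proposition~\ref{lem:d-pit-f} with $f=h$, use the PDE for $h$ and its boundary condition at $K$ to simplify the drift, and use the jump relations \eqref{eq:full-info-value-at-tn-surv} and \eqref{eq:full-info-value-at-tn} to show the $\mu^D$-integral can be replaced by the compensated $(\mu^D-\gamma^{D,\bbF^\bbM})$-integral (exactly for the survival and default claims, up to the residual $-\int y\,\gamma^{D,\bbF^\bbM}(dy,ds)$ drift for the stock); the identification $d(Z_t-\pi_t a\,dt)=dM^{Z,\bbF^\bbM}_t$ on $\{\tau>t\}$ is also the paper's.

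One spot you should clean up: there is no ``boundary contribution coming from the difference between $\mathcal{L}_X h$ and $\mathcal{L}h$.'' Since $\mathcal{L}_X h$ and $\mathcal{L}h$ differ only on the (Lebesgue-null) boundary $\{K,N\}$ and the filter $\pi(t,\cdot)$ vanishes there (it lives in $H^1_0$), one has $\pi_t(\partial_t h+\mathcal{L}_X h)=r\,\pi_t h$ \emph{exactly} between dividend dates (modulo the negligible $\pi_N$ mass). The $\lambda_t$-terms do not arise from any such boundary correction; they enter \emph{only} through the $-\lambda_t\big(h(t,K)-\pi_t h\big)$ term already present in \eqref{eq:d-pit-f}, and the default-intensity formula from Theorem~\ref{thm:default-intensity} is already baked into that term via its derivation. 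Once you drop the spurious boundary correction, the assembly you describe ($\lambda$-terms providing the $M^Y$ compensator, the extra $-\lambda_s\,ds$ from $h^{\text{def}}(t,K)=1$) is exactly right. Two smaller notational issues: at $\tau$ all three \emph{ex-dividend} prices drop to zero, the default claim included — the ``payoff at $\tau$'' lives in the gains process $G^{\text{def}}$, not in $\Pi^{\text{def}}$; and $S_t=\Pi^{H^{\text{stock}}}_t$ by definition, so writing $dS_t=d\Pi^{H^{\text{stock}}}_t-\ind{\tau>t}dD_t$ conflates the ex-dividend price with the gains process. The $-\int y\,\gamma^{D,\bbF^\bbM}(dy,ds)$ drift comes, as you also correctly note later, from the $+y$ term in \eqref{eq:full-info-value-at-tn} when one integrates the $\mu^D$-integrand against $\gamma^{D,\bbF^\bbM}$, not from separately subtracting a dividend outflow.
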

\begin{proof}
We begin with the survival claim. It follows from relations~\eqref{eq:pricing-via-filtering} and~\eqref{eq:f-hat-vs-pitf} that
$$
\Pi_t^{\text{surv}} = 1_{\{\tau >t\}} (\widehat{h^{\text{surv}}})_t =  1_{\{\tau >t\}} \pi_t h^{\text{surv}}
$$
so that $ d \Pi_t^{\text{surv}}  = (1- Y_{t-}) d \pi_t h^{\text{surv}} - \Pi_{t-}^{\text{surv}}dY_t$. Now recall that
$\frac{d }{dt} h^{\text{surv}} + \mathcal{L}_X h^{\text{surv}} = r h^{\text{surv}}$ and that  $h^{\text{surv}}(t,K) \equiv 0$. Substituting these relation into the dynamics of $\pi_t h^{\text{surv}}$ gives
\begin{align}\nonumber
d \pi_t h^{\text{surv}} &= \Big ( r \pi_t h^{\text{surv}}  +  \lambda_t \pi_t h^{\text{surv}} \Big ) \, dt + \big (\pi_t( a^{\top} h^{\text{surv}}) - \pi_t a^{\top}
\pi_t h^{\text{surv}} \big) \, d (Z_t - \pi_t a \,dt)\\ \label{eq:hsurv-int-wrt-muD}
 & +  \int_{\R^+} \Big(
\frac{\pi_{t-}({h^{\text{surv}}(\cdot- \kappa y) \varphi(y,\cdot)})
}{ \pi_{t-} \varphi(y,\cdot)} - \pi_{t-} h^{\text{surv}} \Big ) \; \mu^D(dy,dt)\,.
\end{align}
 Now, using the definition of $\gamma^{D,\bbF^{\bbM}}$ and Fubini we get at a dividend date $t_n < \tau$  that
\begin{align} \label{eq:hsurv-int-wrt-muD-2}
 &\int_{\R^+} \frac{\pi_{t-}({h^{\text{surv}}(\cdot- \kappa y) \varphi(y,\cdot)})
   }{ \pi_{t-} \varphi(y,\cdot)}  \; \gamma^{D,\bbF^{\bbM}}(dy,\{t_n\}) =
 \pi_{t_n-} \Big( \int_{\R^+}  ({h^{\text{surv}}(\cdot- \kappa y) \varphi(y)})_{t_n-} dy\Big)\,.
 \end{align}
Relation~\eqref{eq:full-info-value-at-tn-surv} implies that the right hand side of \eqref{eq:hsurv-int-wrt-muD-2}  is equal to
$\pi_{t_n-} {h^{\text{surv}} (t_n, \cdot)}$. This shows that in \eqref{eq:hsurv-int-wrt-muD}
the integral with respect to $\mu^D(dy,ds)$ can be replaced with an integral with respect to  $(\mu^D -\gamma^{D,\bbF^{\bbM}})(dy,ds )$. Since for generic functions $f \colon [0,T] \times S^X \to \R $ it holds that $\widehat{f}_t = \pi_t f$ on  $\{t < \tau \} $  we finally  obtain the result for $\Pi^\text{surv}$.
Mutatis mutandis these arguments also apply to the default claim and to the stock price. The additional term $-\lambda_s ds$ in the drift of $\Pi^{\text{def}}$ stems from the fact that $h^{\text{def}}(t,K) =1$;  the additional integral with respect to $\gamma^{D,\bbF^{\bbM}}(dy, ds)$ in the dynamics of the stock price is due to the different behaviour of $h^{\text{stock}}$ at a dividend date, see~\eqref{eq:full-info-value-at-tn}. Of course this term  is quite intuitive: the expected downward jump in the stock price at a dividend date is just equal to the expected dividend payment.
\end{proof}

\paragraph{Comments and extensions.} Theorem~\ref{thm:asset-price-dynamics} formalizes the idea that the prices of traded corporate securities are driven by the arrival of new information on the value of the underlying firm,  since the processes that drive the  asset price dynamics are closely related to the generators of $\bbF^\bbM$.

In order to study dynamic  hedging strategies we need the dynamics and  the predictable quadratic variation of the  cum dividend price or \emph{gains process} of the traded assets. The survival claim has no intermediate cash flows and we have $dG^{\text{surv}}_t = d\Pi^{\text{surv}}_t$; for the default claim it holds that $dG_t^{\text{def}} = d\Pi^{\text{def}}_t + dY_t$; for the stock we have $dG_t^{\text{stock}}= dS_t + (1-Y_{t-})\,dD_{t}$. Note that Theorem~\ref{thm:asset-price-dynamics} implies that  the discounted gains processes of all three assets are martingales --- as they have to be given that we work directly under a  martingale measure  $Q$.
To compute the   quadratic variations note that from Theorem~\ref{thm:asset-price-dynamics}, the  discounted gains process of the $i$th traded asset has a martingale representation of the form
\begin{align*}
\widetilde{G}_t^i &= G_0^i + \int_0^{t\wedge \tau}\!(\xi_{s,i}^{M^Z})^\top d M_s^Z + \int_0^{t\wedge \tau}\!\xi_{s,i}^{Y}\,d M_s^Y +
\int_0^{t\wedge \tau} \!\!\! \int_{\R^+} \!\xi_{i}^{D} (s,y) (\mu^D -  \gamma^{D,\bbF^{\bbM}})(dy,ds),
\end{align*}
and the integrands are explicitly given in  the theorem. Define a measure $b$ on $[0, \infty)$ by letting $b([0,t]) = b(t):= t + \sum_{n =1}^\infty \delta_{\{t_n\}} ( [0,t]) $ ($b$ is the sum of the Lebesgue measure and the counting measure on the set  of dividend dates $\mathcal{T}^D$).   Then the predictable quadratic variation with respect to $\bbF^\bbM$ of the discounted gains processes of asset $i$ and asset $j$ is of the form
$\langle {\widetilde G}^i, \widetilde{G}^j \rangle_t  =  \int_0^{t\wedge \tau} v_s^{ij} d b(s)$ with \emph{instantaneous quadratic variation} $v^{ij}_s$ given by
\begin{align} \label{eq:quad-var}
 v^{ij}_s &= 1_{([0,\infty) \setminus \mathcal{T}^D )}(s) \Big ((\xi_{s,i}^{M^Z})^\top (\xi_{s,j}^{M^Z}) + \xi_{s,i}^{Y} \xi_{s,j}^Y \lambda_s \Big ) +
 1_{\mathcal{T}^D}(s) \int_{\R^+} \xi_{i}^{D} (s,y) \xi_{j}^{D} (s,y) (\widehat{\varphi(y)})_{s-} \, dy \,.
\end{align}


\section{Derivative Asset Analysis} \label{sec:applications}

 In this section we discuss the pricing and  the  hedging of   securities related to the firm  that are \emph{not} liquidly traded such as  bonds with non-standard maturities or options on  the traded assets.   We assume that the risk-neutral pricing formula \eqref{eq:def-price-of-H}
applies also to non-traded securities so that  the price  at time $t$ of a security with $\F_T^\bbM$-measurable integrable payoff $H$ is given by
\begin{equation} \label{eq:risk-neutral-pricing-2}
\Pi_t^H = E^Q \big ( e^{- r(T-t)} H \mid \F_t^\bbM \big ) .
\end{equation}
Note that while very natural in our framework, \eqref{eq:risk-neutral-pricing-2}  is in fact an assumption. In our model markets are typically  not complete so that the martingale measure is not unique and an ad hoc assumption on the choice of the pricing measure has to be made. This is an unpleasant but unavoidable feature of most models where  asset prices follow diffusion processes  with jumps. A second issue with \eqref{eq:risk-neutral-pricing-2} is the fact that prices are defined as conditional expectations with respect to the fictitious modeling filtration $\bbF^\bbM$, whereas prices should be computable in terms of quantities that are observable for the model user.
In Section~\ref{subsec:derivatives} we therefore show that for the derivatives  common in practice, $\Pi_t^H$ defined in \eqref{eq:risk-neutral-pricing-2} is given by a function $C^H(t, \pi(t))$ of time and the current filter density $\pi(t)$ and we discuss the evaluation of $C^H$. In  Section~\ref{subsec:calibration}  we moreover explain how to determine an estimate of $\pi(t)$ from  prices of traded securities observed at time $t$.
Section~\ref{subsec:hedging} is  concerned with risk-minimizing  hedging strategies.

\subsection{Derivative Pricing} \label{subsec:derivatives}

Most derivative securities related to the firm fall in one of the following two classes.

\paragraph{Basic debt securities.} Examples of non-traded basic debt securities are bonds or CDSs with non-standard maturities.
The pricing of these  securities is straightforward.   Let $h$ be the full information value of the security.    A similar argument as in Section~\ref{sec:pricing-basic-securities} shows that
$$ 1_{\{\tau > t\}} \Pi_t^H = 1_{\{\tau > t\}}  E^Q \big (h(t,V_t) \mid \F_t^\bbM) =  \int_K^\infty h(t,v)\pi(t,v) d v\, , $$
that is $\Pi_t^H $ can be computed by averaging the full-information value with respect to the current filter density $\pi(t)$ (which is determined by calibrating the model to the prices of traded  securities, see Section~\ref{subsec:calibration}).

\paragraph{Options on traded assets. } In its most general form the payoff of an option on a traded asset with maturity $T$ is of the form
$H = g(\Pi_T^1,\dots,  \Pi_T^\ell)$ where  $\Pi^1,\dots,  \Pi^\ell$ are  the ex-dividend price processes of
$\ell$ traded risky assets related to the  the firm.  Examples for such products  include equity  and bond options or certain convertible bonds. Note that $H$ is $\F_T^\bbM$-measurable since the rvs $\Pi_T^1, \dots, \Pi_T^\ell$ are $\F_T^\bbM$- measurable by \eqref{eq:def-price-of-H}.

Our goal is to show that the price of an option on traded assets can be written as a function of the current filter density $\pi(t)$.
We consider  an option on the stock with  payoff $ H =  g({S}_T)$;  other options can be handled with only notational changes.  We get for the price of the option  that
\begin{align*} \nonumber
\Pi_t^H & =  E^Q \big ( e^{-r(T-t)} \ind{\tau > T} g ( {S}_T) \mid \F_t^\bbM \big) + e^{-r(T-t)} g(0) Q(\tau \le T\mid \F_t^\bbM).
\end{align*}
The second term is the price of a basic debt security. In order to deal with the first term we now give a general result that shows that the computation of $ E^Q \big ( e^{-r(T-t)} \ind{\tau > T} g ( {S}_T) \mid \F_t^\bbM \big)$  can be reduced to the problem of computing a conditional expectation with respect to the reference measure $Q^*$ and the $\sigma$ field  $\F^Z_t \vee \F_t^D$ from the background filtration.
\begin{lemma}\label{lemma:survival-claim}
Consider some integrable, $\F^Z_T \vee \F_T^D$ measurable random variable $ H$ such as $H = g(S_T)$. Then it holds for $t \le T$ that
\begin{equation}\label{eq:pricing-survival-claim}
E^Q\big (  \ind{\tau >T}  H \mid \F_t^\bbM \big ) =  \ind{\tau >t} \frac{E^{Q^*} \big ( H \, \big ((u(T),1)_{S^X }+ \nu_N(T) \big)  \mid \F^Z_t \vee \F_t^D \big )}{(u(t),1)_{S^X } + \nu_N(t)}\,.
\end{equation}
\end{lemma}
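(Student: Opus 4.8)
The plan is to combine the Kallianpur--Striebel / generalized Bayes formula with the key observation that, on the set $\{\tau > t\}$, the $\bbF^\bbM$-conditional expectation can be rewritten in terms of the background filtration $\bbF^Z \vee \bbF^D$ by the Dellacherie formula, exactly as in the proof of Proposition~\ref{prop:reduction-to-background}. First I would pass from $\bbF^\bbM = \bbF^Z \vee \bbF^D \vee \bbF^Y$ to $\bbF^Z \vee \bbF^D$: since $\{\tau > T\} \subset \{\tau > t\}$ and $\{\tau > t\} = \{X_t > K\}$ is, up to the filtering identity, the relevant survival set, applying the Dellacherie formula (Lemma~3.1 of \citeasnoun{bib:elliott-jeanblanc-yor-00}) gives
\begin{equation*}
E^Q\big( \ind{\tau > T} H \mid \F_t^\bbM \big) = \ind{\tau > t}\, \frac{E^Q\big( \ind{\tau > T} H \mid \F_t^Z \vee \F_t^D \big)}{Q(\tau > t \mid \F_t^Z \vee \F_t^D)}\,,
\end{equation*}
using that $H$ is $\F_T^Z \vee \F_T^D$-measurable so that $\ind{\tau > T} H$ is handled by the same Dellacherie argument. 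The denominator is $1 - F_t = Q(X_t > K \mid \F_t^Z \vee \F_t^D)$, which by \eqref{eq:1-F} equals $\big((u(t),1)_{S^X} + \nu_N(t)\big)/\Sigma_t 1$.

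Next I would treat the numerator via the reference measure. By the generalized Bayes rule (\citeasnoun{bib:jacod-shiryaev-03}, Proposition~III.3.8), as in \eqref{eq:kallianpur-striebel-2},
\begin{equation*}
E^Q\big( \ind{\tau > T} H \mid \F_t^Z \vee \F_t^D \big) = \frac{E^{Q^*}\big( \ind{\tau > T} H\, L_T \mid \F_t^Z \vee \F_t^D \big)}{E^{Q^*}\big( L_T \mid \F_t^Z \vee \F_t^D \big)} = \frac{E^{Q^*}\big( \ind{\tau > T} H\, L_T \mid \F_t^Z \vee \F_t^D \big)}{\Sigma_t 1}\,.
\end{equation*}
The two factors $\Sigma_t 1$ cancel between numerator and denominator of the Dellacherie expression, which is exactly why the final formula has $(u(t),1)_{S^X} + \nu_N(t)$ (and not $\Sigma_t 1$) in its denominator. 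For the remaining $Q^*$-expectation I would condition further on $\F_T^Z \vee \F_T^D$ via the tower property: since $H$ is $\F_T^Z \vee \F_T^D$-measurable,
\begin{equation*}
E^{Q^*}\big( \ind{\tau > T} H\, L_T \mid \F_t^Z \vee \F_t^D \big) = E^{Q^*}\big( H\, E^{Q^*}\big( \ind{\{X_T > K\}} L_T \mid \F_T^Z \vee \F_T^D \big) \mid \F_t^Z \vee \F_t^D \big)\,,
\end{equation*}
and the inner conditional expectation is $\Sigma_T \ind{(K,\infty)} = \Sigma_T 1 - \Sigma_T \ind{K} = (u(T),1)_{S^X} + \nu_N(T)$ by Theorem~\ref{prop:dividend-information} (using $\Sigma_T \ind{K} = \nu_K(T)$ and $\Sigma_T 1 = (u(T),1)_{S^X} + \nu_K(T) + \nu_N(T)$). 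Substituting this and collecting terms yields \eqref{eq:pricing-survival-claim}.

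The main obstacle I expect is the rigorous justification of the Dellacherie-type reduction from $\bbF^\bbM$ to $\bbF^Z \vee \bbF^D$ when the conditioning random variable $\ind{\tau > T} H$ depends on the future horizon $T > t$ rather than just on $X_t$; one must check that $H$ being $\F_T^Z \vee \F_T^D$-measurable (and integrable) is enough for the formula $E^Q(\ind{\tau > T} H \mid \F_t^\bbM) = \ind{\tau > t} E^Q(\ind{\tau > T} H \mid \F_t^Z \vee \F_t^D)/Q(\tau > t \mid \F_t^Z \vee \F_t^D)$, i.e.\ that no extra information about $\{\tau > t\}$ leaks in through $\bbF^Y$ between $t$ and $T$. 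This follows because on $\{\tau > t\}$ the $\bbF^Y$-information up to time $t$ is trivial and $\ind{\tau > T}$ is absorbed into the numerator, but it should be spelled out carefully (or one simply cites the hazard-process machinery of \citeasnoun{bib:blanchet-scalliet-jeanblanc-04} used already in Proposition~\ref{prop:compensator-via-hazard-approach}). The rest of the argument is bookkeeping with the measure-change density $L_T$ and the identities for $\Sigma_T$ already established.
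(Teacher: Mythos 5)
Your proposal is correct and follows essentially the same route as the paper's proof: both combine the Dellacherie/hazard-process formula with the abstract Bayes rule for the reference measure and then identify $\Sigma_T \ind{(K,\infty)} = (u(T),1)_{S^X} + \nu_N(T)$ from Theorem~\ref{prop:dividend-information}. The only cosmetic difference is that the paper inserts $(1-F_T)$ immediately (using the tower property implicitly in the hazard-process form of the Key Lemma) and then invokes the identity $(\Sigma_s 1)(1-F_s) = (u(s),1)_{S^X} + \nu_N(s)$, whereas you defer the tower-property conditioning on $\F_T^Z \vee \F_T^D$ to a later step and re-derive that identity in place; the cancellation of $\Sigma_t 1$ is the same in both.
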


\begin{proof}   As in the proof of Theorem~\ref{thm:default-intensity} we let $F_t= Q ( \tau \le t \mid \F_t^Z \vee \F_t^D )$. Then the Dellacherie formula gives
\begin{equation}\label{eq:applic-of-dellacherie}
E^Q\big (  \ind{\tau > T} H \mid \F_t^\bbM \big ) =  \ind{\tau >t} \frac{E^Q \big( (1-F_T)  H   \mid \F^Z_t \vee \F_t^D \big )}{1-F_t}.
\end{equation}
Since for generic $s \in [0,T]$ one has  $\Sigma_s 1 = \frac{dQ}{dQ^*}\vert_{\F^Z_s \vee \F_s^D}$,  the abstract Bayes formula yields
$$
 E^Q \big( (1-F_T)  H   \mid \F^Z_t \vee \F_t^D \big ) = \frac{1}{\Sigma_t 1} E^{Q^*} \Big( (\Sigma_T 1)  \, (1-F_T)  H   \mid \F^Z_t \vee \F_t^D \Big )\,.
$$
Moreover, using  \eqref{eq:1-F} we have for  $s \in [0,T]$   that
$
(\Sigma_s 1) (1-F_s)   =   \big (\big(u(s), 1 \big)_{S^X}  + \nu_N(s) \big)\,.
$
Substituting these  relations into \eqref{eq:applic-of-dellacherie} gives the result.
\end{proof}

Now we return to the stock option. For simplicity we ignore the point mass $\nu_N$ at the upper boundary  of $S^X$.  Recall that ${S}_T =(u(T),h^{\text{stock}})_{S^X } \big/ (u(T),1)_{S^X } $ Using Lemma~\ref{lemma:survival-claim} we get that
\begin{equation*}\label{eq:price-of-options1}
 E^Q \big ( e^{-r(T-t)} \ind{\tau > T} g ( {S}_T) \mid \F_t^\bbM \big) = \ind{\tau >t} \frac{  E^{Q^*} \Big (
g \Big(\frac{ (u(T),h^{\text{stock}})_{S^X }}{(u(T),1)_{S^X }}\Big )\, (u(T),1 )_{S^X } \mid \F^Z_t \vee \F_t^D \Big )
}{(u(t),1)_{S^X} }\,,
\end{equation*}
Standard results on the Markov property of solutions of SPDEs such as Theorem 9.30 of \citeasnoun{bib:peszat-zabcyk-07} imply that under $Q^*$ the solution $u(t) $ of the SPDE \eqref{eq:du-t-with-jumps} is a Markov process.
Hence
\begin{equation}\label{eq:price-of-options2}
\frac{1}{(u(t),1)_{S^X }} E^{Q^*} \Big (g \Big(\frac{(u(T),h^{\text{stock}})_{S^X }}{(u(T),1)_{S^X }}\Big )\, (u(T),1 )_{S^X } \mid \F^Z_t \vee \F_t^D \Big ) =  C^H(t,u(t))
\end{equation}
for some function   $C^H $  of time and of the current value of the unnormalized filter  density. Moreover, $C^H$ is homogeneous of degree zero in $u(t)$, as we now explain.   Since the the SPDE \eqref{eq:du-t-with-jumps} is linear,  the solution  of \eqref{eq:du-t-with-jumps} over the time interval $[t,T]$ with initial condition $\gamma u(t)$ ($\gamma >0$ a given constant) is given by $\gamma u(s)$, $s \in [t, T]$. If we substitute this into \eqref{eq:price-of-options2} we get that $C^H(t,\gamma u(t)) =  C^H(t,u(t))$ as $\gamma$ cancels out.
Hence we may without loss of generality replace $u(t)$ by the current filter density $\pi(t) = u(t)\big/(u(t),1)_{S^X}$, and we get
\begin{equation} \label{eq:price-of-options-final}
E^Q \big ( e^{-r(T-t)} \ind{\tau> T} g ( {S}_T) \mid \F_t^\bbM \big) =\ind{\tau >t} C^H(t, \pi(t))\,.
\end{equation}

The actual computation of $C^H$ is best done using Monte Carlo simulation, using a numerical method to solve the SPDE~\eqref{eq:du-t-with-jumps}. The Galerkin approximation described in Section~\ref{subsec:finite-dim-approx} is particularly well-suited for this purpose since most of the time-consuming computational steps can be done off-line. Note that \eqref{eq:price-of-options-final} is an expectation with respect to the reference measure $Q^*$. Hence  one needs to sample from the SDE \eqref{eq:du-t-with-jumps} under $Q^*$, that is the driving process $Z$ is a Brownian motion and the random measure $\mu^D$ has compensator $\gamma^{D,*}(dy,dt). $. Alternatively, one might evaluate directly the expected value $ E^Q \big ( e^{-r(T-t)} \ind{\tau > T} g ( {S}_T) \mid \F_t^\bbM \big)$, using the simulation approach sketched in Section~\ref{sec:simulations} below.


\subsection{Hedging}\label{subsec:hedging}

Hedging is a key aspect of derivative asset analysis.  In this section we therefore use our results on the price dynamics of traded securities to derive dynamic hedging strategies.  We expect the market to be incomplete, as the prices of the traded securities follow diffusion processes with jumps. In order to deal with
this problem we use the concept of {risk minimization} introduced by \citeasnoun{bib:foellmer-sondermann-86}.   A similar analysis was carried out in~\citeasnoun{bib:frey-schmidt-12} in the context of reduced-form credit risk models.

\paragraph{Risk minimization.} We first introduce the notion of a risk-minimizing hedging strategy.
We assume that there are $\ell$ traded securities related to the firm with  ex-dividend price process $\Pi = (\Pi_t^1,\dots,  \Pi_t^\ell)_{t \le T}^\top$ and gains processes ${G} = ({G}_t^1, \dots, {G}_t^\ell)^\top_{t \le T}$; moreover there is a continuously compounded money market  account with  value $e^{r t}$, $t \ge 0$. The discounted price and gains processes are denoted by $\widetilde{\Pi}$ and  $\widetilde{G}$. Recall that the predictable quadratic variation of the gains process of the traded assets is of the form  $\langle {\widetilde G}^i, \widetilde{G}^j \rangle_t  =  \int_0^{t\wedge \tau} v_s^{ij} d b(s)$ with $v^{ij}$ and $b$ given in Section~\ref{subsec:filtering-dynamics} (see equation~\eqref{eq:quad-var}) and let $\mathbf{v}_t = (v^{ij}_t)_{1 \le i,j \le \ell}$.    Denote by
$L^2(\widetilde G^1,\dots, \widetilde G^n , \bbF^\bbM)$ the space of all $\ell$-dimensional
$\bbF^\bbM$-predictable processes $\theta$ such that $E \big( \int_0^T \theta_s^\top
\mathbf{v}_s \theta_s ds \big ) < \infty.$

An \emph{admissible trading  strategy} is given by a pair $\phi=(\theta,\eta)$ where
$\theta\in L^2(\widetilde G^1,\dots, \widetilde G^n , \bbF^\bbM)$ and  $\eta$ is $\bbF^\bbM$-adapted; $\theta_t $ gives the position in the risky assets at time $t$ and $\eta_t$ the position in the  money market account. The value  of this strategy at time $t$ is $V_t^\phi = \theta_t^\top \Pi_t + \eta_t e^{rt}$ and the discounted value is $\widetilde{V}_t^\Phi = \theta_t^\top \widetilde{\Pi}_t + \eta_t$. In the sequel we  consider strategies whose value tracks a given stochastic process. In an incomplete market this is only feasible if we allow for   intermediate in-and outflows of  cash. The size of these in-and outflows is measured by the discounted \emph{cost process} $C^\phi$ with $C_t^\phi = \widetilde{V}_t^\phi - \int_0^t \theta^\top d \widetilde{G}_s$. We get that
$$
    C_T^\phi - C_t^\phi =
    \widetilde{V}_T^\phi - \int_0^T \theta_s^\top d \widetilde{G}_s - \Big ( \widetilde{V}_t^\phi - \int_0^t \theta_s^\top d \widetilde{G}_s \Big )
    = \widetilde{V}_T^\phi - \Big ( \widetilde{V}_t^\phi + \int_t^T \theta_s^\top d \widetilde{G}_s \Big ),
$$
that is $  C_T^\phi - C_t^\phi $ gives the cumulative capital injections or withdrawals over the period $(t, T]$. In particular, for a selffinancing strategy it holds that $ C_T^\phi - C_t^\phi =0$ for all $t$. Finally we define the \emph{remaining risk process} $R(\phi)$ of the strategy by
\begin{equation}
R_t(\phi) = E \big( (C_T - C_t)^2 | \F^\bbM_t\big),\quad 0 \le t \le T .
\end{equation}
Consider now  a claim  with square integrable $\F_T^\bbM$-measurable
payoff $H$  and an admissible strategy
$\phi$ with $V_T^\phi= H$ (note that  this condition can always be achieved by a proper choice of the cash position $\eta_T$). Then $R(\phi)$ is a measure for the precision of the hedge, in particular, $R(\phi) \equiv 0$ if $\phi$ is a selffinancing hedging strategy for $H$. An admissible  strategy $\phi^*$ is called \emph{risk-minimizing} if $V_T^{\phi^*}= H$ and if moreover for any $t \in [0,T]$ and any admissible strategy $\phi$ satisfying $V_T^{\phi} = H$ we have $R_t (\phi^*) \le
R_t(\phi)$.
Risk-minimization  is well-suited for our setup as the ensuing hedging strategies are
relatively easy to compute and as it suffices to know the risk-neutral dynamics of the
traded securities.\footnote{It might be more natural
to minimize the remaining risk under the historical probability measure. This would lead
to alternative quadratic-hedging approaches; see for instance~\citeasnoun{bib:schweizer-01b}.
However, the computation of the corresponding strategies becomes a very challenging
problem.}

Next we  give a general characterization of risk-minimizing hedging strategies.  Let $\Pi_t^H = E^Q(e^{-r(T-t)} H \mid \F_t^\bbM)$ so that the discounted price process $\widetilde{\Pi}^H$ is a square integrable $\bbF^\bbM$ martingale. It is well-known that the predictable covariation
$\langle \Pi^H, \widetilde{G}^i\rangle $ is absolutely continuous with respect to $\langle \widetilde{G}^i\rangle $ and hence with respect to the measure $b$ introduced before \eqref{eq:quad-var}, and we denote the density by $\nicefrac{d\langle \Pi^H, \widetilde{G}^i\rangle} { db } (t)$; finally $ \nicefrac{d\langle \widetilde{\Pi}^H, \widetilde{G}\rangle }{ db} \,(t)$ stands for the vector process $\nicefrac{\big(d\langle \widetilde{\Pi}^H, \widetilde{G}^1\rangle}{  db} \,(t), \dots, \nicefrac{d\langle \widetilde{\Pi}^H, \widetilde{G}^\ell\rangle}{db} \,(t)\big )^\top $.

\begin{proposition}\label{prop:hedging}
A risk-minimizing strategy $\phi^* =(\theta^*, \eta^*)$ for a claim $H \in L^2(\Omega, \F_T^\bbM, Q)$ exists. It can be characterized as follows:  $\theta_t^*$ is a solution of the equation $ \mathbf{v}_t \theta_t^* = \displaystyle{\nicefrac{d\langle\widetilde{\Pi}^H ,\widetilde G \rangle}{d b}}\,(t)$; the cash position is $\eta_t^* = \widetilde{\Pi}_t^H - (\theta^*_t)^\top \widetilde \Pi_t$ and it holds that $V_t^{\phi^*} = \Pi^H_t$.
\end{proposition}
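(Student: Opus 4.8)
The plan is to obtain $\phi^*$ from the Galtchouk--Kunita--Watanabe (GKW) decomposition of the discounted price process of the claim and then to verify risk minimization by the classical orthogonality argument of \citeasnoun{bib:foellmer-sondermann-86}. As a preliminary observation, since $H\in L^2(\Omega,\F_T^\bbM,Q)$ the discounted price process $\widetilde\Pi_t^H = E^Q(e^{-rT}H\mid\F_t^\bbM)$ is a square-integrable $\bbF^\bbM$-martingale, and by Theorem~\ref{thm:asset-price-dynamics} (together with the integrability assumptions on the cash-flow streams, the bound $h^{\text{stock}}(t,v)<v$ from Lemma~\ref{lemma:finite-stock-price} and the finite second moment of $V_0$) the discounted gains processes $\widetilde G^1,\dots,\widetilde G^\ell$ are square-integrable $\bbF^\bbM$-martingales stopped at $\tau$.

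First I would apply the GKW decomposition of $\widetilde\Pi^H$ relative to the $\ell$-dimensional martingale $\widetilde G$: there exist a predictable process $\theta^*\in L^2(\widetilde G^1,\dots,\widetilde G^\ell,\bbF^\bbM)$ and a square-integrable $\bbF^\bbM$-martingale $L^H$ with $L_0^H=0$ that is strongly orthogonal to each $\widetilde G^i$ such that
\begin{equation}\label{eq:GKW}
\widetilde\Pi_t^H = \widetilde\Pi_0^H + \int_0^t (\theta_s^*)^\top\, d\widetilde G_s + L_t^H\,,\quad 0\le t\le T\,.
\end{equation}
Here $\theta^*$ is the integrand appearing in the orthogonal projection of $\widetilde\Pi^H$ onto the stable subspace of $\mathcal{H}^2(\bbF^\bbM)$ generated by $\widetilde G^1,\dots,\widetilde G^\ell$; see \citeasnoun{bib:schweizer-01b}. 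Taking predictable covariation of \eqref{eq:GKW} with $\widetilde G^i$, using strong orthogonality $\langle L^H,\widetilde G^i\rangle\equiv0$ and the absolute continuity of $\langle\widetilde\Pi^H,\widetilde G^i\rangle$ with respect to $b$ recalled before the proposition, yields that $\theta_t^*$ solves $\mathbf{v}_t\theta_t^* = \nicefrac{d\langle\widetilde\Pi^H,\widetilde G\rangle}{db}\,(t)$ for $db$-almost every $t$, $Q$-a.s. I then set $\eta_t^* = \widetilde\Pi_t^H - (\theta_t^*)^\top\widetilde\Pi_t$, which is $\bbF^\bbM$-adapted, so that $\widetilde V_t^{\phi^*} = (\theta_t^*)^\top\widetilde\Pi_t + \eta_t^* = \widetilde\Pi_t^H$; hence $V_t^{\phi^*}=\Pi_t^H$, in particular $V_T^{\phi^*}=\Pi_T^H=H$, and the discounted cost process equals $C_t^{\phi^*} = \widetilde V_t^{\phi^*}-\int_0^t(\theta_s^*)^\top d\widetilde G_s = \widetilde\Pi_0^H + L_t^H$, a square-integrable martingale strongly orthogonal to each $\widetilde G^i$.

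For the optimality I would argue as follows. Fix $t\le T$ and let $\phi=(\theta,\eta)$ be any admissible strategy with $V_T^\phi=H$. Using $\widetilde V_T^\phi=e^{-rT}H=\widetilde\Pi_T^H$ and \eqref{eq:GKW}, the quantity $C_T^\phi-C_t^\phi$ is the sum of the $\F_t^\bbM$-measurable term $\widetilde\Pi_t^H-\widetilde V_t^\phi$, the stochastic integral $\int_t^T(\theta_s^*-\theta_s)^\top d\widetilde G_s$, and the increment $L_T^H-L_t^H$. The latter two terms have vanishing $\F_t^\bbM$-conditional mean, and $L_T^H-L_t^H$ is orthogonal to integrals with respect to $\widetilde G$, so all cross terms drop and $R_t(\phi)=(\widetilde\Pi_t^H-\widetilde V_t^\phi)^2 + E^Q\big[\big(\int_t^T(\theta_s^*-\theta_s)^\top d\widetilde G_s\big)^2\mid\F_t^\bbM\big] + E^Q\big[(L_T^H-L_t^H)^2\mid\F_t^\bbM\big]$. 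Since $\eta$ is free, $\widetilde V_t^\phi$ may be any $\F_t^\bbM$-measurable random variable, so the first two (nonnegative) terms are minimized, and made zero, by $\widetilde V_t^\phi=\widetilde\Pi_t^H$ and $\theta=\theta^*$ on $(t,T]$ --- which is exactly achieved by $\phi^*$. Hence $R_t(\phi^*)=E^Q[(L_T^H-L_t^H)^2\mid\F_t^\bbM]\le R_t(\phi)$, proving that $\phi^*$ is risk-minimizing.

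The step I expect to require the most care is the solvability of the normal equations $\mathbf{v}_t\theta_t^* = \nicefrac{d\langle\widetilde\Pi^H,\widetilde G\rangle}{db}\,(t)$ in the degenerate case, which occurs here because $\mathbf{v}_t$ has different rank on the set of dividend dates $\mathcal{T}^D$ and on its complement, and because the driving sources $M^Z$, $M^Y$ and $\mu^D-\gamma^{D,\bbF^\bbM}$ may render the traded gains processes locally linearly dependent. This is not an additional hypothesis: because $\theta^*$ arises as a genuine Hilbert-space projection integrand, the Kunita--Watanabe inequality forces $\nicefrac{d\langle\widetilde\Pi^H,\widetilde G\rangle}{db}\,(t)$ to lie in the range of $\mathbf{v}_t$ for $db\times Q$-almost every $(t,\omega)$, so a predictable measurable selection among the (in general non-unique) solutions exists and belongs to $L^2(\widetilde G^1,\dots,\widetilde G^\ell,\bbF^\bbM)$; the optimality argument above shows that the value $V^{\phi^*}$, the cash position $\eta^*$ and the remaining risk are nevertheless uniquely determined.
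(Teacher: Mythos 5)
Your argument is correct and follows the same route as the paper's: start from the Kunita--Watanabe decomposition of $\widetilde\Pi^H$ relative to $\widetilde G^1,\dots,\widetilde G^\ell$, identify $\theta^*$ via the covariation relations, and set $\eta^*$ so that $\widetilde V^{\phi^*}=\widetilde\Pi^H$. The only difference is one of detail: the paper invokes F\"ollmer--Sondermann (1986) directly for the equivalence between the decomposition and risk minimization, whereas you re-derive that optimality through the orthogonal splitting of $C_T^\phi-C_t^\phi$ and you append a remark on measurable solvability of the normal equations in the degenerate case (the latter is not strictly needed, since $\theta^*$ is supplied a priori as the decomposition's integrand and the normal equation is a consequence rather than a definition).
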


\begin{proof}
First we recall the \emph{Kunita Watanabe decomposition} of the martingale $\widetilde{\Pi}^H$ with respect to the  gains processes of the  traded securities.  This decomposition is given  by
\begin{align} \label{eq:GKW-decomp}
\widetilde{\Pi}^H_t &= \widetilde{\Pi}^H_0 +  \sum_{i=1}^\ell \int_0^t \xi^H_{s,i} \,   d\widetilde{G}_{s}^i+ H^\bot_t, \quad 0 \le t \le T;
\end{align}
here $\xi_i^H \in L^2(\widetilde G^1,\dots, \widetilde G^n , \bbF^\bbM)$ and  the martingale $H^\bot$ is strongly orthogonal to the gains processes of the traded securities, that is $\langle H^\bot, \widetilde{G}^i\rangle \equiv 0$ for all $1 \le i \le \ell$.
As shown in \citeasnoun{bib:foellmer-sondermann-86},   risk-minimizing hedging strategies relate to the Kunita Watanabe decomposition~\eqref{eq:GKW-decomp} as follows: it holds that   $\theta^*=\xi^H$, that $\widetilde{V}^\phi =\widetilde{\Pi}^H $ and that $C=H^\bot$.
Next we identify $\theta^*$. As  $\langle H^\bot, \widetilde{G}^i\rangle \equiv 0$, the Kunita Watanabe decomposition gives
$ \langle \widetilde{\Pi}^H, \widetilde{G}^i \rangle_t = \sum_{j=1}^\ell \int_0^t \theta^*_{s,j} d \langle \widetilde{G}^j, \widetilde{G}^i \rangle_s
$
or equivalently
$$\int_0^{t \wedge \tau} \frac{d \langle \widetilde{\Pi}^H, \widetilde{G}^i \rangle}{db} (s)\, d b(s) =  \int_0^{t \wedge \tau}  \sum_{j=1}^\ell \theta^*_{s,j} v_s^{ji}\, d b(s),
$$
which shows that $ \mathbf{v}_t \theta_t^* = \frac{d\langle\widetilde{\Pi}^H, \widetilde G \rangle}{d b}(t)$. The remaining statements are  clear.
\end{proof}
\noindent As an example, suppose that we want to hedge a stock option with payoff $H = g(S_T)$ using the stock as hedging instrument.
In that case we get from  Proposition~\ref{prop:hedging} that
$$
\theta^H_t = \frac{ \nicefrac{d\langle \Pi^H, \widetilde{G}^{\text{stock}}\rangle } { db } \,(t)}{\nicefrac{d\langle \widetilde{G}^{\text{stock}}\rangle}{ db }\,(t)}\,.
$$

\paragraph{Computation of $\theta^*\!$.} The crucial task  in applying Proposition~\ref{prop:hedging} is to compute the instantaneous quadratic variations $ d\langle \widetilde{\Pi}^H, \widetilde{G}\rangle / db \,(t)$, and we now explain how  this can be
achieved for the claims considered in Subsection~\ref{subsec:derivatives}.  If $H$
represents a non-traded basic debt security,  an argument analogous to the proof of Theorem~\ref{thm:asset-price-dynamics} gives the representation of $\widetilde{\Pi}^H$ as stochastic integral with respect to  the martingales $M^Z$, $M^Y$ and $\mu^D - \gamma^{D, \bbF^\bbM}(dy,dt)$, and $\nicefrac{ d\langle \widetilde{\Pi}^H, \widetilde{G}\rangle}{db} \,(t)$ can be read off from this representation.

Next we turn to  the case where $H$ is  an option on  a traded assets with payoff $g(\Pi_T^1,\dots, \Pi_T^\ell)$ and we assume for simplicity that $g(0) =0$.   In order to compute  $\nicefrac{ d\langle \widetilde{\Pi}^H, \widetilde{G}\rangle}{db} \,(t)$  we need to find the martingale representation of $\widetilde{\Pi}^H$ with respect to  $M^Z$, $M^Y$ and $\mu^D - \gamma^{D, \bbF^\bbM}(dy, dt)$. Standard arguments can be used to show that such a representation exists, see for instance the proof of Lemma~3.2 in \citeasnoun{bib:frey-schmidt-12}.  However, identifying the integrands is more difficult.
A possible approach is to use the  It\^o formula for SPDEs   from \citeasnoun{bib:krylov-13}, see  Appendix~A for details.


\paragraph{Risk-minimizing strategies via regression.}
In order to circumvent the problem of finding  the martingale representation of $\widetilde{\Pi}^H$ one may use strategies with fixed discrete rebalancing dates and apply the results of \citeasnoun{bib:foellmer-schweizer-89}; this is sufficient for most practical purposes.
Consider a fixed set of trading  dates $0=t_0 < t_1 < \dots < t_m =T $. The  space  of \emph{admissible discrete trading strategies} consists  of all strategies $\phi^{(m)} = (\theta^{(m)}, \eta^{(m)})$ with $\theta_t^{(m)} = \sum_{j=0}^{m-1} \theta_j 1_{(t_j, t_{j+1}]} (t)$ and $\eta_t^{(m)}  = \sum_{j=1}^{m-1} \eta_j 1_{[t_j, t_{j+1})} (t) + \eta_m \ind{t=T} $ such that $\theta_j $ and  $ \eta_j$ are $\F_{t_j}^\bbM$-measurable. Moreover,  for all $0 \le j \le m-1$ the random variable $   \theta_j^\top (\widetilde{G}_{t_{j+1}} -  \widetilde{G}_{t_{j}} )$ is square integrable. Note that $\theta^{(m)}$ is left continuous and that $\eta^{(m)}$ is right continuous.
\citeasnoun{bib:foellmer-schweizer-89} show that the strategy  $(\phi^{(m)})^*$ that minimizes the remaining risk over all admissible discrete trading  strategies with terminal value $V_T = H$ can be described as follows: For $0 \le j \le m-1$ the random vector $\theta_j^*$,  is determined from the regression equation
$$ \widetilde{\Pi}_{t_{j+1}}^H - \widetilde{\Pi}_{t_j}^H = \sum_{i=1}^\ell (\theta_j^*)^\top  \big(\widetilde{G}_{t_{j+1}}-\widetilde{G}_{t_j}\big) + \epsilon_{j+1}
$$
where $E(\epsilon_{j+1}\mid \F_{t_j}^\bbM) =0 $ and where  $E\big ( \epsilon_{j+1}  (\widetilde{G}_{t_{j+1}}-\widetilde{G}_{t_j}) \mid \F_{t-1}^\bbM\big ) = 0$.  The cash position is given by
$\eta*_{j+1} = \widetilde \Pi_{t_j} - (\theta_j^*)^\top \widetilde{\Pi}_t$ so that $V_{t_j}^{\phi^{(m)}} = \Pi_{t_j}^H$ for all $j$.
In order to compute $(\theta_j)^*$ one may therefore generate realisations of $\widetilde{\Pi}_{t_{j+1}}^H - \widetilde{\Pi}_{t_j}^H$ and of $\widetilde{G}_{t_{j+1}}  -\widetilde{G}_{t_j}$  via Monte Carlo; $\theta_j^*$  can then be computed from these  simulated data via standard regression methods.

\paragraph{Further comments.} Note that the hedging strategies for options on traded assets can be expressed as functions of the current filter density $\pi(t)$.
In the case where the asset value jumps downward at the dividend dates, that is for $\kappa =1$, the model is inevitably incomplete. For $\kappa =0$ it is possible to give conditions that ensure that the market is complete: loosely speaking, the number of traded risky securities  must be equal to $l +1$, where $l$ is the dimension of the process $Z$.
For details on both issues we refer to Appendix~A.

\subsection{Calibration of the filter density} \label{subsec:calibration}

In our setup pricing formulas and hedging strategies depend on the current filter density $\pi(t)$.    Hence an investor who wants to use the model needs to  estimate of $\pi(t)$ from prices of traded  securities at time $t$. In this section w explain how this can be achieved  by means of a quadratic optimization problem with linear constraints. We assume  that
a Galerkin approximation of the form $\pi^{(m)}(t) = \sum_{i=1}^m \psi_i e_i$ with smooth basis functions $e_1, \dots, e_m$ is used to approximate the filter density $\pi(t)$ and that  we observe prices $
\Pi_1^*, \dots ,\Pi_\ell^*$ of $\ell$ traded securities with full information value $h_j(t,v)$, $1\le j \le \ell$. In order to match the observed prices perfectly, the vector  of Fourier coefficients $\vecb{\psi} = (\psi_1, \dots, \psi_m)^\prime$ needs to satisfy the following $\ell +1$ linear  constraints
\begin{align} \label{eq:constraints-for-calibration}
\sum_{i=1}^m \psi_i (e_i,1)_{S^X} & = 1  \text{ and }\,
\sum_{i=1}^m \psi_i \big(e_i,h_j(t,\cdot)\big)_{S^X} = \Pi_j^*, \; 1 \le j \le \ell\,;
\end{align}
moreover, it should hold that $\vecb{\psi} \ge 0$ in order to prevent  $\pi^{(m)}(t)$ from becoming negative.   Typically, $m>\ell$ so that the constraints \eqref{eq:constraints-for-calibration} do not determine the Fourier coefficients uniquely. In that case one needs to apply a regularisation procedure. Following \citeasnoun{bib:hull-white-06} who face a similar issue in the calibration of the implied copula model to CDO tranche spreads,  we  propose to minimize the $L^2$-norm of the second derivative of $\pi^m(t,\cdot) $ over all nonnegative $\vecb{\psi}$ that satisfy the constraints \eqref{eq:constraints-for-calibration}; this produces a maximally smooth initial density.

Denote by $e_i^{\prime\prime}$ the second derivative of $e_i$ and define the symmetric and positive definite matrix $\Xi$ by $\Xi_{ij} = \big( e_i^{\prime\prime} , e_j^{\prime\prime} \big )_{S^X}$. Since
$$ \int_{S^X} \Big( \frac{d^2 \pi^{(m)}(t,x)}{dx^2} \Big )^2\, dx = \sum_{i, j =1}^m \psi_i \psi_j \big( e_i^{\prime\prime} , e_j^{\prime\prime} \big )_{S^X} = \vecb{\psi}^\prime \Xi \vecb{\psi},
$$
minimization of the $L^2$-norm of $\frac{d^2}{dx^2} \pi^{(m)}(t,x)$ thus leads to the quadratic optimization problem
$$
\min_{\vecb{\psi} \ge 0} \vecb{\psi}^\prime \Xi \vecb{\psi} \text{ such that $\vecb{\psi}$ satisfies \eqref{eq:constraints-for-calibration}.}$$
This problem can be solved with standard optimization software; a numerical example is discussed in Section~\ref{sec:simulations}.

For a full calibration of the model one needs to determine also the volatility $\sigma$ of $V$ and (parameters of) the function $a$. A natural approach is to determine these parameters by calibration to observed option prices; details are left for future research.

\section{Numerical Experiments}\label{sec:simulations}

In this section we illustrate the model with a number of  numerical experiments. We are particularly interested in the  asset price dynamics under incomplete information.
We use the following setup for our analysis: Dividends are paid annually; the  dividend size is  modelled as   $d_n = \delta_n (V_{t_n}-K)^+$ where $\delta_n$  is Beta-distributed with  mean equal to $2\%$ and standard deviation equal to $1.7\%$.  The process  $Z$ is two-dimensional with $a_1(v) = c_1 \ln v$ and $a_2(v) = c_2 \big (\ln K + \sigma - \ln v)^+$ \footnote{We smooth $a_2$ around the kink at $\ln v = \ln K + \sigma$; details do not matter.};  for $c_2 >0$ this  choice of $a_2$ models the idea that prices are very informative as as soon as the asset value is less than one standard deviation away from the  default threshold, perhaps because the firm is monitored particularly closely in that case. The remaining parameters are given in  in Table~\ref{table:parameters}.

\begin{table}
\begin{center}{\small
\begin{tabular}{lcccc}
$K$ & $r$ &     $\sigma$ (vol of GBM) & $\kappa$ & { initial filter distribution $\pi_0$}\\ \hline
20  & 0.02&         0.2 & 1 & {  $V- K \sim LN(\ln15, 0.2) $ }\\
\end{tabular}}
\end{center}
\caption{\label{table:parameters} Parameters used in simulation study.}
\end{table}

In order to generate a trajectory of the filter density $\pi(t)$ with initial value $\pi_0$ and related quantities such as the stock price $S_t$ we proceed according to the following steps.

\begin{enumerate}
\item Generate a random variable $V \sim \pi_0$,  a trajectory $(V_s )_{s=0}^T$  of the asset value process with  $V_0 =V$ and the   associated trajectory $(Y_s)_{s=0}^T$ of the default indicator process.
\item Generate   realizations $(D_s )_{s=0}^T$ and $(Z_s)_{s=0}^T$,  using the  trajectory $(V_s )_{s=0}^T$  generated in Step~1 as input.
\item Compute  for the observation generated in Step~2  a trajectory $(u(s))_{s=0}^T$ of the unnormalized filter density with initial value $u(0) = \pi_0$, using the Galerkin approximation described in Section~\ref{subsec:finite-dim-approx}. Return $\pi(s) = (1-Y_s) \big( u(s) / ( u (s), 1)_{S^X}\big)$ and
    $ S_s = (1-Y_s) (\pi(s), h^\text{stock})_{S^X}$, $ 0\le s    \le T$.
\end{enumerate}
For details on the numerical methodology including the choice of the basis functions, numerical methods to solve the SDE system~\eqref{eq:SDE-for-Psi} arising  from the Galerkin method and  tests for the accuracy of the numerical implementation  we refer to Chapter~4 of \citeasnoun{bib:roesler-16}.

Next we describe the results of our numerical experiments.

In Figure~\ref{fig:stock-price-only-dividends}  we plot a trajectory of the stock price ${S}$ and of the corresponding full information value $h^{\text{stock}}(V_t)$ for the case where the modelling filtration consists only of the dividend information ($c_1 = c_2=0$). This can be viewed as an example of the discrete noisy accounting information considered in \citeasnoun{bib:duffie-lando-01}.  We see that $S$ has very unusual  dynamics; in particular  it evolves deterministically between  dividend dates.

\begin{figure}[h]
\begin{center}
\includegraphics[width=13.5cm,height=7.0cm]{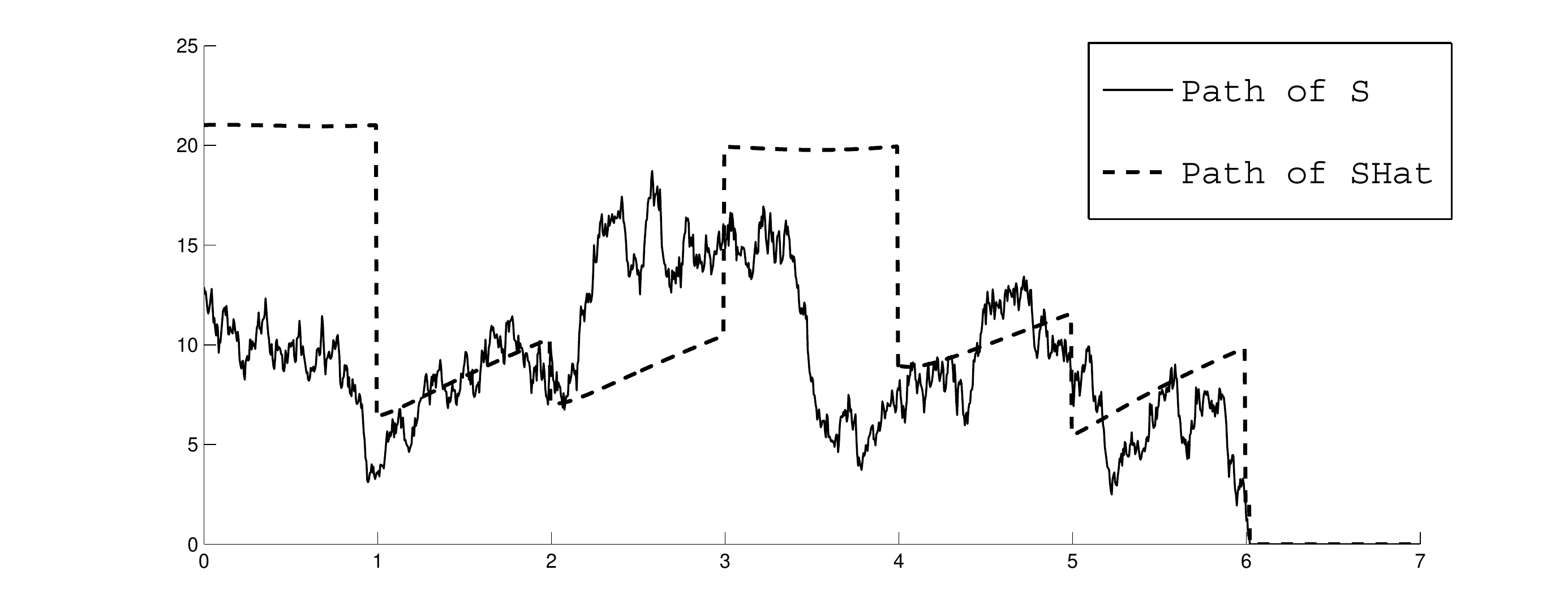}
\end{center}
\caption{\small A simulated path of the full information value $h^{\text{stock}}(V_t)$ of the stock (dashed line) and  of the stock price ${S}$ (normal line, label SHat) for $c_1=c_2=0$ (only dividend information). \label{fig:stock-price-only-dividends}}
\end{figure}

Next we  show that  more realistic asset price dynamics can be obtained  by adding the filtration $\bbF^Z$ to the  modelling filtration.  In Figure~\ref{fig:stock-price-c1-4-c2-0} we plot a typical stock price trajectory  together with the  full information value $h^{\text{stock}}(V_t)$ for  the parameter values $c_1 =4$ and $c_2 =0$.
Clearly,  $S_t$ has nonzero volatility between dividend dates.   A comparison of the two trajectories moreover shows  that the stock price  jumps to zero at the default time $\tau$; this reflects the fact that the default time has an intensity under incomplete information so that default comes as a surprise. The corresponding filter density $\pi(t)$ is plotted in Ficure~\ref{fig:evol-of-pi}

For comparison purposes we finally consider the parameter set  $c_1=4, c_2 =25$. For these parameter values  default is ``almost predictable" and the model behaves similar to a structural model.
This can be seen from Figure~\ref{fig:defint-comparison} where we plot the default intensity for both parameter sets. Note that for
$c_1=4, c_2 =25$,  the default intensity  is close to zero most of the time and very large immediately prior to default (in fact almost twice as large as in the case where $c_2 =0$.).

In Figure~\ref{fig:calibration-lehman}  we finally  present the result of a small calibration exercise,  where $\pi(t)$ was calibrated to five-year CDS spreads of Lehman brothers using the methodology described in the previous section. The data range over the period September 2006 to September 2008 (Lehman filed for bankruptcy protection  on September 15, 2008). Since under full information CDS spreads are homogeneous of degree zero in $V$ and $K$, we took the default threshold  equal to $K =1$ so that the  numbers on the $x$-axis can be viewed as ratio of asset over liabilities. It can be seen clearly that  prior to default the mass of $\pi(t)$ is concentrated close to  the default threshold.

\begin{figure}
\begin{center}
\includegraphics[width=13.5cm,height=7.0cm]{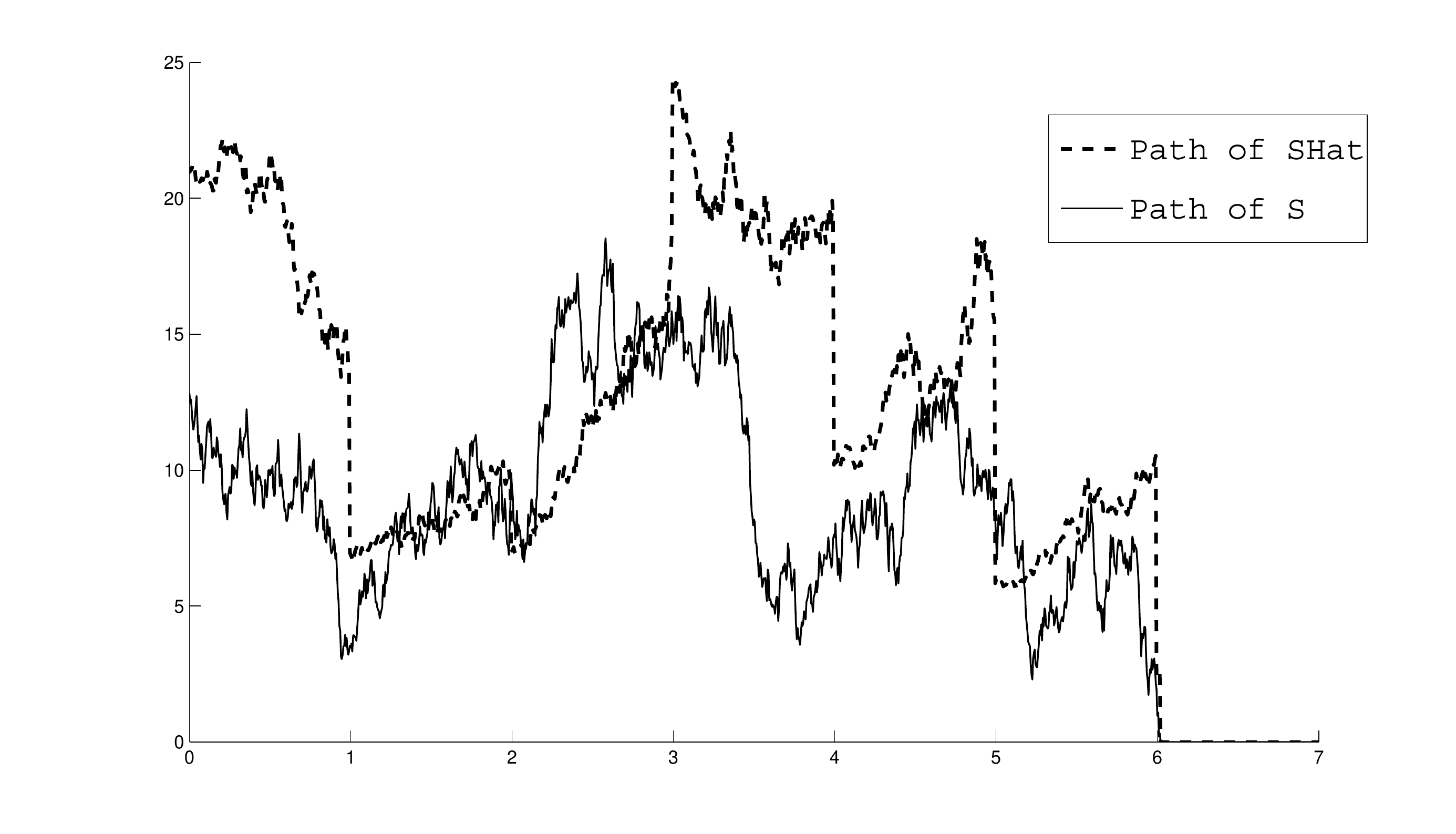}
\end{center}
\caption{\small A simulated path of the full information value $h^{\text{stock}}(V_t)$ of the stock (dashed line) and  of the stock price ${S}_t $ (normal line, label Shat) for $c_1= 4, c_2=0$. \label{fig:stock-price-c1-4-c2-0}}
\end{figure}

\begin{figure}
\begin{center}
\includegraphics[width=13.5cm,height=7.5cm]{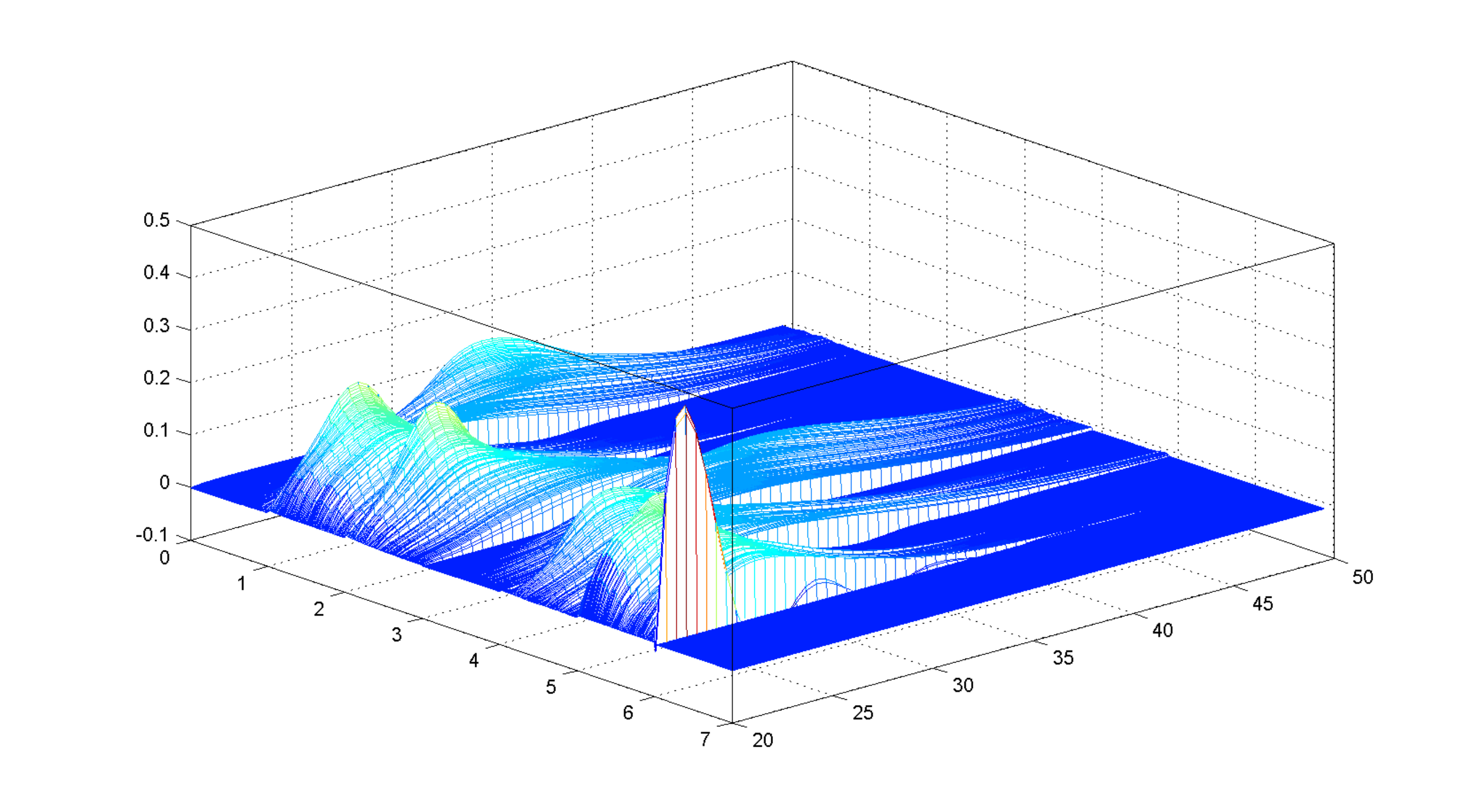}
\end{center}
\caption{\label{fig:evol-of-pi} \small A simulated realisation of the conditional density $\pi(t)$. Note how the mass of $\pi(t)$ concentrates at the default threshold shortly before $\tau$. }
\end{figure}


\begin{figure}
\begin{center}
\includegraphics[width=13.5cm,height=7.0cm]{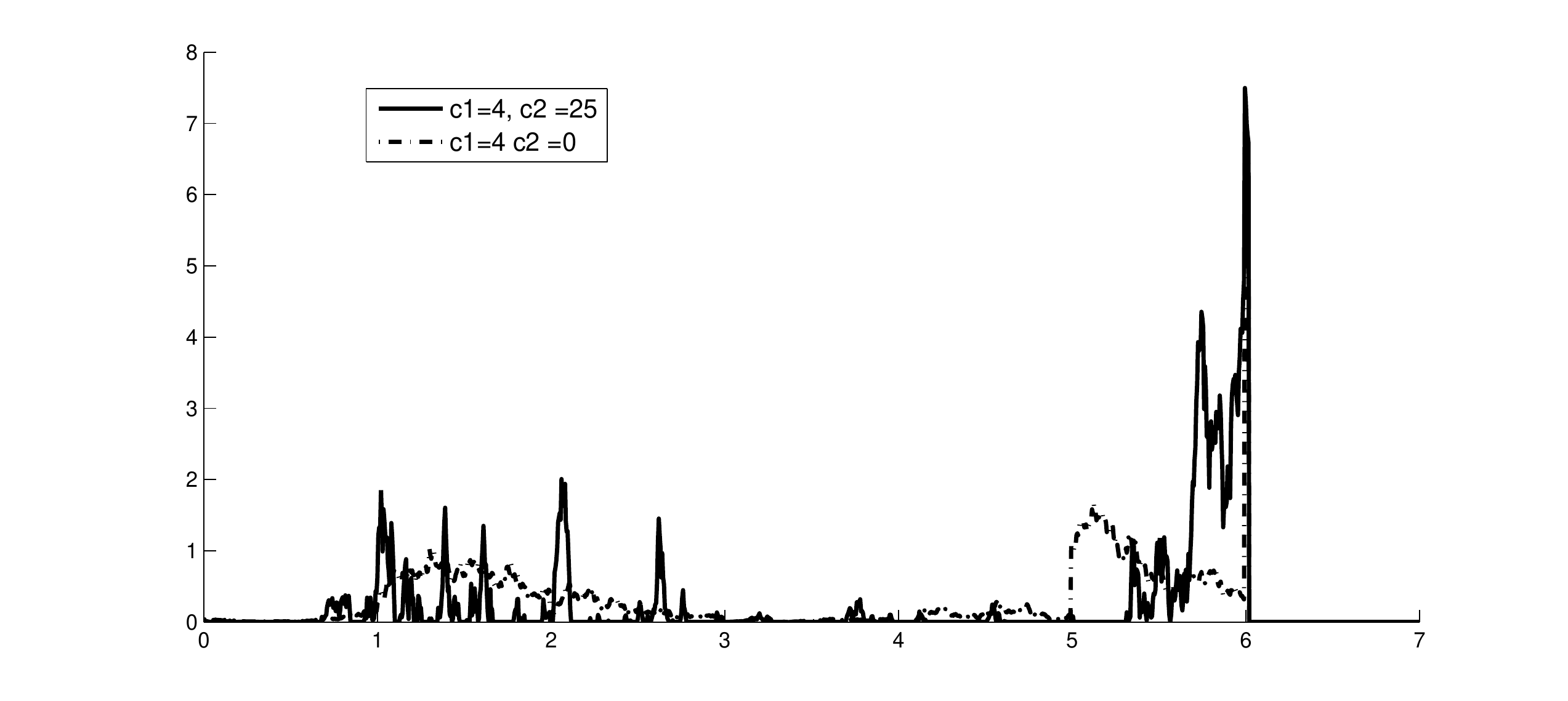}
\end{center}
\caption{\small A simulated path of the default intensity for for $c_1= 4, c_2=0$ (dashed) and for $c_1= 4, c_2=25$ (straight line).\label{fig:defint-comparison}}
\end{figure}
\begin{figure}
\begin{center}
\includegraphics[width=13.0cm,height=8.0cm]{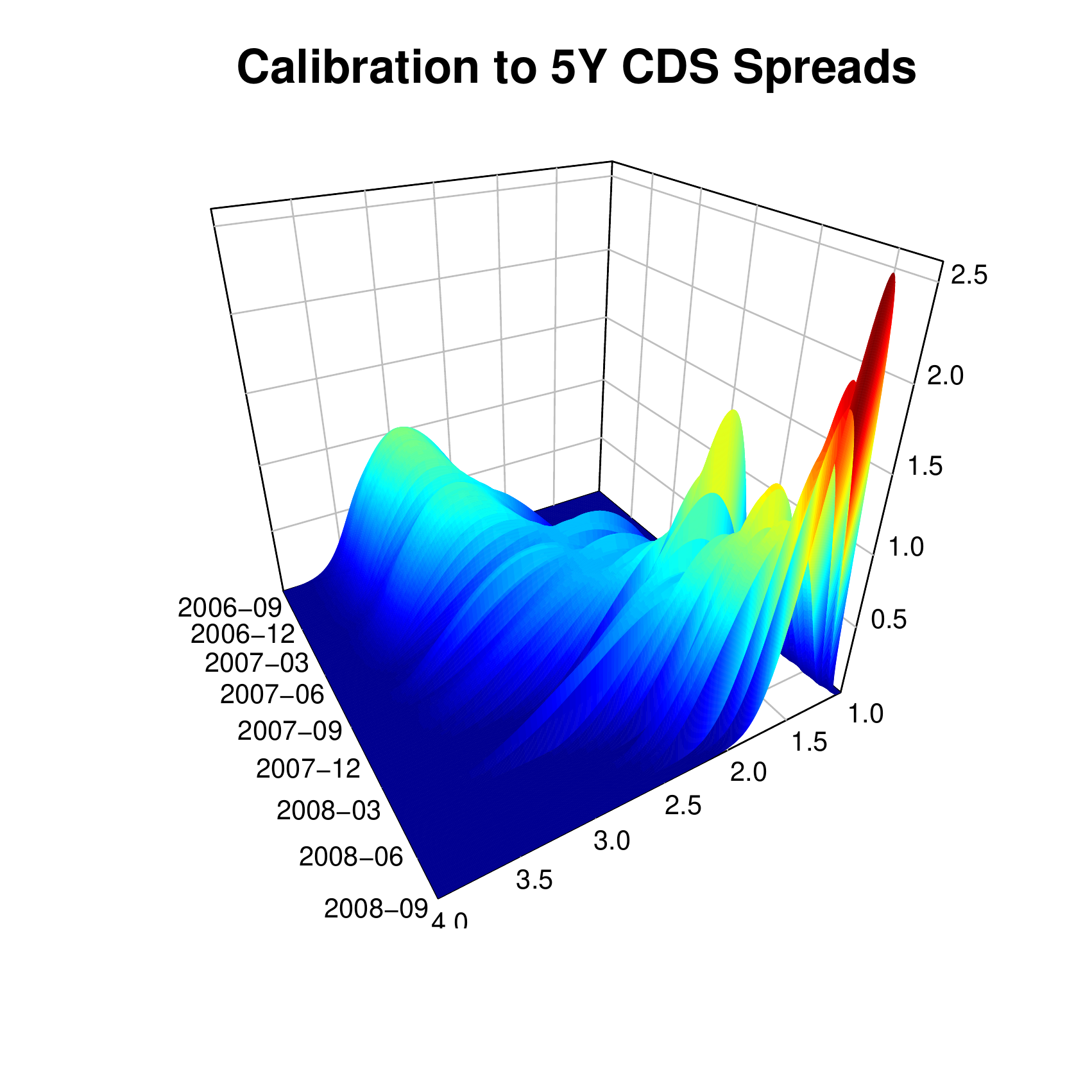}
\end{center}
\caption{\small Result of a calibration of the filter density to 5-year CDS-spreads of Lehman brothers prior to default of the bank.  In this example the default threshold is $K=1$. \label{fig:calibration-lehman}}
\end{figure}

\section{Outlook and Conclusion}

This paper has developed a theory of derivative asset analysis for structural credit risk models under incomplete information using stochastic filtering techniques. In particular we managed to derive the dynamics of traded securities which enabled us to study the pricing and the hedging of derivatives. To conclude we briefly mention a couple of financial problems where this theory could prove useful.

To begin with it might be  interesting to study  contingent convertible bonds, also known as CoCos, in our setup. A CoCo  is a convertible   bond that is automatically  triggered once the issuing company  (typically a financial institution) enters into financial distress. At the trigger event the bond is either converted into  equity or into an immediate cash-payment that is substantially lower than the nominal value of the bond.  Modelling the trigger mechanism adequately is a crucial part in the analysis of CoCos. The CoCos that have been issued so far have a so-called accounting trigger based on capital adequacy ratios.  It is difficult to include this directly into a formal pricing model; many pricing approaches  therefore model the  conversion time $\tau^{\text{CoCo}}$ as a first passage time of the form $\tau^{\text{CoCo}} = \inf \{t \in \mathcal{T} \colon V_t \le K^\text{CoCo}\}$ for a conversion threshold $K^\text{CoCo} >K $ and a set of monitoring dates $\mathcal{T} \subset [0, \infty)$.   This valuation approach is however  difficult to apply in practice,  since investors are not able to track the asset value continuously in time, see for instance \citeasnoun{bib:spiegeleer-schoutens-12}.     Our setup is where $V$ is not fully observable is well-suited for dealing with this issue in a consistent manner. First results in this direction can be found in Chapter~5 of~\citeasnoun{bib:roesler-16}.

Our framework could also be used to study  derivative asset analysis for sovereign bonds. Several fairly recent papers  have proposed structural  models with endogenous default for sovereign credit risk, see for instance~\citeasnoun{bib:andrade-09} or \citeasnoun{bib:mayer-13}. Roughly speaking, in these models   default is given by a first passage time,
$$ \tau = \inf\{ t \ge 0 \colon \tilde{V}_t \ge  \tilde{K}_t \} = \inf\{ t \ge 0 \colon V_t := \tilde{V}_t/ \tilde{K}_t \le 1\},  $$
where the process $\tilde V$  is a measure of the expected future economic performance of the sovereign and where  the   threshold process $\tilde{K}$  is  chosen   by the sovereign in an attempt to  balance the benefits accruing from lower debt services against  the adverse economic implications of a default such as reduced access to capital markets. It is reasonable to assume that $\tilde V$ and $\tilde K$ are not fully observable for outside investors, for instance because it is hard to predict the outcome of the sovereign's decision process in detail. Hence one is led to  a model of the form \eqref{eq:def-tau} with ``asset value $V = \tilde{V}/ \tilde{K}$ and default threshold $K=1$. The results of the present paper can be used to  derive the dynamics of sovereign credit spreads in this setup; this is  important for the pricing of options on sovereign bonds and for risk management purposes.

\appendix
\section{Additional Results} \label{app:results}

\subsection{Filter equations.} In the next corollary  we state  the  filtering equations for $\bbF^\bbM$.
\begin{corollary} \label{thm:filter-equations}
For $f \in C^{1,2}([0,T] \times S^X)$ the optional projection $\widehat{f}_t$
has dynamics
\begin{equation} \label{eq:filter-equation}
\begin{split}
\widehat{f}_t &= \widehat{f}_0 + \int_0^t \Big(\widehat{  \frac{df}{dt}}\Big)_s + (1-Y_{s-})
(\widehat{ \mathcal{L}_X f})_s \, ds + \int_0^{t \wedge \tau} (\widehat{f a})_s^\top -
\widehat{f}_s \widehat {a}_s^{\top} \, d M_s^{Z,\bbF^\bbM}
 \\& + \int_0^{t \wedge \tau} (f(s,K) - \widehat{f}_{s-}) \,d  M_s^Y  +
\int_0^{t \wedge \tau} \int_{\R^+} \Big(
\frac{( \widehat{f(\cdot- \kappa y) \varphi(y,\cdot)})_{s-}
}{ (\widehat{\varphi(y)})_{s-}} - \widehat{f}_{s-}\Big ) \; \mu^D(dy,ds)\,.
\end{split}
\end{equation}
\end{corollary}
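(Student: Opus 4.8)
The plan is to derive~\eqref{eq:filter-equation} by combining three facts already established: the pointwise identity~\eqref{eq:f-hat-vs-pitf}, which can be rewritten as $\widehat f_t = Y_t\,f(t,K) + (1-Y_t)\,\pi_t f$; the dynamics of $\pi_t f$ obtained in Proposition~\ref{lem:d-pit-f}; and the fact, from Theorem~\ref{thm:default-intensity}, that $Y_t = \int_0^{t\wedge\tau}\lambda_s\,ds + M_t^Y$ with $M^Y$ an $\bbF^\bbM$-martingale and $\lambda_t = \tfrac12\sigma^2K^2\tfrac{d\pi}{dx}(t,K)$. Since $f\in C^{1,2}([0,T]\times S^X)$ and $S^X=[K,N]$ is compact, $f$ together with its relevant derivatives is bounded, so $E^Q(|f(t,X_t)|)<\infty$ and Proposition~\ref{lem:d-pit-f} applies.

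First I would apply the It\^o product rule to the two summands of $\widehat f_t = Y_t f(t,K) + (1-Y_t)\pi_t f$ separately. For $Y_t f(t,K)$ one gets $f(t,K)\,dY_t + Y_{t-}\tfrac{df}{dt}(t,K)\,dt$, with no covariation term since $f(\cdot,K)$ is deterministic and continuous. For $(1-Y_t)\pi_t f$ one gets $(1-Y_{t-})\,d(\pi_t f) - \pi_{t-}f\,dY_t - d[\,Y,\pi_\cdot f\,]_t$. The key point is that the covariation of $Y$ with $t\mapsto\pi_t f$ vanishes: $Y$ is pure jump and jumps only at the totally inaccessible time $\tau$, whereas by~\eqref{eq:d-pit-f} the process $\pi_t f$ has jumps only at the \emph{deterministic} dividend dates $t_n$, which $\tau$ avoids almost surely. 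Hence $Y$ and $t\mapsto\pi_t f$ have no common jumps.

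Next I would substitute~\eqref{eq:d-pit-f} and $dY_t = \ind{t\le\tau}\lambda_t\,dt + dM_t^Y$ into the resulting expression and collect terms, working on $\{t<\tau\}$ (on $\{t\ge\tau\}$ both sides reduce to the deterministic evolution of $f(t,K)$, which is exactly what the stopped integrals in~\eqref{eq:filter-equation} produce). The $\lambda$-drift contributions cancel precisely: the term $-\lambda_t(f(t,K)-\pi_t f)\,dt$ from~\eqref{eq:d-pit-f}, the term $-\lambda_t\pi_{t-}f\,dt$ from $-\pi_{t-}f\,dY_t$, and the term $+\lambda_t f(t,K)\,dt$ from $f(t,K)\,dY_t$ sum to zero (using $\pi_{t-}f=\pi_t f$ for Lebesgue-almost every $t$). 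What survives is then matched termwise with~\eqref{eq:filter-equation}: the drift $\pi_t(\tfrac{df}{dt}+\mathcal{L}_X f)\,dt$ together with $Y_{t-}\tfrac{df}{dt}(t,K)\,dt$ yields $\big(\widehat{\tfrac{df}{dt}}\big)_t\,dt + (1-Y_{t-})(\widehat{\mathcal{L}_X f})_t\,dt$; the jump martingale term is $(f(t,K)-\pi_{t-}f)\,dM_t^Y = (f(t,K)-\widehat f_{t-})\,dM_t^Y$; the Brownian term $(\pi_t(a^\top f)-\pi_t a^\top\pi_t f)\,d(Z_t-\pi_t a\,dt)$ equals, on $\{t<\tau\}$, $\big((\widehat{fa})_t^\top-\widehat f_t\,\widehat a_t^\top\big)\,dM_t^{Z,\bbF^\bbM}$ by~\eqref{eq:def-^MZ} and the identity $\pi_t a = \widehat a_t$; and the dividend-jump integral becomes the one in~\eqref{eq:filter-equation} upon replacing $\pi_{t-}g$ by $\widehat g_{t-}$ for $g\in\{f,\ \varphi(y,\cdot),\ f(\cdot-\kappa y)\varphi(y,\cdot)\}$, valid on $\{\tau>t\}$.

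The only genuinely delicate ingredient is the verification that $Y$ and $t\mapsto\pi_t f$ share no jumps — i.e.\ that $\tau$ is totally inaccessible and avoids the dividend dates — together with the bookkeeping showing the $\lambda_t$-drift terms cancel. Everything else is a matter of matching coefficients against Proposition~\ref{lem:d-pit-f} and using $\pi_t g = \widehat g_t$ on $\{\tau>t\}$; I anticipate no further obstacle.
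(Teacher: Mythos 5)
Your proof takes the same route as the paper's: write $\widehat f_t = Y_t f(t,K) + (1-Y_t)\,\pi_t f$, apply the It\^o product rule (with the covariation $[Y,\pi_\cdot f]$ vanishing because $\tau$, having an $\bbF^\bbM$-intensity, is totally inaccessible and avoids the deterministic dividend dates), and substitute the dynamics of $\pi_t f$ from Proposition~\ref{lem:d-pit-f}. The paper states the resulting It\^o decomposition $d\widehat f_t = Y_{t-}\tfrac{df}{dt}(t,K)\,dt + (f(t,K)-\pi_t f)\,dY_t + (1-Y_{t-})\,d\pi_t f$ and leaves the substitution as one line; your write-up simply makes explicit the $\lambda$-drift cancellation, the identification $\pi_t g = \widehat g_t$ on $\{\tau>t\}$, and the no-common-jumps check that the paper's formula tacitly assumes.
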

\begin{proof}
Recall that $\widehat{f}_t = \ind{\tau \le t} f(t,K) + \ind{\tau >t} \pi_t f.$
Hence it holds that
\begin{equation*}
d \widehat{f}_t = Y_{t-} \frac{df}{dt}(t,K) dt + \left( f(t,K) - \pi_{t} f\right) d Y_t + (1- Y_{t-}) d \pi_t f
\end{equation*}
Substituting the dynamics of $\pi_t f$ in this equation gives~\eqref{eq:filter-equation}.
\end{proof}
Alternatively, one can derive the filter equations using the innovations approach to nonlinear filtering. For this one has to show first that every $\bbF^\bbM$ martingale can be represented as a sum of stochastic integrals with respect to  the processes $Y_t - \Lambda_{t \wedge \tau}$ and     $M^{Z,\bbF^\bbM}_t$,    and with respect to  the   random measure  measure $\mu^D -\gamma^{D,\bbF^{\bbM}}$. Standard arguments can then be used to identify the integrands in the martingale representation of $\widehat{f}_t - \int_0^t (\widehat{\mathcal{L}_X f})_s ds$. This is the route taken in \citeasnoun{bib:cetin-12} for the case without dividend payments.

\subsection{Dynamic hedging}
Next we give a few additional results related to our analysis of dynamic hedging in Section~\ref{subsec:hedging}.

\paragraph{Computation of hedging strategies.} We now explain how  the  It\^o formula for SPDEs can be used to compute the integrands in the martingale representation of an option on traded assets; this is needed during the  application of   Proposition~\ref{prop:hedging}.

We know from Section~\ref{subsec:derivatives} that $\widetilde{\Pi}_t^H = \widetilde{C}^H(t, \pi(t))$
Using similar arguments as in the proof of Proposition~\ref{lem:d-pit-f}, the dynamics of the conditional density $\pi(t)$  can be derived from the dynamics of $u(t)$ given in \eqref{eq:du-t-with-jumps}: for $t< \tau$ it holds that
\begin{equation*}
\begin{split}
d \pi (t)  &=  \big(\mathcal{A}^* \pi(t) +  \pi(t)  \lambda_t \big ) dt
 +     \pi(t) \big( a^\top  -   \widehat{a}^\top_t\big )   dM_t^{Z}
+ \int_{\R^+} \!\! \frac{ S_{\kappa y} \big (\pi(t-) \varphi(y,\cdot) \big )}{ (\widehat{\varphi (y)})_{t-}} - \pi(t-) \,  \mu^D (dy,dt)\,.
\end{split}
\end{equation*}
Denote for $v \in H^1_0 (S^X)$ by $\widetilde{C}^H_ {[v]} (t, \pi) $ the directional derivative of ${\widetilde C}^H$ in direction $v$, that is
$$ \widetilde{C}^H_{[v]} (t, \pi) = \frac{d}{ds} {\Big \vert}_{s=0} \widetilde{C}^H (t, \pi + sv)  \,.$$
Suppose that $\widetilde{C}^H$ satisfies the regularity conditions of~\citeasnoun{bib:krylov-13} (essentially this means that the  first and second directional derivative of $\widetilde{C}^H$ exists in every point $\pi \in H^1_0 (S^X)$).
 Then Theorem 3.1 of \citeasnoun{bib:krylov-13}) gives the following martingale representation for the discounted option price:
\begin{align} \nonumber
\widetilde{C}^H(t, \pi(t)) &= \widetilde{C}^H(0, \pi(0)) +  \sum_{i=1}^l \int_0^{t \wedge \tau} \!\! \widetilde{C}^H_{[ \pi(s)(a^i-(\widehat{a^i})_s)]} (s, \pi(s))\, dM_{s,i}^{Z}
\label{eq:martingale-representation-2}
 - \int_0^{t \wedge \tau} \!\! \widetilde{C}^H(s, \pi(s))\, d M^Y_s \\
+& \int_0^{t \wedge \tau} \!\! \int_{\R^+} \widetilde{C}^H \Big ( s, \frac{ S_{\kappa y} \big (\pi(s-) \varphi(y,\cdot) \big )}{ (\widehat{\varphi (y)})_{s-}} \Big ) - \widetilde{C}^H\big ( s, \pi(s-) \big )\, (\mu^D - \gamma^{D, \bbF^\bbM})(dy, ds)\,.
\end{align}
In \eqref{eq:martingale-representation-2} the It\^o formula for SPDEs  is used to determine the integrands with respect to $dM_{s,i}^{Z}$;   the  integrands with respect to $M^Y$ and with respect to $(\mu^D - \gamma^{D, \bbF^\bbM})(dy, dt)$ can be determined by elementary arguments. All  integrands  (and hence the risk-minimizing hedging strategy) are functions of the current filter density $\pi(t)$; the practical computation of the directional derivative of $\widetilde{C}^H$ in the integral with respect to $M^Z$  (and of the other integrands) can be done with Monte Carlo.  Note that we have not established the regularity of $\widetilde{C}$ required in \citeasnoun{bib:krylov-13}; this very technical issue is  left for future research.

\begin{example} As a toy example we consider the problem of hedging an option with payoff $H = g(S_T)$ using the stock as hedging instrument. To simplify the notation we consider the case where $Z$ is a one-dimensional process. First, we get from  Proposition~\ref{prop:hedging} that
$$
\theta^H_t = \frac{ \nicefrac{d\langle \Pi^H, \widetilde{G}^{\text{stock}}\rangle } { db } \,(t)}{\nicefrac{d\langle \widetilde{G}^{\text{stock}}\rangle}{ db }\,(t)}\,.
$$
Between dividend dates, that is for $t \in [0,T] \setminus \mathcal{T}^D$ this gives
$$
\theta^H_t = \frac{\big (( {\widehat{h^\text{stock}} a})_{t} - \widetilde{S}_{t} \widehat{a}_{t} \big )
\widetilde{C}^H_{[ \pi(t)(a-\widehat{a}_{t})]} (t, \pi(t)) + \lambda_t \widetilde{S}_{t-} \widetilde{C}^H(t, \pi(t-)
}
{\big( ( {\widehat{h^\text{stock}} a})_{t}  - \widetilde{S}_{t-} \widehat{a}_{t-} \big )^2 + \lambda_t^2 \widetilde{S}_{t-}^2}\, ;
$$
at the dividend date $t_n$ we get
$$
\theta^H_t = \frac{ \displaystyle{\int_{\R^+}}
 \Big ( \widetilde{C} \Big ( t, \frac{ S_{\kappa y} \big (\pi(t-) \varphi(y,\cdot) \big )}{ (\widehat{\varphi (y)})_{t-}} \Big ) - \widetilde{C}\big ( t, \pi(t-)) \Big ) \Big ( y +  \frac{( \widehat{h^{\text{stock}} \varphi(y)})_{t-} }{ (\widehat{\varphi (y)})_{t-}}- S_{t-} \Big) \,(\widehat{\varphi}(y))_{t-} dy }
{\displaystyle{\int_{\R^+}} \Big ( y +  \frac{( \widehat{h^{\text{stock}} \varphi(y)})_{t-} }{ (\widehat{\varphi (y)})_{t-}}- S_{t-} \Big)^2  \,(\widehat{\varphi(y)})_{t-} dy }\,.
$$
\end{example}

\paragraph{Market completeness.} Finally we discuss market completeness in our setup.  For this we consider a variant of the model without dividend payments.  This assumption is essential; with dividends the market is generically incomplete.
Consider an option on traded assets with maturity $T$ and payoff $H$. Then the discounted price process $\widetilde{\Pi}^H$ has a martingale representation of the form
\begin{equation}\label{eq:martingale-repres-PiH}
\widetilde{\Pi}^H_t = \widetilde{\Pi}^H_0 + \sum_{j=1}^l \int_{0}^{t \wedge \tau } \xi_s^{M_j^Z,H} d M_{s,j}^{Z} +
  \int_{0}^{t \wedge \tau }\xi_s^{Y,H}d M_s^Y\,.
\end{equation}
As shown in Theorem~\ref{thm:asset-price-dynamics}, a  similar representation holds for the discounted gains processes of the traded assets; the integrands are denoted by $\xi_{i}^{M_j^Z}$ and  $\xi_{i}^{Y}$, $1 \le i \le \ell, 1 \le j \le l$.  A  perfect hedging strategy  $\theta^*$ satisfies for all $t \le T$ the relation $\widetilde{\Pi}^H_t = \widetilde{\Pi}^H_0 + \sum_{i=1}^\ell \int_0^{t} \theta^*_{s,i} d \widetilde{G}_{s,i}$;
the cash position is determined from the selffinancing condition. Now we get that
\begin{equation}\label{eq:perfect-hedging}
\sum_{i=1}^\ell \int_0^{t} \theta^*_{s,i} d \widetilde{G}_{s,i}
= \int_{0}^{t \wedge \tau } \sum_{j=1}^l \Big ( \sum_{i=1}^\ell \theta^*_{s,i} \xi_{s,i}^{M_j^Z}\big )  dM_{s,j}^Z +
 \int_{0}^{t \wedge \tau } \sum_{i=1}^\ell \theta^*_{s,i} \xi_{s,i}^{Y} d M_s^Y\,.
\end{equation}
Comparing \eqref{eq:martingale-repres-PiH} and \eqref{eq:perfect-hedging} we see that for $t \le \tau$ a perfect replication strategy  $\theta^*_t$ has to solve the following  $l+1$ dimensional system of linear equations
\begin{equation}\label{eq:eqn-for-theta}
\sum_{i=1}^\ell \theta^*_{t,i} \xi_{t,i}^{M_j^Z} = \xi_t^{M_j^Z,H} , \; 1 \le j \le j\, ,   \, \text{ and } \sum_{i=\ell}^l \theta^*_{t,i} \xi_{t,i}^{M^Y} = \xi_t^{M^Y,H} .
\end{equation}
Modulo integrability conditions, for every option on traded assets a perfect hedging therefore  exists  if and only  if  the system \eqref{eq:eqn-for-theta} has a solution for all $t \le \tau$ and  every right hand side $(\xi_t^{M_1^Z,H},\dots, \xi_t^{M_l^Z,H},\xi_t^{M^Y,H})^\top$. Loosely speaking, for  complete markets one thus needs to have for every $t< \tau $ at least $l+1$ locally independent  traded assets. With dividend payments we would get  an additional equation for every $y$ in the support of $\gamma^{D, \bbF^\bbM}(dy, \{t_n\})$ so that at $t_n$  \eqref{eq:eqn-for-theta}  becomes a system with infinitely many equations for finitely many unknowns and hence generically unsolvable.

\section{Proofs}\label{app:proofs}

{\small
\begin{proof}[Proof of Lemma~\ref{lemma:finite-stock-price}]
Define  the gains process $G_t = V_t + D_t$, the discounted gains process
\begin{equation} \label{eq:def-tildeG}
\widetilde G_t = e^{-r t} V_t + \int_0^t   e^{-r s} d D_s =: \widetilde V_t + \widetilde D_t
\end{equation}
and the \emph{cum-dividend asset value process} $\bar V_t $ with  $\bar V_t = V_0 + \int_0^t r \bar V_s ds + \int_0^t  \sigma \bar V _s d B_s$.  We first show  that $\widetilde G$ is a martingale. In fact, by \eqref{eq:dVt} we have
$$ d \widetilde G_t = \sigma \widetilde V_t d B_t -  e^{-r t} d D_t +  e^{-r t} d D_t = \sigma \widetilde V_t d B_t \,;
$$
so that $\widetilde G$ is a local martingale. Moreover, $E^Q(\widetilde V_t^2) \le E^Q\big((e^{-r t} \bar V_t)^2\big) = e^{\sigma^2 t} E^Q(V_0^2)$. Hence
 we get
$$
E^Q ( [ \widetilde G ]_T ) = E^Q \Big (\int_0^T \sigma^2 \widetilde{V_s}^2 ds \Big) \le
E^Q \Big (\int_0^T \sigma^2 \big (e^{-r s} \bar V_s \big)^2  ds \Big) < \infty\,,
$$
and $\widetilde G$ is a square integrable  true martingale.
Now it obviously holds that $\sum_{t_n \ge 0} e^{-r t_n}   d_n  =  \lim_{n \to \infty} \widetilde D_{t_n} $.
Moreover, as $\widetilde G_{t_n} = \widetilde V_{t_n} + \widetilde D_{t_n}  > \widetilde D_{t_n}$, one has
$E^Q (\widetilde D_{t_n} \mid V_0 ) \le E^Q ( \widetilde G_{t_n} \mid V_0 ) = V_0$. Since $\widetilde D_{t_n}$ is an increasing process we get from monotone integration that
\begin{equation}\label{eq:estimate-dividend-value}
E^Q \Big (\sum_{t_n \ge 0}   e^{-r t_{n}} d_n \mid V_0\Big)  =  E^Q \Big(\lim_{n \to \infty} \widetilde D_{t_n}  \mid V_0 \Big ) = \lim_{n \to \infty} E^Q\big (\widetilde D_{t_n} \mid V_0 \big) \le V_0\,.
\end{equation}
In order to show that one has equality in \eqref{eq:estimate-dividend-value} we have to show that $\lim_{n \to \infty} E^Q\big (\widetilde V_{t_n} \mid V_0\big) =0$.  We have the estimate
\begin{align}\label{eq:estimate-dividend-value2}
E^Q \Big (\sum_{t_n \ge 0}   e^{-r t_{n}} d_n \mid V_0\Big) \ge
E^Q \Big (\sum_{t_n \ge 0}   \delta_n \widetilde{V}_{t_n} - \delta^n e^{-r t_{n}} K  \mid V_0\Big) = E^Q(\delta_1) \Big ( \sum_{n \ge 0} E^Q\big (\widetilde V_{t_n} \mid V_0\big) - K \sum_{n \ge 0}e^{-r t_{n}} \Big).
\end{align}
Recall that the dividend dates are equidistant by Assumption~\ref{ass:assets-and-dividends}.2, so that $\sum_{n \ge 0}e^{-r t_{n}} < \infty$. Suppose now that there is some $c >0$ such that $E^Q\big (\widetilde V_{t_n} \mid V_0\big) >c $ for infinitely many $n$. Together with \eqref{eq:estimate-dividend-value2} this implies that $E^Q \Big (\sum_{t_n \ge 0}   e^{-r t_{n}} d_n \mid V_0\Big) = \infty$, contradicting the inequality  \eqref{eq:estimate-dividend-value}, and we conclude that $\lim_{n \to \infty} E^Q\big (\widetilde V_{t_n} \mid V_0\big) =0$.
\end{proof}

\begin{proof}[Proof of Identity \eqref{eq:different-filtration-1}]

This identity will follow from  the relation
\begin{equation} \label{eq:different-filtration-2}
 E^Q \big ( h(V_t) \ind{\tau >t} \mid \F_t^\bbM \big ) =
 E^Q \big ( h(V_t^\tau) \ind{\tau >t} \mid \F_t^{Z^\tau} \vee \F_t^Y \big ),
\end{equation}
where $\bbF^{Z^\tau}$ is the filtration generated by the stopped process $Z^\tau$.  To
this, note first that $\bbF^{Z^\tau} \vee \bbF^Y$ is a subfiltration of $\bbF^{\bbM} $ (as
$\tau$ is an $\bbF^\bbM$ stopping time), so that the right hand side of
\eqref{eq:different-filtration-2} is $\F_t^\bbM$-measurable. Moreover, for $\tau >t$ one has
$V_t^\tau = V_t$ and  $(Z_s^\tau)_{s=0}^t = (Z_s)_{s=0}^t $. Hence  we get for any
bounded measurable functional $g$ on $\mathcal{C}^0 \big([0,T])$ that
\begin{align} \nonumber
 E^Q \big ( h(V_t) \ind{\tau >t} g(  (Z_s)_{s=0}^t) \big ) & =
 E^Q \big ( h(V_t^\tau) \ind{\tau >t} g(  (Z_s^\tau)_{s=0}^t) \big ) \\
& =  E^Q \Big ( E^Q \big ( h(V_t^\tau) \ind{\tau >t} \mid \F_t^{Z^\tau} \vee \F_t^Y \big )
    g((Z_s^\tau)_{s=0}^t) \Big ) \label{eq:different-filtration-3}\,.
\end{align}
Due to the presence of the indicator $\ind{\tau >t}$ in \eqref{eq:different-filtration-3} we may  replace $ g((Z_s^\tau)_{s=0}^t)$
with  $g((Z_s)_{s=0}^t)$ in that equation, so that we obtain \eqref{eq:different-filtration-2} by the definition of conditional expectations. A similar argument shows that
$E^Q \big ( h(V_t) \ind{\tau >t} \mid  \F_t^{\tilde Z} \vee\F_t^Y\big ) =
 E^Q \big ( h(V_t) \ind{\tau >t} \mid \F_t^{Z^\tau} \vee \F_t^Y \big )
$, which  gives the equality
\begin{equation*} 
 E^Q \big ( h(V_t) \ind{\tau >t} \mid \F_t^\bbM \big ) =
 E^Q \big ( h(V_t^\tau) \ind{\tau >t} \mid \F_t^{\tilde Z } \vee \F_t^Y \big )\,,
\end{equation*}
as claimed.
\end{proof}

\begin{proof}[Proof of Proposition~\ref{prop:bounded-domain}]
The process $F_t^{\sigma_N} =  Q(\sigma_N \le t\mid \F_t)$, ${0\le t\le T}$, is an $\bbF$-submartingale. Hence we get, using the first submartingale inequality (see for instance \citeasnoun{bib:karatzas-shreve-88}, Theorem~1.3.8(i))
\begin{equation*}
Q\big( \sup_{s \le T} F_s^{\sigma_N} > \epsilon \big )\le \frac{1}{\epsilon} E\left ( (F_T^{\sigma_N})^+ \right ) = \frac{1}{\epsilon} Q (\sigma_N \le T),
\end{equation*}
which gives Statement 1.

Now we turn to Statement 2. For the purposes of the proof we make the dependence of the stopped asset value process on $N$ explicit and we write $X_t^N := V_t^{\tau \wedge \sigma^N}$. Clearly, on $\{\sigma_N > T\}$ it  holds  that $\ind{\tau^N >t} =  \ind{\tau>t}$, $t\le T$.  Using Proposition~\ref{prop:reduction-to-background} we thus get
\begin{align}\nonumber
& Q\Big( \sup_{t \le T} \Big | { \ind{\tau^N >t} E^Q\big(h(t, V_t^N) \mid \F_t^{\bbM,N} \big)
 - \ind{\tau > t} E^Q \big ( h(t, V_t) \mid \F_t^{\bbM} \big)   }\Big | >\delta  \Big)   \\
&\;  \le   \label{eq:to-be-estimated}
Q\bigg ( \sup_{t \le T} \Big |
\frac{ E^Q\big(h(t, X_t^N) \ind{X_t^N >K}\mid \F_t^{Z^N} \vee \F_t^{D^N}\big)}{Q\big (X_t^N >K \mid \F_t^{Z^N} \vee \F_t^{D^N} \big )} -
\frac{ E^Q\big(h(t, X_t) \ind{X_t >K}\mid \F_t^{Z} \vee \F_t^{D}\big)}{Q\big ( X_t >K \mid \F_t^{Z} \vee \F_t^{D} \big )}
 \Big | > \delta \bigg ) \\
 &\; +  Q \big( \sigma_N > T\big)  \nonumber
\end{align}
Now $ Q \big( \sigma_N > T\big)$ converges to zero for $N \to \infty$ so that we concentrate on \eqref{eq:to-be-estimated}. The difficulty in estimating this probability is the fact that we have to compare conditional expectations with respect to different filtrations. Similarly as in robust filtering, we address this problem using the reference probability approach.  To ease the notation we introduce the abbreviations
$ h_t^N = h(t,X_t^N)\ind{X_t^N >K}$  and $ h_t = h(t,X_t) \ind{X_t > K} \,.$
Moreover, we set
\begin{align*}
 L_t^{1,N}  &= \exp\Big(\int_0^t a(X_s^N )^\top d Z_s  - \frac{1}{2} \int_0^t \abs{a((X_s^N))}^2 \, ds \Big )\; \text{ and }
 L_t^{2,N} = \prod_{t_n \le T} \frac{\varphi(d_n,X_{t_n -}^N)}{\varphi^*(d_n)} \, ,
\end{align*}
and we let $L_t^N:= L_t^{1,N} L_t^{2,N}$. The density martingale
 $L_t  = L_t^1 L_t^2$ is defined in analogously, but with $V^\tau$ instead of $X^N$.
In view of Proposition~\ref{prop:reduction-to-background} and \eqref{eq:kallianpur-striebel-2}   we need to show that for $N \to \infty$,
\begin{equation}\label{eq:key-claim}
\sup_{t \le T} \Big |
\frac{E^{Q_1}\big (h_t^N L_t^N(\cdot ,\omega_2) \big ) }{E^{Q_1}\big (
\ind{X_t^N >K} L_t^N(\cdot ,\omega_2) \big ) }
-\frac{E^{Q_1}\big (h_t L_t(\cdot ,\omega_2) \big ) }{E^{Q_1}\big (
\ind{V_t^\tau >K} L_t(\cdot ,\omega_2) \big ) }
\Big | \overset{Q}{\longrightarrow} 0\,.
\end{equation}
The key tool for this is the following lemma.
\begin{lemma} \label{lemma:bounded-domain} Consider a generic function $f \colon [0,T) \times [K, \infty) \to \R$  with $\abs{f(t,v)} \le c_0 + c_1 v$ and let $f_t^N = f(t,X_t^N)$ and $f_t = f(t, V_t^\tau). $ Then it holds that
$$
\sup_{t \le T} \Big | E^{Q_1}\big (f_t^N L_t^N(\cdot ,\omega_2) \big ) - E^{Q_1}\big (f_t L_t(\cdot ,\omega_2) \big )  \Big | \overset{Q}{\longrightarrow} 0.
$$
\end{lemma}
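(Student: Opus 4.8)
The plan is to reduce the statement to an estimate on the event $\{\sigma_N\le T\}$ — whose probability tends to $0$ — and then to pass to the limit by a uniform-in-$N$ integrability argument. The starting point is a \emph{pathwise} identity: on $\{\sigma_N>t\}$ the stopped process $X^N=V^{\tau\wedge\sigma_N}$ agrees with $V^\tau$ on $[0,t]$, so there $f_t^N=f_t$ and $L_t^N=L_t$; consequently $f_t^N L_t^N-f_t L_t$ vanishes off $\{\sigma_N\le t\}\subseteq\{\sigma_N\le T\}$. Bounding inside the conditional expectation gives, uniformly in $t\le T$,
\begin{equation*}
\bigl|E^{Q_1}\!\bigl(f_t^N L_t^N(\cdot,\omega_2)\bigr)-E^{Q_1}\!\bigl(f_t L_t(\cdot,\omega_2)\bigr)\bigr|\le E^{Q_1}\!\bigl[R_N(\cdot,\omega_2)\bigr],\quad R_N:=\Bigl(\sup_{t\le T}|f_t^N|L_t^N+\sup_{t\le T}|f_t|L_t\Bigr)\ind{\sigma_N\le T}.
\end{equation*}
Since $Q\ll Q^*$ and $E^{Q^*}\bigl[E^{Q_1}[R_N]\bigr]=E^{Q^*}[R_N]$, it is enough to show $E^{Q^*}[R_N]\to 0$.

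Next I would collect the a priori bounds. Because $V$ has only downward jumps it is pathwise dominated by the cum-dividend geometric Brownian motion $\bar V$, so $\bar V^*_T:=\sup_{s\le T}\bar V_s$ lies in every $L^p(Q^*)$, $\{\sigma_N\le T\}\subseteq\{\bar V^*_T\ge N\}$, and hence $\ind{\sigma_N\le T}\to 0$ $Q^*$-a.s.; moreover $|f_t^N|\le c_0+c_1 X_t^N\le c_0+c_1\bar V^*_T$ and likewise for $|f_t|$. Since $a$ is bounded, $\sup_N E^{Q^*}\bigl[(L_T^{1,N})^p\bigr]\le\exp\!\bigl(\tfrac12(p^2-p)\norm{a}_\infty^2 T\bigr)$ for every $p\ge1$, so by Doob's inequality $\{\sup_{t\le T}L_t^{1,N}\}_N$ is bounded in every $L^p(Q^*)$. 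Finally, as there are only finitely many dividend dates in $[0,T]$ and $L^{2,N},L^2$ are constant between them, $\sup_{t\le T}L_t^{2,N}=\max_{t_n\le T}L_{t_n}^{2,N}$; by the pathwise identity $L_{t_n}^{2,N}\to L_{t_n}^2$ $Q^*$-a.s.\ while $E^{Q^*}[L_{t_n}^{2,N}]\equiv1$, so Scheff\'e's lemma renders $\{L_{t_n}^{2,N}\}_N$ — hence, maximising over the finitely many $t_n$, also $\{\sup_{t\le T}L_t^{2,N}\}_N$ — uniformly integrable.

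With these ingredients the proof concludes as follows. The summand of $R_N$ involving $L_t$ is a fixed $Q^*$-integrable random variable (use $\sup_t L_t\le(\sup_t L_t^1)\sum_{t_n\le T}L_{t_n}^2$ together with the bounds above) multiplied by $\ind{\sigma_N\le T}\to 0$, hence tends to $0$ in $L^1(Q^*)$ by dominated convergence. For the summand involving $L_t^N$ I would split according to whether $\sup_{t\le T}L_t^{2,N}$ exceeds a level $K$: on $\{\sup_t L_t^{2,N}\le K\}$ one bounds by $K(c_0+c_1\bar V^*_T)(\sup_t L_t^{1,N})\ind{\sigma_N\le T}$ and applies the Cauchy--Schwarz inequality together with $Q^*(\sigma_N\le T)\to 0$; on $\{\sup_t L_t^{2,N}>K\}$ one invokes the uniform integrability of $\{\sup_t L_t^{2,N}\}_N$, combined with the uniform $L^p$-bounds on $\bar V^*_T$ and on $\sup_t L_t^{1,N}$, to make the contribution small uniformly in $N$ before letting $K\to\infty$. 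This yields $E^{Q^*}[R_N]\to 0$.

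I expect this last splitting to be the main obstacle. The only elementary pointwise bound on the dividend likelihood, coming from \eqref{eq:bound-on-density},
\begin{equation*}
L_{t_n}^{2,N}\le\prod_{t_k\le t_n}\frac{\max_{\delta\in[0,1]}\varphi_\delta(\delta)}{d_k\,\varphi^*(d_k)},
\end{equation*}
is independent of $N$ but fails to lie in $L^{1+\varepsilon}(Q^*)$ for any $\varepsilon>0$ — the factors $1/d_k$ near $d_k=0$ cannot be absorbed by any admissible reference density $\varphi^*$ — so the product $\bar V^*_T\cdot\sup_t L_t^{1,N}\cdot\sup_t L_t^{2,N}\cdot\ind{\sigma_N\le T}$ cannot be controlled by a single application of H\"older's inequality. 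This is exactly why the argument must be routed through Scheff\'e's lemma and the finiteness of the set of dividend dates, which turn the relevant suprema into maxima of finitely many $L^1(Q^*)$-martingale values.
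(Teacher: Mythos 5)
Your reduction to an estimate on $\{\sigma_N\le T\}$, the pathwise identity $f^N L^N=f L$ off that event, the $L^p(Q^*)$-bounds on $\bar V_T^*$ and $\sup_t L_t^{1,N}$, and the observation that $\ind{\sigma_N\le T}\to 0$ almost surely are all in line with the paper's argument. The gap is in how you control the dividend likelihood ratio $L^{2,N}$. Scheff\'e indeed gives $L^1(Q^*)$-convergence of $\{\sup_t L_t^{2,N}\}_N$ and hence uniform integrability of that family, but uniform integrability of one factor together with uniform $L^p$-boundedness (for all $p<\infty$ but \emph{not} $p=\infty$) of the other factors does not let you make $E^{Q^*}\!\bigl[\bar V_T^*\,\sup_t L_t^{1,N}\,\sup_t L_t^{2,N}\,\ind{\sup_t L_t^{2,N}>K}\bigr]$ small uniformly in $N$. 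In the tail region you would need to peel off $\sup_t L_t^{2,N}$ by H\"older against some $L^{1+\varepsilon}$-norm, which — as you yourself observe — is precisely what is unavailable. The same difficulty is already present in the ``fixed'' summand: dominated convergence there requires $(c_0+c_1\bar V_T^*)(\sup_t L_t^1)\sum_{t_n\le T}L_{t_n}^2\in L^1(Q^*)$, which does not follow from the stated bounds alone, since an $L^1$-martingale value times an $L^p$-for-all-$p$ random variable need not be integrable. So the argument as written does not close.

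The paper resolves this with a truncation that you are circling around but not quite reaching. Note that the quantity $\prod_{t_n\le T}\sup_{x\ge K}\frac{\varphi(d_n,x)}{\varphi^*(d_n)}$ depends on $\omega_2$ alone, and by \eqref{eq:bound-on-density} together with the strict positivity of $\varphi^*$ on compacts and the finiteness of the number of dividend dates in $[0,T]$, it is $Q_2$-a.s.\ finite. Hence for any $\epsilon>0$ there is a deterministic $C$ and a $Q_2$-measurable event $B$ with $Q_2(B)>1-\epsilon/2$ on which this quantity is at most $C$; crucially the bound is uniform over $x$, hence over the path of $X^N$, hence over $N$, and does not involve $\omega_1$ at all. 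On $B$ one then has $\sup_{t\le T}L_t^{2,N}\le C$ and $\sup_{t\le T}L_t^2\le C$ simultaneously for every $N$, after which your Cauchy--Schwarz argument on $\{\sigma_N\le T\}$ goes through unimpeded (the paper controls the remaining factors via Doob's maximal inequality for $L^1$ and for $\bar V L^1$, and uses the $L^2$-boundedness of $\sup_t\ind{\sigma_N\le t}(c_0+c_1\bar V_t)$ together with Lebesgue's theorem). In short: you correctly diagnose that the sticking point is the dividend factor, but the fix is not Scheff\'e and uniform integrability — it is an $\omega_2$-measurable truncation of the likelihood ratio, chosen uniform in $x$ and therefore in $N$, after which everything else is elementary.
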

\begin{proof}[Proof of Lemma~\ref{lemma:bounded-domain}] Fix constants $\epsilon, \delta >0$.
We have to show that for $N$ sufficiently large,
\begin{equation}\label{eq:claim-of-lemma}
Q_2 \Big (\big \{\omega_2 \colon  \sup_{t \le T} \Big | E^{Q_1}\big (f_t^N L_t^N(\cdot ,\omega_2) \big ) - E^{Q_1}\big (f_t L_t(\cdot ,\omega_2) \big )  \Big | >\delta \big \}\Big) < \epsilon
\end{equation}
We first show that we may assume without loss of generality that $L_t^{2,N}$ and $L_t^2$ are bounded. To this we choose some constant $C$ such that $Q_2(B) > 1- \frac{\epsilon}{2}$ where
$$
B : =\Big \{ \omega_2 \colon \sup_{x \in [K, \infty)} \prod_{t_n \le T} \frac{\varphi(d_n (\omega_2), x)}{\varphi^*(d_n(\omega_2))} \le C\Big\}
$$
This is possible since $\varphi^* $ is bounded away from zero on every compact subset of $(0, \infty)$ and since for fixed $d_n$ the mapping $x \mapsto \varphi(d_n,x)$ is bounded by \eqref{eq:bound-on-density}.  By definition it holds on $B$ that $L_t^{2,N} = L_t^{2,N}\wedge C$ and $L_t^2 = L_t^{2}\wedge C$. Moreover,  \eqref{eq:claim-of-lemma} is no larger than
\begin{align*}
& Q_2 \bigg ( \Big \{ \sup_{t \le T} \Big | E^{Q_1}\big (f_t^N L_t^{N,1 }(L_t^{N,2} \wedge C) (\cdot ,\omega_2) \big ) - E^{Q_1}\big (f_t L_t^1 (L_t^2 \wedge C) (\cdot ,\omega_2) \big )  \Big | >\delta \Big \} \cap B \bigg) + Q(B^c)\\
& \quad \le Q_2 \Big (  \sup_{t \le T} \Big | E^{Q_1}\big (f_t^N L_t^{N,1 }(L_t^{N,2} \wedge C) (\cdot ,\omega_2) \big ) - E^{Q_1}\big (f_t L_t^1 (L_t^2 \wedge C) (\cdot ,\omega_2) \big )  \Big | >\delta  \Big) + \frac{\epsilon}{2}
\end{align*}
Hence we assume from now on that $L_t^{2,N}$ and $L_t^2$ are bounded by $C$.

We continue with some useful estimates on $L_t$. Doob's maximal inequality gives
\begin{align} \nonumber
E^{Q^*} \bigg (\sup_{0 \le t \le T} (L_t)^2 \bigg ) & \le C^2 E^{Q^*}\Big ( \sup_{0 \le t \le T} (L_t^1)^2 \Big) \le 4 C^2 E^{Q^*} \Big(  (L_T^1)^2 \Big ) = 4 C^2  E^{Q^*} \bigg(  \exp\Big (\int_0^T \norm{a(V_s^\tau)}^2 ds \Big ) \bigg )\\
& \le  4 C^2  \exp \Big ( T \sup_{x \ge K} \norm{a(x)}^2 \Big ) \label{eq:ucp-proof-3}
\end{align}
Similarly we get for  $\bar V_t = \tilde V_0 \exp\big( (r- \frac{1}{2}\sigma^2) t + \sigma B_t\big )$ (the cum-dividend asset value) that
\begin{align} \nonumber
E^{Q^*} \bigg (\sup_{0 \le t \le T} (\bar V_t L_t)^2 \bigg) &
\le C^2 e^{2 r T} E^Q(\tilde V_0^2) E^{Q^*}\bigg ( \sup_{0 \le t \le T} \mathcal{E}\big ( a(V_s^\tau)^\top d Z_s + \sigma dB_s \big)_t \bigg)\\
&\le  4 C^2  e^{2 r T} E^{Q^*}(\tilde V_0^2) \exp \Big ( T ( \sigma^2 + \sup_{x \ge K} \norm{a(x)}^2) \Big ) \label{eq:ucp-proof-4}
\end{align}
Of course, similar estimates hold for $E^{Q^*} \left (\sup_{0 \le t \le T} (L_t^N)^2 \right)$ and for $E^{Q^*} \left (\sup_{0 \le t \le T} (\bar V_t L_t^N)^2 \right)$.
Since on the set $\{ t < \sigma_N \}$, $f_t^N = f_t$ and $L_t^N = L_t$,
it holds that
\begin{align*}
 &\Big \vert E^{Q_1}\big( f_t^N L_t^N(\cdot, \omega_2)\big)  - E^{Q_1} \big ( f_t L_t(\cdot, \omega_2) \big)   \Big \vert
 \le E^{Q_1}\Big(\ind{\sigma_N \le t}  \abs{f_t^N} L_t^N \big(\cdot, \omega_2)\Big )+
E^{Q_1}\Big(\ind{\sigma_N \le t}\abs {f_t} L_t(\cdot, \omega_2) \Big) .
\end{align*}
Hence we get
\begin{align}\nonumber
&\sup_{0 \le t \le T} \Big \vert E^{Q_1}\big( f_t^N L_t^N(\cdot, \omega_2)\big)  - E^{Q_1} \big ( f_t L_t(\cdot, \omega_2) \big)   \Big \vert \\
&\quad \le \sup_{0 \le t \le T} E^{Q_1}\big(\ind{\sigma_N \le t}  \abs{f_t^N} L_t^N \big(\cdot, \omega_2)\big )+
\sup_{0 \le t \le T}  E^{Q_1}\big(\ind{\sigma_N \le t}\abs {f_t} L_t(\cdot, \omega_2) \big) . \label{eq:ucp-proof-5}
\end{align}
Now note that by assumption $\abs{f_t^N} \le c_0 + c_1 X_t^N \le c_0 + c_1 \bar V_t $. Hence \eqref{eq:ucp-proof-5} can be estimated by
\begin{equation}\label{eq:ucp-proof-6}
\sup_{0 \le t \le T} E^{Q_1}\Big(\ind{\sigma_N \le t}  (c_0 + c_1 \bar V_t) L_t^N (\cdot, \omega_2)\Big ) + \sup_{0 \le t \le T} E^{Q_1}\Big(\ind{\sigma_N \le t}  (c_0 + c_1 \bar V_t) L_t (\cdot, \omega_2)\Big )
\end{equation}
In order to complete the proof of the lemma we finally show that the expression
$$ E^{Q_2} \Big (\sup_{0 \le t \le T} E^{Q_1}\big(\ind{\sigma_N \le t}  (c_0 + c_1 \bar V_t) L_t^N \big(\cdot, \omega_2)\big ) \Big ) =
 E^{Q^*} \Big (\sup_{0 \le t \le T} \ind{\sigma_N \le t}  (c_0 + c_1 \bar V_t) L_t^N  \Big )
$$
converges to zero for $N \to \infty$, that is $E^{Q_1}\big(\ind{\sigma_N \le t}  (c_0 + c_1 \bar V_t) L_t^N (\cdot, \omega_2)\big )$ converges to zero in $L^1(\Omega_2, \F_2, Q_2)$ and hence also in probability. Now note that our previous estimates \eqref{eq:ucp-proof-3} and \eqref{eq:ucp-proof-4} imply that the random variables $Y^N : =  \sup_{0 \le t \le T} \ind{\sigma_N \le t} (c_0 + c_1 \bar V_t)$ are uniformly bounded in $L^2(\Omega, \F, Q^*)$ and hence uniformly integrable, so that the claim follows from the Lebesgue theorem. The same argument obviously applies to the second term in
\eqref{eq:ucp-proof-6} which proves the Lemma.
\end{proof}
Finally we return to the proof of \eqref{eq:key-claim} and hence of Proposition~\ref{prop:bounded-domain}. Fix constants $\epsilon, \delta >0$ and choose $M>0$ in such a way that for $A_1,A_2$ with
$$ A_1 := \big \{\omega_2 \colon \inf_{t \le T} E^{Q^1} \big( L_t(\cdot, \omega_2)\big) > {2}/{M} \big \}\;  \text { and } A_2 := \big \{\omega_2 \colon \sup_{t \le T} E^{Q_1} \big ( h_t L_t(\cdot, \omega_2)\big ) < M- \delta \big \}
$$
it holds that $Q_2(A_1^c) < \frac{\varepsilon}{4}$ and $Q_2(A_2^c) < \frac{\varepsilon}{4}$. Choose finally $N_0$ large enough so that for $N > N_0$  it holds  $Q_2(A_3) > 1- \frac{\varepsilon}{4}$ and $Q_2(A_4) > 1- \frac{\varepsilon}{4}$; here
\begin{align*}
A_3&:= \big\{\omega_2 \colon  \sup_{t \le T} \Big | E^{Q_1}\big (h_t^N L_t^N(\cdot ,\omega_2) \big ) - E^{Q_1}\big (h_t L_t(\cdot ,\omega_2) \big )  \big | < \frac{\delta}{2M} \big\}  \\
A_4&:= \big \{\omega_2 \colon  \sup_{t \le T} \Big | E^{Q_1}\big (L_t^N \ind{X_t^N >K}(\cdot ,\omega_2) \big ) - E^{Q_1}\big ( L_t\ind{V_t^\tau >K}(\cdot ,\omega_2) \big )  \big | < \frac{\delta}{2M} \big \}\,;
\end{align*}
this is possible by Lemma~\ref{lemma:bounded-domain}. Let $A = A_1 \cap A_2 \cap A_3 \cap A_4$ and note that $Q(A)  >1 -\epsilon$. By definition of $A$ we have for $N > N_0$ and $\omega_2 \in A$ the estimates
$ \sup_{t < T} E^{Q_1}\big( h_t^N L_t^N (\cdot, \omega_2) \big)< {M}$ {and } $\inf_{t < T} E^{Q_1}\big(  L_t^N (\cdot, \omega_2) \big)>\frac{1}{M}$. Now  the mean value theorem from standard calculus gives for $x,\tilde x$ and $y,\tilde y \in \R$ with $\abs{x}, \abs {\tilde x}<M$ and $\abs{y}, \abs{\tilde y } > \frac{1}{M} $ the estimate
$\big \vert \frac{\tilde x}{\tilde y} - \frac{x}{y} \big \vert
\le M\left ( \abs{\tilde x -x} +\abs{\tilde y -y} \right).$
Applying this estimate we get for $\omega_2 \in A$ that
\begin{align*}
 & \sup_{t \le T} \Big | \frac{E^{Q_1}\big (h_t^N L_t^N(\cdot ,\omega_2) \big ) }{
E^{Q_1}\big (\ind{X_t^N >K} L_t^N(\cdot ,\omega_2) \big ) }
-\frac{E^{Q_1}\big (h_t L_t(\cdot ,\omega_2) \big ) }{
E^{Q_1}\big (\ind{V_t^\tau >K} L_t(\cdot ,\omega_2) \big ) } \Big |
  \le M \sup_{t \le T} \Big \{ \\
  & \qquad  \abs{E^{Q_1}\big (h_t^N L_t^N(\cdot ,\omega_2) \big ) - E^{Q_1}\big (h_t L_t(\cdot ,\omega_2) \big )}
    + \abs{E^{Q_1}\big (L_t^N (\cdot ,\omega_2) \ind{X_t^N >K} \big ) - E^{Q_1}\big (L_t\ind{V_t^\tau >K}(\cdot ,\omega_2) \big )}\Big\}
\\ & \quad
\le M \Big(\frac{\delta}{2M} + \frac{\delta}{2M} \Big ) = \delta
\end{align*}
\end{proof} 

\begin{proof}[Proof of Lemma~\ref{lemma:density-for-dividends}]

In order to show that
$E^{Q^*}(L_T) =1$ we use induction over the dividend dates $t_n$. For $t<t_1$, $L_t=L_t^1$ and $ E^{Q^*}\big( L_t^1 \big) =1$ by the Novikov criterion (recall that $a$ is bounded by assumption). Suppose now that the claim holds for $t<t_n$. We get that
$$  E^{Q^*} \big (L_{t_n}\big)  =  E^{Q^*} \Big (L_{t_n -}\, \frac{\varphi(d_n,X_{t_n-})} {\varphi^*(d_n)}  \Big) = E^{Q^*} \Big (L_{t_n -}\, E^{Q^*} \Big ( \frac{\varphi(d_n,X_{t_n-})} {\varphi^*(d_n)}  \mid \F_{t_n -} \Big)\Big ).
$$
Moreover,
$$ E^{Q^*} \Big ( \frac{\varphi(d_n,X_{t_n-})} {\varphi^*(d_n)}  \mid \F_{t_n -} \Big) = \int_{\R^+} \frac{\varphi(y,X_{t_n-})} {\varphi^*(y) } \varphi^*(y) dy = \int_{\R^+} {\varphi(y,X_{t_n-})} \, dy =1\,,
$$
so that $E^{Q^*}(L_{t_n}=1)$ as well.

In order to show that $\mu^D(dy,dt)$ has $Q$-compensator $\gamma^D(dy,dt)$ we use the general Girsanov theorem (see for instance \citeasnoun{bib:protter-05}, Theorem~3.40): a process $M$ such that $\langle M,L \rangle $ exists for $Q^*$  is a $Q^*$-local martingale if and only if
$\widetilde{M}_t= M_t - \int_0^t \frac{1}{L_s-} d \langle L,M\rangle_s $ is a $Q$-local martingale.

Consider now some bounded predictable function $\beta\colon [0.T]\times\R^+ \to \R$ and define  the $Q^*$-local martingale  $M_t = \int_0^t \int_{\R^+} \beta(s,y) \, (\mu^D-\gamma^{D,*})(dy,ds)$. As $M$ is of finite variation, we get that
$$[M,L]_t =  \sum_{s \le t} \Delta M_s \Delta L_s = \int_0^t \int_{\R^+} L_{s-} \big(  \frac{\varphi(y,X_{s-})}{\varphi^*(y)} - 1 \big ) \beta(s,y) \mu^D(dy,ds).$$
Hence we get that
\begin{align} \label{eq:QUADVAR}
\langle M,L\rangle_t &= \int_0^t \int_{\R^+} L_{s-} \big(  \frac{\varphi(y,X_{s-})}{\varphi^*(y)} - 1 \big ) \beta(s,y) \gamma^{D,*}(dy,ds)\,,
\end{align}
provided that
\begin{equation} \label{eq:ESTIMATE-FOR-QUADVAR}
 E^{Q^*} \Big ( \int_0^t \int_{\R^+} L_{s-} \big | \frac{\varphi(y,X_{s-})}{\varphi^*(y)} - 1 \big | |\beta(s,y)| \gamma^{D,*}(ds,dy) \Big )  < \infty \, .
\end{equation}
Recall that $\gamma^{D,*}(dy,dt) = \sum_{n=1}^\infty \varphi^*(y) dy\, \delta_{\{t_n\}} (dt)$. Hence
$$\int_0^t \int_{\R^+} L_{s-} \big | \frac{\varphi(y,X_{s -})}{\varphi^*(y)} - 1 \big | \gamma^{D,*}(ds,dy) =
\sum_{t_n \le t} L_{t_n -} \int_{\R^+} \big | \varphi(y,X_{t_n -}) - \varphi^*(y) \big | dy \le 2 \sum_{t_n \le t} L_{t_n -}.$$
Since $|\beta|$ is bounded by some constant $C $ and since $E^{Q^*}(L_t)  = 1 $ for all $t$, the lhs of \eqref{eq:ESTIMATE-FOR-QUADVAR} is bounded by $2 C \sup \{n \in \N \colon t_n \le t \}$. Moreover,  we get from \eqref{eq:QUADVAR} that
$$
\langle M,L\rangle_t = \sum_{t_n \le t} L_{t_n -} \int_{\R^+}  \beta(t_n,y)  \big ( \varphi(y,X_{t_n-}) - \varphi^*(y) \big ) dy
$$
This gives
\begin{align*}
\widetilde{M}_t :=  M_t - \int_0^t \frac{1}{L_s-} d \langle L,M\rangle_s = \int_0^t \int_{\R^+} \beta(s,y) (\mu^D-\gamma^D)(dy,ds)\,.
\end{align*}
Now $\widetilde{M}$ is a local martingale by the general Girsanov theorem, which shows that
 $\gamma^D(dy,dt)$ is in fact the $Q$-compensator of $\mu^D$. The other claims are clear.
\end{proof}

\begin{proof}[Proof of Theorem~\ref{prop:dividend-information}] We proceed via induction over the dividend dates. For $t \in [0, t_1)$ there is no dividend information  and the claim follows from Theorem~\ref{thm:SPDE-for-u}. Suppose now
that the claim of the theorem holds for $t \in [0, t_n)$. First we show that
$$
u(t_n) = S_{\kappa d_n} ( \tilde u (t_n)) = u(t_{n-}, x + \kappa d_n)  \frac{\varphi(d_n,x + \kappa d_n)}{\varphi^*(d_n)}
$$
belongs to  $H^1_0 (S^X) \cap H^2(S^X)$. Clearly, under Assumption~\ref{ass:assets-and-dividends}2, for a  given $d_n >0$, the mapping $x \mapsto   {\varphi(d_n,x)}\big/{\varphi^*(d_n)}$ is smooth, nonnegative  and bounded. Hence with $u(t_{n-})$  also $\tilde u (t_n))$ belongs to $H^1_0 (S^X) \cap H^2(S^X)$. If $\kappa =1$
it remains to show that $S_{d_n} ( \tilde u (t_n))$ is an element of $H^1_0 (S^X) \cap H^2(S^X)$.  Smoothness is clear, the only thing that needs to be verified is the boundary condition $ \tilde u(t_n, K+ d_n) = 0\,. $
To this note that $\varphi(d_n, z)= 0$ for $z \le d_n + K$ (see equation~\eqref{eq:dividend-density}). This implies that $\tilde u (t_n,  K +  d_n) =0$ as required. Existence and uniqueness of a solution of \eqref{eq:du-t-with-jumps} on $[t_n, t_{n+1}) $ follows then immediately from Theorem~\ref{thm:SPDE-for-u}.

Next we turn to the second claim. Using the induction hypothesis, the definition of $\tilde u$, the fact that $\varphi(d_n,K) = \varphi^*(d_n)$ and the dynamics of $\nu_K(t) $ and $\nu_N(t)$ we get that
\begin{align*}
 \Sigma_{t_n} f  &= E^{Q^*} \Big( L_{t_n-} f(X_{t_n}) \frac{\varphi(d_n,X_{t_n -})}{\varphi^*(d_n)}\Big) \\
   &= \big ( u(t_{n }- ), f(\cdot - \kappa d_n) \frac{\varphi(d_n,\cdot)}{\varphi^*(d_n)}\big )_{S^X} + \nu_K (t_{n}-) f(K) \frac{\varphi(d_n,K)}{\varphi^*(d_n)}+ \nu_N(t_{n}-) f(N) \frac{\varphi(d_n,N)}{\varphi^*(d_n)}\\
   &= \big ( \tilde u(t_{n }- ), f(\cdot - \kappa d_n)\big )_{S^X} + \nu_K(t_n) f(K) + \nu_N(t_n) f(N).
\end{align*}
Now we  have
\begin{align*}\big ( \tilde u(t_{n } ), f(\cdot - \kappa d_n)\big )_{S^X} &= \int_K^N f(x - \kappa d_n) \tilde u(t_{n }, x) dx = \int_{K- \kappa d_n}^{N-\kappa d_n} f(y)  \tilde u(t_{n }, y + \kappa d_n)\, dy\\
&= \int_{K}^{N} f(y)  \tilde u(t_{n }, y + \kappa d_n)\, dy,
\end{align*}
as the integrand is zero  on    $[K- \kappa d_n, K] \cup [N - \kappa d_n, N]$. Hence we get that $$ \big ( \tilde u(t_{n } ), f(\cdot - \kappa d_n)\big )_{S^X} = \big (  u(t_{n } ), f \big )_{S^X}$$ and thus the relation
$\Sigma_{t_n} f = \big ( u(t_n), f\big )_{S^X} + \nu_K (t_n) f(K) + \nu_N(t_n) f(N)$ as claimed.
\end{proof}

\begin{proof}[Proof of Lemma~\ref{lem:d-pit-f}] We start with the case without dividends and we assume that   $f$ is time-independent. Recall from Corollary~\ref{corr:cond-density} that
 $\pi_t f = \frac{1}{C(t)} \left ( (u(t),f  )_{S^X} + \nu_N(t) f(N) \right )$ with $C(t)
 = (u(t),1  )_{S^X} + \nu_N(t) $. Using \eqref{eq:d(u,v)-t-2} and \eqref{eq:dynamics-nuN} we get
 \begin{equation} \label{eq:dCt-1}
d C(t) = \Big(   \big ( \mathcal{L}^* u(t), 1 \big )_{S^X} - \frac{1}{2} \sigma^2 N^2
\frac{du}{dx}(t,N) \Big)\, dt + \left(\big( u(t), a^\top\big)_{S^X} + \nu_N(t) a^\top(N)
\right) dZ_t.
 \end{equation}
Partial integration shows that the drift term in \eqref{eq:dCt-1} equals $ - \frac{1}{2}
\sigma^2 K^2 \frac{du}{dx}(t,K)$. Hence,
\begin{align*}
d \frac{1}{C(t)} &= \frac{1}{2 C(t)^2}\sigma^2 K^2 \frac{du}{dx}(t,K)\, dt -
  \frac{1}{C(t)^2} \left(\big( u(t), a^\top\big)_{S^X} + \nu_N(t) a^\top(N) \right) \,dZ_t\\
  & + \frac{1}{C(t)^3} \sum_{j=1}^{l} \big (\big( u(t), a_j\big)_{S^X} + \nu_N(t) a_j \big)^2
  \, dt\,.
\end{align*}
Similarly, we obtain that
\begin{equation}
\begin{split}
d \Big( \big( u(t),f  \big)_{S^X} + \nu_N (t) f(N)\Big) &= \Big ( \big ( u(t), \mathcal{L}_X f\big )_{S^X} - \frac{1}{2} \sigma^2 K^2 \frac{du}{dx}(t,K) f(K)  \Big) \, dt \\ & + \Big
(\big( u(t), a^\top f \big)_{S^X} + \nu_N(t) a^\top(N) f(N) \Big)\, dZ_t.
\end{split}
\end{equation}
Hence we get, using the It\^o product formula and the fact that $\pi(t,v) = u(t,v)/C(t)$
\begin{align*}
d \pi_t f =& \frac{1}{C(t)} d\left ( \big(u(t),f \big )_{S^X} + \nu_N(t) f(N)\right ) + \left ( \big(u(t),f \big )_{S^X} + \nu_N(t) f(N)\right ) d\frac{1}{C(t)} \\ & \hspace{3.15cm}+ d \, \big [
\frac{1}{C},\big(u,f\big )_{S^X} + \nu_N f(N) \big ]_t \\
= & \Big( \big ( \pi(t), \mathcal{L}_X f\big )_{S^X} - \frac{1}{2} \sigma^2 K^2
 \frac{d\pi}{dx}(t,K) f(K)  \Big) \, dt + \pi_t(a^\top f) \, dZ_t \\
 +& \Big ( \frac{1}{2} \sigma^2 K^2 \frac{d\pi}{dx}(t,K) \,\pi_t f + \sum_{j=1}^{l} (\pi_t
a_j)^2 \pi_t f \Big ) \, dt\ - (\pi_t f ) (\pi_t a^\top )\, dZ_t - \Big (\sum_{j=1}^{l}
(\pi_t a_j) (\pi_t a_j f)\Big ) \, dt\,.
\end{align*}
Rearranging terms and using that $\pi_t (\mathcal{L}_X f) = (\pi(t), \mathcal{L}_X f )_{S^X}$ gives the claim of the lemma for the case of time-independent $f$ and no dividends. For time-dependent $f$ we have $d \pi_t f
 =  \pi_t (\frac{df}{dt}(t,\cdot))  \, dt + d \pi_t f(t,\cdot)$ so that we
obtain the additional  term  $\pi_t (\frac{df}{dt}(t,\cdot))$ in the drift of the
dynamics of $\pi_t f$.

Finally, we consider  the case with dividend
payments. Between dividend dates the dynamics of $\widehat{f}_t$ can be derived by
similar arguments as before. To derive the jump of $\pi_t f$ at $t_n$ recall that
$\pi_{t_n} f = \Sigma_{t_n} (f 1_{(K, \infty)}) \big / \Sigma_{t_n} (1_{(K, \infty)})$.
Moreover we have, using the form of $L_{t_n}$,

$$ \Sigma_{t_n} (f 1_{(K, \infty)}) = \frac{1}{\varphi^*(d_n)} \Sigma_{t_n -} \Big( f(\cdot - \kappa d_n) 1_{(K, \infty)} (\cdot - \kappa d_n) \varphi(d_n, \cdot) \Big)
$$
and similarly for  $\Sigma_{t_n}  (1_{(K, \infty)})$. Now, by Assumption~\ref{ass:assets-and-dividends}2 we know that $d_n < (X_{t_n}- K)^+$ $Q$ a.s, so that $1_{(K, \infty)} (X_{t_n-} - \kappa d_n) = 1_{(K, \infty)} (X_{t_n-})$.  Hence we get
\begin{align*}
\pi_{t_n} f   & = \frac{   \Sigma_{t_n} (f 1_{(K, \infty)}) \big/ \Sigma_{t_n - } (1_{(K, \infty)})}{ \Sigma_{t_n} (1_{(K, \infty)}) \big/ \Sigma_{t_n - } (1_{(K, \infty)})}
  =
\frac{   \Sigma_{t_n - } \big ( f(\cdot - \kappa d_n) 1_{(K, \infty)}  \varphi(d_n, \cdot) \big) \big/ \Sigma_{t_n - } ( 1_{(K, \infty)})}{
\Sigma_{t_n} \big (  1_{(K, \infty)} \varphi(d_n, \cdot) \big) \big/ \Sigma_{t_n - } (1_{(K, \infty)})}
\\ & =
\frac{\pi_{t_n -} \big( f(\cdot - \kappa d_n) \varphi(d_n, \cdot) \big)}{
\pi_{t_n -} \big(  \varphi(d_n, \cdot) \big)},
\end{align*}
which gives the form of the integral with respect to $\mu^D(dy,ds)$ in \eqref{eq:filter-equation}.
\end{proof} 

}


\end{document}